\definecolor{linkblue}{HTML}{001487}
\definecolor{ibmblue}{HTML}{FF00FF}
\pgfplotsset{compat=1.10}
\newtheorem{theorem}{Theorem}[section]
\newtheorem*{theorem*}{Theorem}
\newtheorem{lemma}[theorem]{Lemma}
\newtheorem{corollary}[theorem]{Corollary}
\theoremstyle{remark}
\theoremstyle{definition}
\newtheorem{definition}[theorem]{Definition}
\numberwithin{equation}{section}
\newcommand{\ket}[1]{|#1\rangle}
\newcommand{\bra}[1]{\langle#1|}
\newcommand{\proj}[1]{\ket{#1}\!\bra{#1}}
\newcommand{\tr}[1]{\mbox{\rm tr}\!\left[ #1 \right]}
\newcommand*{\TPCP}{\mathrm{CPTP}}
\newcommand*{\TNCP}{\mathrm{CPTN}}
\newcommand*{\LO}{\mathrm{LO}}
\newcommand*{\LOCC}{\mathrm{LOCC}}
\newcommand*{\LOCCOW}{\mathrm{LO}\overrightarrow{\mathrm{CC}}}
\newcommand*{\CNOT}{\mathrm{CNOT}}
\newcommand*{\identity}{\mathrm{I}}
\newcommand*{\SWAP}{\mathrm{SWAP}}
\newcommand*{\iSWAP}{\mathrm{iSWAP}}
\newcommand*{\SEP}{\mathrm{SEP}}
\newcommand*{\ci}{\mathrm{i}}
\newcommand*{\id}{\mathds{1}}
\newcommand*{\ee}{\mathrm{e}}
\newcommand*{\cA}{\mathcal{A}}
\newcommand*{\cO}{O}
\newcommand*{\cB}{\mathcal{B}}
\newcommand*{\N}{\mathbb{N}}
\newcommand*{\R}{\mathbb{R}}
\title{Circuit knitting with classical communication}
 \author{\normalsize Christophe Piveteau$^{1}$\thanks{cpivetea@phys.ethz.ch} \, and David Sutter$^{2}$}
  \affil{\small $^{1}$Institute for Theoretical Physics, ETH Zurich, Switzerland\\
  \small $^{2}$IBM Quantum, IBM Research Europe -- Zurich, Switzerland
  }
 \date{}
\begin{document}

\maketitle
\begin{abstract}
The scarcity of qubits is a major obstacle to the practical usage of quantum computers in the near future. To circumvent this problem, various circuit knitting techniques have been developed to partition large quantum circuits into subcircuits that fit on smaller devices, at the cost of a simulation overhead. In this work, we study a particular method of circuit knitting based on quasiprobability simulation of nonlocal gates with operations that act locally on the subcircuits. We investigate whether classical communication between these local quantum computers can help. We provide a positive answer by showing that for circuits containing $n$ nonlocal $\CNOT$ gates connecting two circuit parts, the simulation overhead can be reduced from $\cO(9^n)$ to $\cO(4^n)$ if one allows for classical information exchange. Similar improvements can be obtained for general Clifford gates and, at least in a restricted form, for other gates such as controlled rotation gates.
\end{abstract}

\section{Introduction} \label{sec_intro}
One of the major challenges of near-term quantum computation is the limited number of available qubits.
The problem will remain pronounced in the early days of quantum error correction, since a considerable amount of physical qubits will likely be required to realize a single logical qubit.
This motivated a significant amount of research on techniques, sometimes called \emph{circuit knitting}, that allow us to simulate a quantum computer with more qubits than physically available~\cite{BSS16,PHOW20,Mitarai_2021,forging22}.
Circuit knitting could be of central importance for the first practical demonstrations of a quantum advantage for a useful task.
One possibility to realize circuit knitting is using the technique of \emph{quasiprobability simulation}, which has previously gained much interest in the fields of quantum error mitigation~\cite{TBG17,endo18,kandala19,PSBGT21,PSW22} and classical simulation algorithms~\cite{PWB15, HC17, SC19, HG19, SBHYC21}.

For an arbitrary quantum circuit we can group all the qubits into two separate regions $\bar A$ and $\bar B$, such that ideally there are only few gates acting on both regions at the same time, as seen on the left-hand side of~\cref{fig_cutting}.\footnote{Here we consider cutting the circuit into two parts for simplicity, the technique can be straightforwardly generalized to involve more cuts.}
The method of quasiprobability simulation allows us to obtain the expected value of the measurement outcomes of this circuit by only sampling outcomes from circuits where the nonlocal gates are probabilistically replaced by local operations, as seen on the right-hand side of~\cref{fig_cutting}.
That means, instead of having to use one large quantum computer to simulate the full circuit, the outcome of the circuit can be estimated with two smaller quantum computers acting only on $\bar A$ and $\bar B$.
The cost of this technique is a sampling overhead that scales exponentially in the number of nonlocal gates involved in the circuit.\footnote{The quasiprobability simulation constructs an unbiased estimator for the measurement outcome of the nonlocal circuit. This means it preserves the correct expectation value but the variance is increased which gives rise to the sampling overhead~\cite{TBG17,endo18,Piv_masterThesis}.}
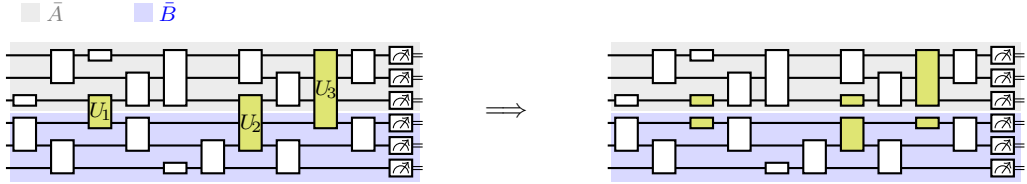
\begin{figure}[htb!]
    \centering
    \begin{tikzpicture}[thick, scale=1]
    \def\xs{0.07}
    \def \b{0.3}
    \def \x{8}
    \def \s{0.02}

     \draw [fill=gray!15,draw=none] (-0.05,0.75+2.5*\xs) rectangle (5.4,0+0.25*\xs);   
     \draw [fill=blue!15,draw=none] (-0.05,-0.25*\xs) rectangle (5.4,-0.75-2.5*\xs);       
    
    \draw (-0.1,0.45) -- (5,0.45);   
    \draw (-0.1,0.75) -- (5,0.75);  
    \draw (-0.1,0.15) -- (5,0.15);
    \draw (-0.1,-0.15) -- (5,-0.15);
    \draw (-0.1,-0.45) -- (5,-0.45);   
    \draw (-0.1,-0.75) -- (5,-0.75);   
    \draw [fill=GreenYellow,draw=black] (1,0.15+\xs) rectangle (1+\b,-0.15-\xs); 
    \draw [fill=GreenYellow,draw=black] (3,0.15+\xs) rectangle (3+\b,-0.45-\xs); 
    \draw [fill=GreenYellow,draw=black] (4,0.75+\xs) rectangle (4+\b,-0.15-\xs); 
    \node at (1+\b/2+0.01,0) {\footnotesize{$U_{\!1}$}};
    \node at (3+\b/2+0.01,-0.15) {\footnotesize{$U_{\!2}$}};
    \node at (4+\b/2+0.01,0.3) {\footnotesize{$U_{\!3}$}};    
    
    \draw [fill=white,draw=black] (0.5,0.75+\xs) rectangle (0.5+\b,0.45-\xs);
    \draw [fill=white,draw=black] (0.0,0.15+\xs) rectangle (0.0+\b,0.15-\xs);   
    \draw [fill=white,draw=black] (0.0,-0.15+\xs) rectangle (0.0+\b,-0.45-\xs);
    \draw [fill=white,draw=black] (0.5,-0.45+\xs) rectangle (0.5+\b,-0.75-\xs);
    
    \draw [fill=white,draw=black] (1,0.75+\xs) rectangle (1+\b,0.75-\xs);
    \draw [fill=white,draw=black] (1.5,0.45+\xs) rectangle (1.5+\b,0.15-\xs);  
    \draw [fill=white,draw=black] (1.5,-0.15+\xs) rectangle (1.5+\b,-0.45-\xs);
    \draw [fill=white,draw=black] (2,-0.75+\xs) rectangle (2+\b,-0.75-\xs);
    \draw [fill=white,draw=black] (2,0.75+\xs) rectangle (2+\b,0.15-\xs);

    \draw [fill=white,draw=black] (2.5,-0.45+\xs) rectangle (2.5+\b,-0.75-\xs);  
    
    \draw [fill=white,draw=black] (3,0.75+\xs) rectangle (3+\b,0.45-\xs); 
    
    \draw [fill=white,draw=black] (3.5,0.45+\xs) rectangle (3.5+\b,0.15-\xs);
    \draw [fill=white,draw=black] (3.5,-0.45+\xs) rectangle (3.5+\b,-0.75-\xs);

    \draw [fill=white,draw=black] (4.5,-0.15+\xs) rectangle (4.5+\b,-0.45-\xs);
    \draw [fill=white,draw=black] (4.5,0.75+\xs) rectangle (4.5+\b,0.45-\xs);  
    
    \draw [fill=white,draw=black] (5,0.15+1.6*\xs) rectangle (5+\b,0.15-1.6*\xs);
    \draw [fill=white,draw=black] (5,0.45+1.6*\xs) rectangle (5+\b,0.45-1.6*\xs); 
    \draw [fill=white,draw=black] (5,0.75+1.6*\xs) rectangle (5+\b,0.75-1.6*\xs); 
    \draw [fill=white,draw=black] (5,-0.15+1.6*\xs) rectangle (5+\b,-0.15-1.6*\xs);
    \draw [fill=white,draw=black] (5,-0.45+1.6*\xs) rectangle (5+\b,-0.45-1.6*\xs); 
    \draw [fill=white,draw=black] (5,-0.75+1.6*\xs) rectangle (5+\b,-0.75-1.6*\xs);               
     \draw[thin] (5+\b-0.05,0.15-0.07) arc (0:180:0.1);
     \draw[thin,->] (5+\b-0.05-0.11,0.15-0.07) -- (5+\b-0.05,0.15-0.07+0.15); 
     \draw[thin] (5+\b-0.05,0.45-0.07) arc (0:180:0.1);
     \draw[thin,->] (5+\b-0.05-0.11,0.45-0.07) -- (5+\b-0.05,0.45-0.07+0.15);   
     \draw[thin] (5+\b-0.05,0.75-0.07) arc (0:180:0.1);
     \draw[thin,->] (5+\b-0.05-0.11,0.75-0.07) -- (5+\b-0.05,0.75-0.07+0.15);     
     \draw[thin] (5+\b,0.15+\s) -- (5+\b+7*\s,0.15+\s);
     \draw[thin] (5+\b,0.15-\s) -- (5+\b+7*\s,0.15-\s);     
     \draw[thin] (5+\b,0.45+\s) -- (5+\b+7*\s,0.45+\s);
     \draw[thin] (5+\b,0.45-\s) -- (5+\b+7*\s,0.45-\s);   
     \draw[thin] (5+\b,0.75+\s) -- (5+\b+7*\s,0.75+\s);
     \draw[thin] (5+\b,0.75-\s) -- (5+\b+7*\s,0.75-\s);   

     \draw[thin] (5+\b-0.05,-0.15-0.07) arc (0:180:0.1);
     \draw[thin,->] (5+\b-0.05-0.11,-0.15-0.07) -- (5+\b-0.05,-0.15-0.07+0.15); 
     \draw[thin] (5+\b-0.05,-0.45-0.07) arc (0:180:0.1);
     \draw[thin,->] (5+\b-0.05-0.11,-0.45-0.07) -- (5+\b-0.05,-0.45-0.07+0.15);   
     \draw[thin] (5+\b-0.05,-0.75-0.07) arc (0:180:0.1);
     \draw[thin,->] (5+\b-0.05-0.11,-0.75-0.07) -- (5+\b-0.05,-0.75-0.07+0.15);     
     \draw[thin] (5+\b,-0.15+\s) -- (5+\b+7*\s,-0.15+\s);
     \draw[thin] (5+\b,-0.15-\s) -- (5+\b+7*\s,-0.15-\s);     
     \draw[thin] (5+\b,-0.45+\s) -- (5+\b+7*\s,-0.45+\s);
     \draw[thin] (5+\b,-0.45-\s) -- (5+\b+7*\s,-0.45-\s);   
     \draw[thin] (5+\b,-0.75+\s) -- (5+\b+7*\s,-0.75+\s);
     \draw[thin] (5+\b,-0.75-\s) -- (5+\b+7*\s,-0.75-\s);       
                          
\node at (5/2+\b/2-0.2/2+\x/2,0) {$\implies$};
     \draw [fill=gray!15,draw=none] (-0.05+\x,0.75+2.5*\xs) rectangle (5.4+\x,0+0.25*\xs);   
     \draw [fill=blue!15,draw=none] (-0.05+\x,-0.25*\xs) rectangle (5.4+\x,-0.75-2.5*\xs);       
    
    \draw (-0.1+\x,0.45) -- (5+\x,0.45);   
    \draw (-0.1+\x,0.75) -- (5+\x,0.75);  
    \draw (-0.1+\x,0.15) -- (5+\x,0.15);
    \draw (-0.1+\x,-0.15) -- (5+\x,-0.15);
    \draw (-0.1+\x,-0.45) -- (5+\x,-0.45);   
    \draw (-0.1+\x,-0.75) -- (5+\x,-0.75);   
    \draw [fill=GreenYellow,draw=black] (1+\x,0.15+\xs) rectangle (1+\x+\b,0.15-\xs); 
    \draw [fill=GreenYellow,draw=black] (1+\x,-0.15+\xs) rectangle (1+\x+\b,-0.15-\xs);     
    \draw [fill=GreenYellow,draw=black] (3+\x,0.15+\xs) rectangle (3+\x+\b,0.15-\xs); 
    \draw [fill=GreenYellow,draw=black] (3+\x,-0.15+\xs) rectangle (3+\x+\b,-0.45-\xs);     
    \draw [fill=GreenYellow,draw=black] (4+\x,0.75+\xs) rectangle (4+\x+\b,0.15-\xs); 
    \draw [fill=GreenYellow,draw=black] (4+\x,-0.15+\xs) rectangle (4+\x+\b,-0.15-\xs);     
    
    \draw [fill=white,draw=black] (0.5+\x,0.75+\xs) rectangle (0.5+\b+\x,0.45-\xs);
    \draw [fill=white,draw=black] (0.0+\x,0.15+\xs) rectangle (0.0+\b+\x,0.15-\xs);   
    \draw [fill=white,draw=black] (0.0+\x,-0.15+\xs) rectangle (0.0+\b+\x,-0.45-\xs);
    \draw [fill=white,draw=black] (0.5+\x,-0.45+\xs) rectangle (0.5+\b+\x,-0.75-\xs);
    
    \draw [fill=white,draw=black] (1+\x,0.75+\xs) rectangle (1+\b+\x,0.75-\xs);
    \draw [fill=white,draw=black] (1.5+\x,0.45+\xs) rectangle (1.5+\b+\x,0.15-\xs);  
    \draw [fill=white,draw=black] (1.5+\x,-0.15+\xs) rectangle (1.5+\b+\x,-0.45-\xs);
    \draw [fill=white,draw=black] (2+\x,-0.75+\xs) rectangle (2+\b+\x,-0.75-\xs);
    \draw [fill=white,draw=black] (2+\x,0.75+\xs) rectangle (2+\b+\x,0.15-\xs);

    \draw [fill=white,draw=black] (2.5+\x,-0.45+\xs) rectangle (2.5+\b+\x,-0.75-\xs);  
    
    \draw [fill=white,draw=black] (3+\x,0.75+\xs) rectangle (3+\b+\x,0.45-\xs); 
    
    \draw [fill=white,draw=black] (3.5+\x,0.45+\xs) rectangle (3.5+\b+\x,0.15-\xs);
    \draw [fill=white,draw=black] (3.5+\x,-0.45+\xs) rectangle (3.5+\b+\x,-0.75-\xs);

    \draw [fill=white,draw=black] (4.5+\x,-0.15+\xs) rectangle (4.5+\b+\x,-0.45-\xs);
    \draw [fill=white,draw=black] (4.5+\x,0.75+\xs) rectangle (4.5+\b+\x,0.45-\xs);  
    
    \draw [fill=white,draw=black] (5+\x,0.15+1.6*\xs) rectangle (\x+5+\b,0.15-1.6*\xs);
    \draw [fill=white,draw=black] (\x+5,0.45+1.6*\xs) rectangle (\x+5+\b,0.45-1.6*\xs); 
    \draw [fill=white,draw=black] (\x+5,0.75+1.6*\xs) rectangle (\x+5+\b,0.75-1.6*\xs); 
    \draw [fill=white,draw=black] (\x+5,-0.15+1.6*\xs) rectangle (\x+5+\b,-0.15-1.6*\xs);
    \draw [fill=white,draw=black] (\x+5,-0.45+1.6*\xs) rectangle (\x+5+\b,-0.45-1.6*\xs); 
    \draw [fill=white,draw=black] (\x+5,-0.75+1.6*\xs) rectangle (\x+5+\b,-0.75-1.6*\xs);               
     \draw[thin] (\x+5+\b-0.05,0.15-0.07) arc (0:180:0.1);
     \draw[thin,->] (\x+5+\b-0.05-0.11,0.15-0.07) -- (\x+5+\b-0.05,0.15-0.07+0.15); 
     \draw[thin] (\x+5+\b-0.05,0.45-0.07) arc (0:180:0.1);
     \draw[thin,->] (\x+5+\b-0.05-0.11,0.45-0.07) -- (\x+5+\b-0.05,0.45-0.07+0.15);   
     \draw[thin] (\x+5+\b-0.05,0.75-0.07) arc (0:180:0.1);
     \draw[thin,->] (\x+5+\b-0.05-0.11,0.75-0.07) -- (\x+5+\b-0.05,0.75-0.07+0.15);     
     \draw[thin] (\x+5+\b,0.15+\s) -- (\x+5+\b+7*\s,0.15+\s);
     \draw[thin] (\x+5+\b,0.15-\s) -- (\x+5+\b+7*\s,0.15-\s);     
     \draw[thin] (\x+5+\b,0.45+\s) -- (\x+5+\b+7*\s,0.45+\s);
     \draw[thin] (\x+5+\b,0.45-\s) -- (\x+5+\b+7*\s,0.45-\s);   
     \draw[thin] (\x+5+\b,0.75+\s) -- (\x+5+\b+7*\s,0.75+\s);
     \draw[thin] (\x+5+\b,0.75-\s) -- (\x+5+\b+7*\s,0.75-\s);   

     \draw[thin] (\x+5+\b-0.05,-0.15-0.07) arc (0:180:0.1);
     \draw[thin,->] (\x+5+\b-0.05-0.11,-0.15-0.07) -- (\x+5+\b-0.05,-0.15-0.07+0.15); 
     \draw[thin] (\x+5+\b-0.05,-0.45-0.07) arc (0:180:0.1);
     \draw[thin,->] (\x+5+\b-0.05-0.11,-0.45-0.07) -- (\x+5+\b-0.05,-0.45-0.07+0.15);   
     \draw[thin] (\x+5+\b-0.05,-0.75-0.07) arc (0:180:0.1);
     \draw[thin,->] (\x+5+\b-0.05-0.11,-0.75-0.07) -- (\x+5+\b-0.05,-0.75-0.07+0.15);     
     \draw[thin] (\x+5+\b,-0.15+\s) -- (\x+5+\b+7*\s,-0.15+\s);
     \draw[thin] (\x+5+\b,-0.15-\s) -- (\x+5+\b+7*\s,-0.15-\s);     
     \draw[thin] (\x+5+\b,-0.45+\s) -- (\x+5+\b+7*\s,-0.45+\s);
     \draw[thin] (\x+5+\b,-0.45-\s) -- (\x+5+\b+7*\s,-0.45-\s);   
     \draw[thin] (\x+5+\b,-0.75+\s) -- (\x+5+\b+7*\s,-0.75+\s);
     \draw[thin] (\x+5+\b,-0.75-\s) -- (\x+5+\b+7*\s,-0.75-\s);

     \draw [fill=gray!15,draw=none] (0.1,1+0.2) rectangle (0.1+0.25,1+0.25+0.2);
     \draw [fill=blue!15,draw=none] (1.6,1+0.2) rectangle (1.6+0.25,1+0.25+0.2);     
     \node[gray] at (0.55,1.325) {\footnotesize{$\bar A$}};
     \node[blue] at (2.05,1.325) {\footnotesize{$\bar B$}};
   
    \end{tikzpicture}
    \caption{The nonlocal circuit on the left can be simulated with local circuits on the right using quasiprobability simulation. If the optimal quasiprobability simulation is performed on each gate $U_1,U_2,U_3$ individually, then the total simulation overhead is given by $\gamma_S(U_1)^2 \gamma_S(U_2)^2 \gamma_S(U_3)^2$, as described in the main text.}
    \label{fig_cutting}
\end{figure}

In principle, quasiprobabilistic circuit knitting does not require any sort of classical communication between the two smaller quantum computers during the execution of the circuit.
The main question of this work is whether allowing the two computers to exchange classical information during the circuit execution can reduce the sampling overhead.
We consider three settings:
\begin{enumerate}
  \item \textbf{Local operations ($\LO$)}: The two computers can only realize operations in a product form $\cA \otimes \cB$ where $\cA$ and $\cB$ act locally on $\bar A$ and $\bar B$, respectively.
  \item \textbf{Local operations and one-way classical communication ($\LOCCOW$)}: The two computers can realize protocols that contain local operations from $\LO$ as well as classical communication from $\bar A$ to $\bar B$.
\item  \textbf{Local operations and classical communication ($\LOCC$)}: The two computers can realize protocols that contain local operations from $\LO$ as well as two-way classical communication between $\bar A$ and $\bar B$.
\end{enumerate}
Note that in the $\LO$ and \smash{$\LOCCOW$} settings, one does not necessarily require two separate quantum computers. Instead, one can run the two subcircuits in sequence on the same device.
The classical communication in the \smash{$\LOCCOW$} setting can then be simulated by classically storing the bits sent from $\bar{A}$ to $\bar{B}$. 
In contrast, the $\LOCC$ setting does require two quantum computers that exchange classical information in both directions.

To quasiprobabilistically simulate a nonlocal gate corresponding to the unitary channel $\mathcal{U}$, one requires a so-called \emph{quasiprobability decomposition} (QPD)
\begin{align} \label{eq_QPD_intro}
    \mathcal{U}=\sum_{i} a_i \mathcal{F}_i \, ,
\end{align}
where $\mathcal{F}_i$ are operations that our hardware can realize, i.e., $\mathcal{F}_i\in S$ where $S=\LO$, $S=\LOCCOW$ or $S=\LOCC$ depending on the considered setting. The coefficients $a_i$ are real numbers which might take negative values (hence the ``quasi" in quasiprobability).
During the circuit execution, the gate $\mathcal{U}$ gets randomly replaced by one of the gates $\mathcal{F}_i$.
The sampling overhead of the quasiprobability simulation is given by $\kappa^2$ where $\kappa$ is the one-norm of the coefficients, i.e.~$\kappa\coloneqq\sum_i |a_i|$.
Therefore it is desirable to find an optimal quasiprobability decomposition with smallest possible $\kappa$.
The smallest achievable $\kappa$ for a gate $U$ described in the setting \smash{$S\in\{\LO,\LOCCOW,\LOCC\}$} is denoted by $\gamma_{S}(U)$ and we refer to it as the $\gamma$-factor of $U$ (see~\cref{def_gamma_factor}).
Since the $\LOCC$ setting is strictly more powerful than \smash{$\LOCCOW$}, which in turn is itself more powerful than $\LO$, we have
\begin{align} \label{eq_trivial}
    \gamma_{\LOCC}(U) \leq \gamma_{\LOCCOW}(U) \leq \gamma_{\LO}(U) \, .
\end{align}

When simulating a single nonlocal gate $U$ via an optimal QPD, the number of samples required to achieve a fixed accuracy increases by $\gamma_S(U)^2$ compared to directly running the nonlocal gate on a large quantum computer~\cite{TBG17,endo18,Piv_masterThesis}.
If the circuit contains $n$ nonlocal gates $U_1,\dots,U_n$ and the optimal quasiprobability simulation is performed separately for each of these gates, the overall sampling overhead is given by $\prod_{i=1}^n \gamma_S(U_i)^2$.
For $n$ identical nonlocal gates $U$, this sampling overhead scales as $\gamma_S(U)^{2n}=\exp(O(n))$, which emphasizes that the number of nonlocal gates $n$ may not be too large. At the same time, it highlights the importance of choosing a setting $S$ where the corresponding $\gamma$-factor $\gamma_S(U)$ is small.
In this work, we ask the question how much classical communication can help for circuit knitting.
This requires a good understanding of the optimal sampling overhead $\gamma_S$ for the three settings \smash{$S\in\{\LO,\LOCCOW,\LOCC\}$} and how they differ.  

\paragraph{Results}
Computing the $\gamma$-factor for a given unitary is a nontrivial task as it is given via a complicated optimization problem (see~\cref{def_gamma_factor}). 
We show  that for a large class of two-qubit unitaries $U$, including all Clifford gates as well as certain non-Clifford gates such as controlled rotation gates $\mathrm{CR}_X(\theta),\mathrm{CR}_Y(\theta),\mathrm{CR}_Z(\theta)$\footnote{For $\sigma\in\{X,Y,Z\}$, $\mathrm{CR}_{\sigma}(\theta)$ denotes a controlled $\mathrm{R}_{\sigma}(\theta)$ gate with rotation angle $\theta$, where $\mathrm{R}_{\sigma}(\theta)\coloneqq \ee^{-\ci\frac{\theta}{2}\sigma}$.} and two-qubit rotations, $\mathrm{R}_{XX}(\theta),\mathrm{R}_{YY}(\theta),\mathrm{R}_{ZZ}(\theta)$\footnote{For $\sigma\in\{X,Y,Z\}$, the $\mathrm{R}_{\sigma\sigma}(\theta)$  gate is defined by the two-qubit unitary $\ee^{-\ci \frac{\theta}{2} \sigma\otimes\sigma}$ for some real number $\theta$.} there is no advantage in having classical communication when quasiprobabilistically simulating a \emph{single instance} of the gate with local operations, that is
\begin{align}\label{eq_intro_equality}
    \gamma_{\LOCC}(U) = \gamma_{\LOCCOW}(U) = \gamma_{\LO}(U) \, .
\end{align}
In fact, we prove a closed-form expression for the $\gamma$-factor as shown in~\cref{thm_main} and~\cref{cor_bounds_match}. This is the first time that an exact characterization of the optimal sampling overhead is found, as previous works~\cite{Mitarai_2021,MF_21} only showed upper bounds.
\cref{tab_results_gamma} gives an overview of gates for which we provide an analytical formula for the $\gamma$-factor under $\LO$ and $\LOCC$.

\begin{table}[!htb]
\centering
\bgroup
\def\arraystretch{1.4}
  \begin{tabular}{V{2.5}c V{2.5} c|cV{2.5}c|cV{2.5}}
  \clineB{2-5}{2.5}  
\multicolumn{1}{cV{2.5}}{}&
      \multicolumn{2}{cV{2.5}}{Clifford gates} &
      \multicolumn{2}{cV{2.5}}{non-Clifford gates} \\ \cline{2-5}
\multicolumn{1}{cV{2.5}}{}& \multirow{ 2}{*}{two-qubit gates} & $n$-qubit gates& gates defined in~\cref{thm_main}      & \multirow{ 2}{*}{others}  \\ 
\multicolumn{1}{cV{2.5}}{}&                                   & for $n\geq 3$  & e.g.~$\mathrm{CR}_Z(\theta)$ $\&$ $\mathrm{R}_{ZZ}(\theta)$ & \\
\clineB{1-5}{2.5}
    $\gamma_{\LO}$ &\cref{cor_bounds_match}  & unknown  & \cref{thm_main} $\&$~\cref{cor_bounds_match} & unknown \\
\clineB{1-5}{2.5}
  $\gamma_{\LOCC}$ &\cref{cor_bounds_match}  & \cref{thm_clifford}    &  \cref{thm_main} $\&$~\cref{cor_bounds_match}  & unknown \\
\clineB{1-5}{2.5}
  \end{tabular}
\egroup
		\caption{Overview of nonlocal gates for which we have an analytic understanding of the optimal sampling overheads $\gamma_{\LO}$ and $\gamma_{\LOCC}$.}
		\label{tab_results_gamma}
	\end{table}

At first sight,~\cref{eq_intro_equality} might seem discouraging, as it suggests that classical communication does not help for quasiprobabilistic simulation of nonlocal gates.
However, the statement only asserts that classical communication provides no advantage for simulating a \emph{single} instance of the gate $U$ and things change drastically when considering a circuit with many instances of the same nonlocal gate.

We present a novel technique utilizing bidirectional classical communication between $\bar A$ and $\bar B$ to reduce the overall sampling overhead for a circuit containing $n$ instances of some nonlocal Clifford gate $U$ from $\gamma_{\LOCC}(U)^{2n}$ to \smash{$\gamma_{\LOCC}^{(n)}(U)^{2n}$} where
\begin{align*}
    \gamma_{\LOCC}^{(n)}(U) \coloneqq  \gamma_{\LOCC}(U^{\otimes n})^{1/n} \, .
\end{align*}
We prove that $\gamma_{\LOCC}^{(n)}(U) < \gamma_{\LOCC}(U)$ for any entangling Clifford gate $U$.
Put differently, we show that utilizing the optimal quasiprobability simulation for each nonlocal gate individually is not optimal, and we propose a method that can significantly reduce the sampling overhead by making use of classical communication.

For example, the overhead of simulating $n$ nonlocal $\CNOT$ gates is reduced from $\cO(9^n)$ to $\cO(4^n)$ since $\gamma_{\LOCC}(\CNOT)=3$ but \smash{$\gamma_{\LOCC}^{(n)}(\CNOT)=(2^{n+1}-1)^{1/n}$} and thus \smash{$\gamma_{\LOCC}^{(n)}(\CNOT)^{2n}=\cO(4^n)$}.
By a similar argument, the sampling overhead of the $\SWAP$ is reduced from $\cO(49^n)$ to $\cO(16^n)$.
If one restricts oneself to the \smash{$\LOCCOW$} setting instead of $\LOCC$, we show that a reduction of sampling overhead is still possible, albeit not as strongly as above.
More specifically, we show that for $\CNOT$ gates under the \smash{$\LOCCOW$} setting the sampling overhead can be reduced to $O(8^n)$, which lies in between the $\LO$ and the $\LOCC$ scenarios.

The central idea behind our technique is to realize the desired unitary via gate teleportation~\cite{GC99}.
For instance, a $\CNOT$ gate can be realized by consuming a preexisting Bell pair $\ket{\Psi}$ shared between $A$ and $B$ using a simple $\LOCC$ protocol, as depicted in~\cref{fig_CNOT}.
In case we have $n$ nonlocal $\CNOT$ gates we can generate all the required $n$ Bell pairs at the same time, at the cost of an additional memory overhead (see~\cref{fig_CNOT}).
Interestingly, a joint local simulation of $n$ Bell pairs is considerably cheaper than locally simulating $n$-times a single Bell pair. 
In technical terms this is captured by a strict submultiplicativity of the $\gamma$-factor, i.e.,
\begin{align} \label{eq_submult}
    \gamma_{\LOCC}(\ket{\Psi}^{\otimes n}) = 2^{n+1}-1 < 3^n = \gamma_{\LOCC}(\ket{\Psi})^n \, ,
\end{align}
for $n>1$.
The formal statement justifying~\cref{eq_submult} can be found in~\cref{corr_gamma_ebit}.
Since the gate teleportation protocol works for arbitrary Clifford gates, our method can be generalized accordingly and reduce the sampling overhead for arbitrary Clifford gates (see~\cref{thm_clifford}).
We also briefly discuss controlled-rotation gates as example of non-Clifford gates, and show that an adapted variant of our technique can also reduce the sampling overhead in certain parameter regimes.

\begin{figure}[htb!]
    \centering
    \begin{tikzpicture}[thick,scale=0.8]
    \def\x{3.25}
    \def\xx{3.25}
    \def\xf{10.25}
    \def\cc{0.02}
    \def \g{1}
    \def \gg{1.8}
    \def \s{7.8}

     \draw [fill=gray!15,draw=none] (0.05,0.06) rectangle (2.45,1.2);   
     \draw [fill=blue!15,draw=none] (0.05,-0.06) rectangle (2.45,-1.2);  

     \draw [fill=gray!15,draw=none] (\x+1.1,0.06) rectangle (\x+7.1,1.2);   
     \draw [fill=blue!15,draw=none] (\x+1.1,-0.06) rectangle (\x+7.1,-1.2); 

     \draw [fill=gray!15,draw=none] (\x+1.1+\s,0.06) rectangle (\x+7.1+\s,1.2);   
     \draw [fill=blue!15,draw=none] (\x+1.1+\s,-0.06) rectangle (\x+7.1+\s,-1.2);

     \draw [fill=gray!15,draw=none] (0.1,1+0.4) rectangle (0.1+0.25,1+0.25+0.4);
     \draw [fill=blue!15,draw=none] (1.6,1+0.4) rectangle (1.6+0.25,1+0.25+0.4);     
     \node[gray] at (0.6,1.125+0.4) {\footnotesize{$A$}};
     \node[blue] at (2.1,1.125+0.4) {\footnotesize{$B$}};
    
     \draw (0,0.9) -- (1,0.9);
     \draw (0,-0.9) -- (1,-0.9);     
     \draw[fill=black] (0.5,0.9) circle (0.5mm);
     \draw (0.5,-0.9) circle (1.5mm);
     \draw (0.5,0.9) -- (0.5,-0.9-0.15);
     \node at (1.25,0) {$\ldots$};
     \draw (1.5,0.9) -- (2.5,0.9);
     \draw (1.5,-0.9) -- (2.5,-0.9);     
     \draw[fill=black] (2,0.9) circle (0.5mm);
     \draw (2,-0.9) circle (1.5mm);
     \draw (2,0.9) -- (2,-0.9-0.15);
     \node at (\x/2+\g/2+2.5/2,0) {$=$};
\draw (\x+\g,0.9) -- (\x+\g+2.25,0.9);
\draw (\g+\x,-0.9) -- (\g+\x+2.25,-0.9);
\draw (\g+\x+2.25+0.4,0.9) -- (\g+\x+2.25+0.7,0.9);
\draw (\g+\x+2.25+0.4,-0.9) -- (\g+\x+2.25+0.7,-0.9);
\draw[fill=black] (\g+\x+0.35,0.4) circle (0.5mm);
\draw[fill=black] (\g+\x+0.35,-0.4) circle (0.5mm);
\draw[decorate,decoration={snake,amplitude=.4mm,segment length=2mm,post length=1mm}] (\g+\x+0.35,0.4) -- (\g+\x+0.35,-0.4);
\draw (\g+\x+0.85,0.4) circle (1.5mm);
\draw[fill=black] (\g+\x+0.85,0.9) circle (0.5mm);
\draw (\g+\x+0.85,0.9) -- (\g+\x+0.85,0.25);
\draw (\g+\x+0.85,-0.9) circle (1.5mm);
\draw[fill=black] (\g+\x+0.85,-0.4) circle (0.5mm);
\draw (\g+\x+0.85,-1.05) -- (\g+\x+0.85,-0.4);
\draw (\g+\x+0.35,0.4) -- (\g+\x+1.3,0.4);
\draw (\g+\x+0.35,-0.4) -- (\g+\x+1.3,-0.4);
\draw (\g+\x+1.3,0.4-0.2) rectangle (\g+\x+1.3+0.5,0.4+0.2);
\draw (\g+\x+1.3,-0.4-0.2) rectangle (\g+\x+1.3+0.5,-0.4+0.2);
\draw (\g+\x+2.25,0.9-0.2) rectangle (\g+\x+2.25+0.4,0.9+0.2);
\draw (\g+\x+2.25,-0.9-0.2) rectangle (\g+\x+2.25+0.4,-0.9+0.2);
\node at (\g+\x+2.25+0.2,0.9) {\footnotesize{$Z$}};
\node at (\g+\x+2.25+0.2,-0.9) {\footnotesize{$X$}};
\draw[thin,->] (\g+\x+1.3+0.25+0.05,0.4-0.1) -- (\g+\x+1.3+0.25+0.1+0.05,0.4+0.1);
\node[] at (\g+\x+1.3+0.12,0.4+0.08) {\tiny{$Z$}};
\draw[thin] (\g+\x+1.3+0.25+0.1+0.07,0.4-0.1) arc (0:180:0.1);
\draw[thin,->] (\g+\x+1.3+0.25+0.05,-0.4-0.1) -- (\g+\x+1.3+0.25+0.1+0.05,-0.4+0.1);
\node[] at (\g+\x+1.3+0.13,-0.4+0.08) {\tiny{$X$}};
\draw[thin] (\g+\x+1.3+0.25+0.1+0.07,-0.4-0.1) arc (0:180:0.1); 
\draw[thin] (\g+\x+1.3+0.5,0.4+\cc) -- (\g+\x+2.25+0.2+\cc-\cc-\cc-\cc-\cc+0.0088,0.4+\cc);
\draw[thin] (\g+\x+1.3+0.5,0.4-\cc) -- (\g+\x+2.25+0.2-\cc-\cc-\cc-\cc-\cc+0.0088,0.4-\cc);
\draw[thin] (\g+\x+2.25+0.2+\cc-\cc-\cc-\cc-\cc,0.4+\cc) -- (\g+\x+2.25+0.2+\cc-\cc-\cc-\cc-\cc,-0.7);
\draw[thin] (\g+\x+2.25+0.2-\cc-\cc-\cc-\cc-\cc,0.4-\cc) -- (\g+\x+2.25+0.2-\cc-\cc-\cc-\cc-\cc,-0.7);
\draw[thin] (\g+\x+1.3+0.5,-0.4+\cc) -- (\g+\x+2.25+0.2-\cc+\cc+\cc+\cc+\cc+0.0088,-0.4+\cc);
\draw[thin] (\g+\x+1.3+0.5,-0.4-\cc) -- (\g+\x+2.25+0.2+\cc+\cc+\cc+\cc+\cc+0.0088,-0.4-\cc);
\draw[thin] (\g+\x+2.25+0.2+\cc+\cc+\cc+\cc+\cc,-0.4-\cc+0.0088) -- (\g+\x+2.25+0.2+\cc+\cc+\cc+\cc+\cc,0.7);
\draw[thin] (\g+\x+2.25+0.2-\cc+\cc+\cc+\cc+\cc,-0.4+\cc+0.0088) -- (\g+\x+2.25+0.2-\cc+\cc+\cc+\cc+\cc,0.7);

\node at (\g+\x/2+2.25/2+0.7/2+\x/2+\xx/2,0) {$\ldots$};
\draw (\g+\x+\xx,0.9) -- (\g+\x+2.25+\xx,0.9);
\draw (\g+\x+\xx,-0.9) -- (\g+\x+2.25+\xx,-0.9);
\draw (\g+\x+2.25+0.4+\xx,0.9) -- (\g+\x+2.25+0.7+\xx,0.9);
\draw (\g+\x+2.25+0.4+\xx,-0.9) -- (\g+\x+2.25+0.7+\xx,-0.9);
\draw[fill=black] (\g+\x+0.35+\xx,0.4) circle (0.5mm);
\draw[fill=black] (\g+\x+0.35+\xx,-0.4) circle (0.5mm);
\draw[decorate,decoration={snake,amplitude=.4mm,segment length=2mm,post length=1mm}] (\g+\x+0.35+\xx,0.4) -- (\g+\x+0.35+\xx,-0.4);
\draw (\g+\x+0.85+\xx,0.4) circle (1.5mm);
\draw[fill=black] (\g+\x+0.85+\xx,0.9) circle (0.5mm);
\draw (\g+\x+0.85+\xx,0.9) -- (\g+\x+0.85+\xx,0.25);
\draw (\g+\x+0.85+\xx,-0.9) circle (1.5mm);
\draw[fill=black] (\g+\x+0.85+\xx,-0.4) circle (0.5mm);
\draw (\g+\x+0.85+\xx,-1.05) -- (\g+\x+0.85+\xx,-0.4);
\draw (\g+\x+0.35+\xx,0.4) -- (\g+\x+1.3+\xx,0.4);
\draw (\g+\x+0.35+\xx,-0.4) -- (\g+\x+1.3+\xx,-0.4);
\draw (\g+\x+1.3+\xx,0.4-0.2) rectangle (\g+\x+1.3+0.5+\xx,0.4+0.2);
\draw (\g+\x+1.3+\xx,-0.4-0.2) rectangle (\g+\x+1.3+0.5+\xx,-0.4+0.2);
\draw (\g+\x+2.25+\xx,0.9-0.2) rectangle (\g+\x+2.25+0.4+\xx,0.9+0.2);
\draw (\g+\x+2.25+\xx,-0.9-0.2) rectangle (\g+\x+2.25+0.4+\xx,-0.9+0.2);
\node at (\g+\x+2.25+0.2+\xx,0.9) {\footnotesize{$Z$}};
\node at (\g+\x+2.25+0.2+\xx,-0.9) {\footnotesize{$X$}};
\draw[thin,->] (\g+\x+1.3+0.25+0.05+\xx,0.4-0.1) -- (\g+\x+1.3+0.25+0.1+0.05+\xx,0.4+0.1);
\node[] at (\g+\x+1.3+0.12+\xx,0.4+0.08) {\tiny{$Z$}};
\draw[thin] (\g+\x+1.3+0.25+0.1+0.07+\xx,0.4-0.1) arc (0:180:0.1);
\draw[thin,->] (\g+\x+1.3+0.25+0.05+\xx,-0.4-0.1) -- (\g+\x+1.3+0.25+0.1+0.05+\xx,-0.4+0.1);
\node[] at (\g+\x+1.3+0.13+\xx,-0.4+0.08) {\tiny{$X$}};
\draw[thin] (\g+\x+1.3+0.25+0.1+0.07+\xx,-0.4-0.1) arc (0:180:0.1); 
\draw[thin] (\g+\x+1.3+0.5+\xx,0.4+\cc) -- (\g+\xx+\x+2.25+0.2+\cc-\cc-\cc-\cc-\cc+0.0088,0.4+\cc);
\draw[thin] (\xx+\g+\x+1.3+0.5,0.4-\cc) -- (\xx+\g+\x+2.25+0.2-\cc-\cc-\cc-\cc-\cc+0.0088,0.4-\cc);
\draw[thin] (\xx+\g+\x+2.25+0.2+\cc-\cc-\cc-\cc-\cc,0.4+\cc) -- (\xx+\g+\x+2.25+0.2+\cc-\cc-\cc-\cc-\cc,-0.7);
\draw[thin] (\xx+\g+\x+2.25+0.2-\cc-\cc-\cc-\cc-\cc,0.4-\cc) -- (\xx+\g+\x+2.25+0.2-\cc-\cc-\cc-\cc-\cc,-0.7);
\draw[thin] (\xx+\g+\x+1.3+0.5,-0.4+\cc) -- (\xx+\g+\x+2.25+0.2-\cc+\cc+\cc+\cc+\cc+0.0088,-0.4+\cc);
\draw[thin] (\xx+\g+\x+1.3+0.5,-0.4-\cc) -- (\xx+\g+\x+2.25+0.2+\cc+\cc+\cc+\cc+\cc+0.0088,-0.4-\cc);
\draw[thin] (\xx+\g+\x+2.25+0.2+\cc+\cc+\cc+\cc+\cc,-0.4-\cc+0.0088) -- (\xx+\g+\x+2.25+0.2+\cc+\cc+\cc+\cc+\cc,0.7);
\draw[thin] (\xx+\g+\x+2.25+0.2-\cc+\cc+\cc+\cc+\cc,-0.4+\cc+0.0088) -- (\xx+\g+\x+2.25+0.2-\cc+\cc+\cc+\cc+\cc,0.7);

\node at (\gg/2+\g/2+\x/2+2.25/2+0.7/2+\xx/2+\xf/2,0) {$=$};
\draw (\gg+\xf,0.9) -- (\gg+\xf+2.25,0.9);
\draw (\gg+\xf,-0.9) -- (\gg+\xf+2.25,-0.9);
\draw (\gg+\xf+2.25+0.4,0.9) -- (\gg+\xf+2.25+0.7,0.9);
\draw (\gg+\xf+2.25+0.4,-0.9) -- (\gg+\xf+2.25+0.7,-0.9);
\draw[fill=black] (\gg+\xf+0.35,0.4) circle (0.5mm);
\draw[fill=black] (\gg+\xf+0.35,-0.4) circle (0.5mm);
\draw[decorate,decoration={snake,amplitude=.4mm,segment length=2mm,post length=1mm}] (\gg+\xf+0.35,0.4) -- (\gg+\xf+0.35,-0.4);
\draw (\gg+\xf+0.85,0.4) circle (1.5mm);
\draw[fill=black] (\gg+\xf+0.85,0.9) circle (0.5mm);
\draw (\gg+\xf+0.85,0.9) -- (\gg+\xf+0.85,0.25);
\draw (\gg+\xf+0.85,-0.9) circle (1.5mm);
\draw[fill=black] (\gg+\xf+0.85,-0.4) circle (0.5mm);
\draw (\gg+\xf+0.85,-1.05) -- (\gg+\xf+0.85,-0.4);
\draw (\gg+\xf+0.35,0.4) -- (\gg+\xf+1.3,0.4);
\draw (\gg+\xf+0.35,-0.4) -- (\gg+\xf+1.3,-0.4);
\draw (\gg+\xf+1.3,0.4-0.2) rectangle (\gg+\xf+1.3+0.5,0.4+0.2);
\draw (\gg+\xf+1.3,-0.4-0.2) rectangle (\gg+\xf+1.3+0.5,-0.4+0.2);
\draw (\gg+\xf+2.25,0.9-0.2) rectangle (\gg+\xf+2.25+0.4,0.9+0.2);
\draw (\gg+\xf+2.25,-0.9-0.2) rectangle (\gg+\xf+2.25+0.4,-0.9+0.2);
\node at (\gg+\xf+2.25+0.2,0.9) {\footnotesize{$Z$}};
\node at (\gg+\xf+2.25+0.2,-0.9) {\footnotesize{$X$}};
\draw[thin,->] (\gg+\xf+1.3+0.25+0.05,0.4-0.1) -- (\gg+\xf+1.3+0.25+0.1+0.05,0.4+0.1);
\node[] at (\gg+\xf+1.3+0.12,0.4+0.08) {\tiny{$Z$}};
\draw[thin] (\gg+\xf+1.3+0.25+0.1+0.07,0.4-0.1) arc (0:180:0.1);
\draw[thin,->] (\gg+\xf+1.3+0.25+0.05,-0.4-0.1) -- (\gg+\xf+1.3+0.25+0.1+0.05,-0.4+0.1);
\node[] at (\gg+\xf+1.3+0.13,-0.4+0.08) {\tiny{$X$}};
\draw[thin] (\gg+\xf+1.3+0.25+0.1+0.07,-0.4-0.1) arc (0:180:0.1); 
\draw[thin] (\g+\x+1.3+0.25+0.1+0.07,-0.4-0.1) arc (0:180:0.1); 
\draw[thin] (\gg+\xf+1.3+0.5,0.4+\cc) -- (\gg+\xf+2.25+0.2+\cc-\cc-\cc-\cc-\cc+0.0088,0.4+\cc);
\draw[thin] (\gg+\xf+1.3+0.5,0.4-\cc) -- (\gg+\xf+2.25+0.2-\cc-\cc-\cc-\cc-\cc+0.0088,0.4-\cc);
\draw[thin] (\gg+\xf+2.25+0.2+\cc-\cc-\cc-\cc-\cc,0.4+\cc) -- (\gg+\xf+2.25+0.2+\cc-\cc-\cc-\cc-\cc,-0.7);
\draw[thin] (\gg+\xf+2.25+0.2-\cc-\cc-\cc-\cc-\cc,0.4-\cc) -- (\gg+\xf+2.25+0.2-\cc-\cc-\cc-\cc-\cc,-0.7);
\draw[thin] (\gg+\xf+1.3+0.5,-0.4+\cc) -- (\gg+\xf+2.25+0.2-\cc+\cc+\cc+\cc+\cc+0.0088,-0.4+\cc);
\draw[thin] (\gg+\xf+1.3+0.5,-0.4-\cc) -- (\gg+\xf+2.25+0.2+\cc+\cc+\cc+\cc+\cc+0.0088,-0.4-\cc);
\draw[thin] (\gg+\xf+2.25+0.2+\cc+\cc+\cc+\cc+\cc,-0.4-\cc+0.0088) -- (\gg+\xf+2.25+0.2+\cc+\cc+\cc+\cc+\cc,0.7);
\draw[thin] (\gg+\xf+2.25+0.2-\cc+\cc+\cc+\cc+\cc,-0.4+\cc+0.0088) -- (\gg+\xf+2.25+0.2-\cc+\cc+\cc+\cc+\cc,0.7);
\node at (\gg/2+\xf/2+2.25/2+0.7/2+\xf/2+\xx/2+1,0) {$\ldots$};
\draw (\gg+\xf+\xx,0.9) -- (\gg+\xf+2.25+\xx,0.9);
\draw (\gg+\xf+\xx,-0.9) -- (\gg+\xf+2.25+\xx,-0.9);
\draw (\gg+\xf+2.25+0.4+\xx,0.9) -- (\gg+\xf+2.25+0.7+\xx,0.9);
\draw (\gg+\xf+2.25+0.4+\xx,-0.9) -- (\gg+\xf+2.25+0.7+\xx,-0.9);
\draw[fill=black] (\gg+\xf+0.35+0.2,0.4-0.27) circle (0.5mm);
\draw[fill=black] (\gg+\xf+0.35+0.2,-0.4+0.27) circle (0.5mm);
\draw[thin,decorate,decoration={snake,amplitude=.2mm,segment length=0.6mm,post length=0.8mm}] (\gg+\xf+0.35+0.2,0.4-0.27) -- (\gg+\xf+0.35+0.2,-0.4+0.27);
\draw (\gg+\xf+0.35+0.2,0.4-0.27) -- (\gg+\xf+0.35+0.2+2.5,0.4-0.27);
\draw (\gg+\xf+0.35+0.2,-0.4+0.27) -- (\gg+\xf+0.35+0.2+2.5,-0.4+0.27);
\draw (\gg+\xf+0.35+0.2+2.5,0.4-0.27) -- (\gg+\xf+0.35+\xx,0.4);  
\draw (\gg+\xf+0.35+0.2+2.5,-0.4+0.27) -- (\gg+\xf+0.35+\xx,-0.4);
\draw (\gg+\xf+0.85+\xx,0.4) circle (1.5mm);
\draw[fill=black] (\gg+\xf+0.85+\xx,0.9) circle (0.5mm);
\draw (\gg+\xf+0.85+\xx,0.9) -- (\gg+\xf+0.85+\xx,0.25);
\draw (\gg+\xf+0.85+\xx,-0.9) circle (1.5mm);
\draw[fill=black] (\gg+\xf+0.85+\xx,-0.4) circle (0.5mm);
\draw (\gg+\xf+0.85+\xx,-1.05) -- (\gg+\xf+0.85+\xx,-0.4);
\draw (\gg+\xf+0.35+\xx,0.4) -- (\gg+\xf+1.3+\xx,0.4);
\draw (\gg+\xf+0.35+\xx,-0.4) -- (\gg+\xf+1.3+\xx,-0.4);
\draw (\gg+\xf+1.3+\xx,0.4-0.2) rectangle (\gg+\xf+1.3+0.5+\xx,0.4+0.2);
\draw (\gg+\xf+1.3+\xx,-0.4-0.2) rectangle (\gg+\xf+1.3+0.5+\xx,-0.4+0.2);
\draw (\gg+\xf+2.25+\xx,0.9-0.2) rectangle (\gg+\xf+2.25+0.4+\xx,0.9+0.2);
\draw (\gg+\xf+2.25+\xx,-0.9-0.2) rectangle (\gg+\xf+2.25+0.4+\xx,-0.9+0.2);
\node at (\gg+\xf+2.25+0.2+\xx,0.9) {\footnotesize{$Z$}};
\node at (\gg+\xf+2.25+0.2+\xx,-0.9) {\footnotesize{$X$}};
\draw[thin,->] (\gg+\xf+1.3+0.25+0.05+\xx,0.4-0.1) -- (\gg+\xf+1.3+0.25+0.1+0.05+\xx,0.4+0.1);
\node[] at (\gg+\xf+1.3+0.12+\xx,0.4+0.08) {\tiny{$Z$}};
\draw[thin] (\gg+\xf+1.3+0.25+0.1+0.07+\xx,0.4-0.1) arc (0:180:0.1);
\draw[thin,->] (\gg+\xf+1.3+0.25+0.05+\xx,-0.4-0.1) -- (\gg+\xf+1.3+0.25+0.1+0.05+\xx,-0.4+0.1);
\node[] at (\gg+\xf+1.3+0.13+\xx,-0.4+0.08) {\tiny{$X$}};
\draw[thin] (\gg+\xf+1.3+0.25+0.1+0.07+\xx,-0.4-0.1) arc (0:180:0.1); 
\draw[thin] (\xx+\gg+\xf+1.3+0.5,0.4+\cc) -- (\xx+\gg+\xf+2.25+0.2+\cc-\cc-\cc-\cc-\cc+0.0088,0.4+\cc);
\draw[thin] (\xx+\gg+\xf+1.3+0.5,0.4-\cc) -- (\xx+\gg+\xf+2.25+0.2-\cc-\cc-\cc-\cc-\cc+0.0088,0.4-\cc);
\draw[thin] (\xx+\gg+\xf+2.25+0.2+\cc-\cc-\cc-\cc-\cc,0.4+\cc) -- (\xx+\gg+\xf+2.25+0.2+\cc-\cc-\cc-\cc-\cc,-0.7);
\draw[thin] (\xx+\gg+\xf+2.25+0.2-\cc-\cc-\cc-\cc-\cc,0.4-\cc) -- (\xx+\gg+\xf+2.25+0.2-\cc-\cc-\cc-\cc-\cc,-0.7);
\draw[thin] (\xx+\gg+\xf+1.3+0.5,-0.4+\cc) -- (\xx+\gg+\xf+2.25+0.2-\cc+\cc+\cc+\cc+\cc+0.0088,-0.4+\cc);
\draw[thin] (\xx+\gg+\xf+1.3+0.5,-0.4-\cc) -- (\xx+\gg+\xf+2.25+0.2+\cc+\cc+\cc+\cc+\cc+0.0088,-0.4-\cc);
\draw[thin] (\xx+\gg+\xf+2.25+0.2+\cc+\cc+\cc+\cc+\cc,-0.4-\cc+0.0088) -- (\xx+\gg+\xf+2.25+0.2+\cc+\cc+\cc+\cc+\cc,0.7);
\draw[thin] (\xx+\gg+\xf+2.25+0.2-\cc+\cc+\cc+\cc+\cc,-0.4+\cc+0.0088) -- (\xx+\gg+\xf+2.25+0.2-\cc+\cc+\cc+\cc+\cc,0.7);

    \end{tikzpicture}
    \caption{Graphical explanation of how to realize two $\CNOT$ gates between two parties $A$ and $B$ in a $\LOCC$ setting via gate teleportation. The wavy line depicts a Bell pair shared between $A$ and $B$. By generating the two Bell pairs simultaneously (instead generating a single Bell pair twice), we can utilize the submultiplicativity of the $\gamma$-factor under the tensor product to reduce the total sampling overhead. 
    }
    \label{fig_CNOT}
\end{figure}
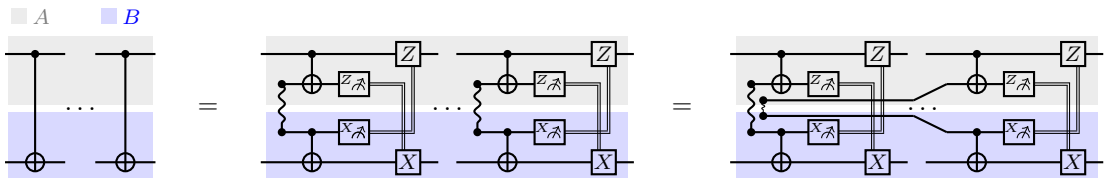

Our results, summarized in~\Cref{tab_results}, show that classical communication can be helpful to considerably reduce the sampling overhead when using circuit knitting techniques to perform nonlocal computations locally. Current research efforts in building quantum computers strongly suggest that these results may be useful in the close future as planned devices will be scaled up by connecting smaller chips, first via a classical and later via a quantum link~\cite{blogpost}.
\begin{table}[!htb]
\centering
\bgroup
\def\arraystretch{1.25}
  \begin{tabular}{V{2.5}cV{2.5}c|cV{2.5}c|cV{2.5}}
  \clineB{2-5}{2.5}  
\multicolumn{1}{cV{2.5}}{}&
      \multicolumn{2}{cV{2.5}}{Clifford gates} &
      \multicolumn{2}{cV{2.5}}{non-Clifford gates} \\ \cline{2-5}
\multicolumn{1}{cV{2.5}}{}& \multirow{ 2}{*}{two-qubit gates} & $n$-qubit gates& gates defined in~\cref{thm_main}      & \multirow{ 2}{*}{others}  \\ 
\multicolumn{1}{cV{2.5}}{}&                                   & for $n\geq 3$  &e.g.~$\mathrm{CR}_Z(\theta)$ $\&$ $\mathrm{R}_{ZZ}(\theta)$ & \\
\clineB{1-5}{2.5}
$\mathrm{CC}$ helps for  & \ding{55} & unknown  & \ding{55} & unknown  \\
\textbf{single} instances& \cref{cor_bounds_match}       &                     & \cref{thm_main} $\&$~\cref{cor_bounds_match}       &    \\
\clineB{1-5}{2.5}
$\mathrm{CC}$ helps for & $\checkmark$ & $\checkmark$ & $(\checkmark)$ &  unknown \\
\textbf{multiple} instances & \cref{thm_clifford}       &  \cref{thm_clifford}                   & \cref{sec_nonClifford_multiple} for $\mathrm{CR}_Z(\theta)$ gate  &    \\
\clineB{1-5}{2.5}
  \end{tabular}
\egroup
		\caption{Overview of nonlocal gates for which classical communication ($\mathrm{CC}$) does or does not help for circuit knitting (denoted by a $\checkmark$ or \ding{55} respectively).}
		\label{tab_results}
	\end{table}

\paragraph{Open questions}
Our work provides a powerful framework to realize circuit knitting, but it remains an open problem to identify useful applications where this framework fits well.
Ideally, the considered circuits should have an almost local structure in the sense that there are only few nonlocal gates that have to be simulated.
If one wants to use circuit knitting to demonstrate a quantum advantage, it is furthermore necessary that the local computations are themselves out of reach for a classical computer, because otherwise the knitting could be simulated classically.

Furthermore, the generalization of the presented technique to non-Clifford gates is more complicated and not well understood.
In~\cref{sec_nonClifford_multiple}, we show that our method can be adapted for controlled-rotation gates $\mathrm{CR}_X(\theta)$ and reduce the sampling overhead when $\pi/3<\theta<5\pi/3$.
It remains unclear, whether the sampling overhead could be reduced even further or at all for values of $\theta \in [0,2\pi]$ outside of the above range (see~\cref{fig_openQuestion}).

It can also be noted that~\cref{tab_results_gamma} and~\cref{tab_results} have a few empty entries corresponding to gates for which it is currently unknown what the optimal sampling overhead is as well as if classical communication helps for single or multiple instances.
Solving these problems would be interesting, however at the same time these gates can be avoided in practice by decomposing them into gates for which we have a complete understanding, at the cost of an additional overhead due to the decomposition.

\paragraph{Relation to existing circuit knitting techniques}
Circuit knitting through quasiprobability simulation of nonlocal gates has already been proposed previously~\cite{Mitarai_2021,MF_21}.
However, previous works only focused on the $\LO$ setting and did not consider classical communication between the involved parties.
Furthermore, previous work proposed explicit quasiprobability decompositions without any evidence for optimality.
We show that the quasiprobability decomposition proposed in~\cite{MF_21} are optimal for a certain class of two-qubit unitaries.
Other existing circuit knitting approaches, such as~\cite{BSS16,PHOW20}, do not precisely specify the exponential scaling and state an overhead scaling as $2^{O(n)}$, whereas our work makes a more precise statement. We quantify the exponential sampling overhead as $O(\kappa^n)$ where we aim for finding the smallest possible constant $\kappa$. For practical applications, a smaller basis of the exponential scaling can make a large difference. 
Instead of cutting individual gates, as we do in this work, it has also been proposed to perform time-like cuts~\cite{PHOW20,Mitarai_2021}.
Quasiprobability simulation also has certain advantages compared to the VQE-type entanglement forging approach~\cite{forging22}, most notably that it works for arbitrary circuits.

\paragraph{Relation to resource theory of magic}
Many ideas in our work find a counterpart in the resource theory of magic~\cite{Veitch_2012,Veitch_2014,HC17}.
For instance, we replace nonlocal operations by local operations that consume some preexisting entangled state.
This is analogous to realizing a non-Clifford gate using a Clifford gadget that consumes some preexisting magic state.
While magic is the resource required to enable nonclassical computation, entanglement is the resource required to enable nonlocal computation. 
When using quasiprobability simulation to simulate non-Clifford gates with Clifford gates, the relevant resource monotone to measure the sampling overhead is the robustness of magic.
Similarly, the sampling overhead in our scheme is captured by the robustness of entanglement instead.

\paragraph{Structure} 
\Cref{sec_circuit_knitting} formally introduces the basic concepts of circuit knitting via quasiprobability simulation.
In~\cref{sec_QPD_states} we discuss optimal QPDs for the local preparation of entangled quantum states.
We show that there is a one-to-one connection between the $\gamma$-factor and an entanglement measure called \emph{robustness of entanglement}~\cite{vidal99}.
As mentioned above, the optimal decomposition of gates can be reduced to states in some instances via gate teleportation, and therefore this section serves as a central tool for subsequent results.
In~\cref{sec_single_instance} we describe upper and lower bounds for the $\gamma$-factor of gates, which coincide for a wide range of two-qubit uniatries and hence give us a complete understanding about the optimal sampling overhead for these gates.
Finally, in~\cref{sec_how_use_submult} we explain how a circuit with many nonlocal gates can be simulated locally more efficiently using classical communication.

\section{Circuit knitting using quasiprobability simulation}\label{sec_circuit_knitting}
Consider two quantum systems $A$ and $B$.
Let $\mathrm{L}(A)$ denote the set of linear maps from $A$ to $A$ and  $\mathscr{S}(A)\coloneqq\mathrm{L}(\mathrm{L}(A))$ the set of linear superoperators from $A$ to $A$.
The set of completely positive trace-nonincreasing maps acting from $A$ to itself is denoted by $\TNCP(A)\subset \mathscr{S}(A)$.
Note that any general trace-nonincreasing map can be physically realized by postselection on a corresponding generalized measurement.\footnote{See~\cite{endo18} for an in-depth explanation of how to include postselection in a quasiprobability simulation.}

In the setting of quasiprobability simulation, $\TNCP(A)$ does not represent the most general operations that can be realized on a quantum computer.
In fact, some non-positive superoperators on $A$ can be simulated with some appropriate post-processing, as long as they are in the set
\begin{align}
\mathrm{D}(A) := \{\mathcal{E}\in\mathscr{S}(A) | \exists \mathcal{E}^{+},\mathcal{E}^- \in\TNCP(A) \text{ such that } \mathcal{E}=\mathcal{E}^+-\mathcal{E}^- \text{ and } \mathcal{E}^++\mathcal{E}^-\in\TNCP(A)\} \, .
\end{align}
The subtlety of considering $\mathrm{D}(A)$ instead of $\TPCP(A)$ is explained in more detail in~\cref{app_nonpositive_ops} and will not play a significant role for the results in this manuscript.
The set of local operations $\LO(A,B)$ is defined as all maps on $A\otimes B$ in a product form $\mathcal{A}\otimes\mathcal{B}$ where $\mathcal{B}\in\mathrm{D}(B)$ and $\mathcal{A}\in\mathrm{D}(A)$.

$\LOCC(A,B)$ is defined as the set of all maps that can be realized as a protocol containing only local operations in $\LO(A,B)$ and classical communication between $A$ and $B$.
Note that the formal definition of $\LOCC$ is notoriously complicated due to the possibly unbounded number of classical communication rounds and the fact that later local operations can in general depend on all the previous communication. The interested reader may consult~\cite{CLMOW14} for more details.
Finally, \smash{$\LOCCOW(A,B)$} is defined as the set of all maps that can be realized as a protocol containing only local operations in $\LO(A,B)$ and classical communication from $A$ to $B$.

Consider a quantum circuit on which we want to apply quasiprobabilistic circuit knitting.
We first group the involved qubits into two systems, denoted $\bar A$ and $\bar B$.
For simplicity, we first consider the case where there is only one single nonlocal gate $U$ acting on both $\bar A$ and $\bar B$, i.e., we can write the channel capturing the evolution of the quantum circuit as $\mathcal{G}_2\circ\mathcal{U}\circ\mathcal{G}_1$ where $\mathcal{U}$ is the unitary channel corresponding to $U$ and $\mathcal{G}_i$ only act locally on $\bar A$ and $\bar B$.
Assume that the resulting state of our circuit is measured according to some local observable $O_{\bar A}\otimes O_{\bar B}$ and that we want to determine the corresponding expectation value of the outcome $\tr{(O_{\bar A}\otimes O_{\bar B})\mathcal{G}_2\circ\mathcal{U}\circ\mathcal{G}_1(\proj{0}_{\bar A \bar B})}$ where $\proj{0}_{\bar A\bar B}$ is the initial state where all qubits are in the $\ket{0}$ state.
If we are given a QPD for $\mathcal{U}$ as shown in~\cref{eq_QPD_intro}, we can write
\begin{align}\label{eq_exp_value}
    \tr{(O_{\bar A}\otimes O_{\bar B})\mathcal{G}_2\circ\mathcal{U}\circ\mathcal{G}_1(\proj{0}_{\bar A \bar B})} = \sum_i p_i \tr{(O_{\bar A}\otimes O_{\bar B}) \mathcal{G}_2\circ\mathcal{F}_i\circ\mathcal{G}_1(\proj{0}_{\bar A \bar B})} \kappa \, \mathrm{sign}(a_i) \, ,
\end{align}
where $\kappa\coloneqq\sum_i|a_i|$ and $p_i\coloneqq|a_i|/\kappa$ is a probability distribution.
Following~\cref{eq_exp_value} we can estimate the expectation value of the large circuit using Monte Carlo sampling: For each shot of the circuit, we randomly replace the nonlocal gate $\mathcal{U}$ with one of the local gates $\mathcal{F}_i$ with probability $p_i$. The measurement outcome of this circuit is then weighted by $\kappa\,\mathrm{sign}(a_i)$.
By repeating this procedure many times and averaging the result, we can get an arbitrary good estimate of the desired quantity.
One can verify that the number of shots to estimate the expectation value to some fixed accuracy increases by $\kappa^2$~\cite{TBG17,endo18,Piv_masterThesis}.

The above procedure can be straightforwardly applied to circuits containing $n$ nonlocal gates $U_1,\dots,U_n$ for $n \in \N$.
For each of these $n$ gates we require a QPD and we denote the associated sampling overheads by $\kappa_1,\dots,\kappa_n$.
During each shot of the circuit that is executed, each gate gets independently randomly replaced by one of the gates in its decomposition.
The total number of samples of the circuit therefore increases by $\prod_{i=1}^n\kappa_i^2$.\footnote{Note that this total sampling overhead scales exponentially in $n$ which emphasizes that the method is designed for knitting a reasonable number of gates.}

To minimize the sampling overhead, it is desirable to use QPDs with the smallest possible sampling overhead (denoted $\kappa$ above).
Such QPDs are therefore called \emph{optimal} QPDs.
We call the smallest achievable value $\gamma$-factor.
\begin{definition} \label{def_gamma_factor}
  The $\gamma$-factor of $\mathcal{E}\in \mathscr{S}(A\otimes B)$ over \smash{$S \in\{\LO(A,B), \LOCCOW(A,B), \LOCC(A,B)\}$} is defined as
  \begin{align}\label{eq_def_gamma}
      \gamma_S(\mathcal{E}) \coloneqq \min \Big\{ \sum_{i=1}^m \lvert a_i \rvert :  \mathcal{E} = \sum\limits_{i=1}^m a_i \mathcal{F}_i \text{ where } m\geq 1, \mathcal{F}_i\in S \textnormal{ and } a_i \in \R \Big\} \, .
  \end{align}
\end{definition}
We explain in~\cref{app_min_achieved} why the minimum in~\cref{eq_def_gamma} is indeed achieved.
Since in many cases the involved systems $A$ and $B$ are clear by context, we often shorten the notation and simply write $\gamma_{\LO}(\mathcal{E})$, \smash{$\gamma_{\LOCCOW}(\mathcal{E})$}, and $\gamma_{\LOCC}(\mathcal{E})$.
As a minor abuse of notation, for a unitary acting on the system $A \otimes B$ we denote by $\gamma_S(U)$ the $\gamma$-factor of the unitary channel induced by $U$.
Note that~\cref{eq_trivial} follows directly from the definition.
The $\gamma$-factor is submultiplicative under the tensor product.
\begin{lemma}
  Let $\mathcal{E}_1\in \mathscr{S}(A_1\otimes B_1)$, $\mathcal{E}_2\in \mathscr{S}(A_2\otimes B_2)$, and $S \in$ $\{\LO(A_1\otimes A_2,B_1\otimes B_2)$, \smash{$\LOCCOW(A_1\otimes A_2$}, {$B_1\otimes B_2)$}, $\LOCC(A_1\otimes A_2,B_1\otimes B_2)\}$. Then
  \begin{align*}
      \gamma_S(\mathcal{E}_1\otimes \mathcal{E}_2) \leq \gamma_S(\mathcal{E}_1) \, \gamma_S(\mathcal{E}_2) \, .
  \end{align*}
\end{lemma}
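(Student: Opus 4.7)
The plan is the standard construction: take optimal QPDs for $\mathcal{E}_1$ and $\mathcal{E}_2$ separately, form their tensor product, and verify that each term still lies in $S$ for the combined bipartition $A_1A_2\mid B_1B_2$.

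More concretely, I would first invoke \cref{def_gamma_factor} to pick QPDs $\mathcal{E}_1=\sum_{i=1}^{m_1}a_i\mathcal{F}_i$ and $\mathcal{E}_2=\sum_{j=1}^{m_2}b_j\mathcal{G}_j$ with $\mathcal{F}_i\in S_1\coloneqq S(A_1,B_1)$, $\mathcal{G}_j\in S_2\coloneqq S(A_2,B_2)$, and $\sum_i|a_i|=\gamma_S(\mathcal{E}_1)$, $\sum_j|b_j|=\gamma_S(\mathcal{E}_2)$ (these minima are attained, as asserted in the excerpt). Then I would write
\begin{align*}
\mathcal{E}_1\otimes\mathcal{E}_2=\sum_{i,j}a_ib_j\,(\mathcal{F}_i\otimes\mathcal{G}_j),
\end{align*}
so that the one-norm of the combined coefficient vector equals $\big(\sum_i|a_i|\big)\big(\sum_j|b_j|\big)=\gamma_S(\mathcal{E}_1)\gamma_S(\mathcal{E}_2)$. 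Once the candidate decomposition is recognized as a valid QPD under $S(A_1\otimes A_2,B_1\otimes B_2)$, the inequality drops out of \cref{def_gamma_factor}.

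The nontrivial step is checking $\mathcal{F}_i\otimes\mathcal{G}_j\in S(A_1\otimes A_2,B_1\otimes B_2)$ for all three choices of $S$. For $S=\LO$, an element of $\LO(A_k,B_k)$ has product form $\mathcal{A}_k\otimes\mathcal{B}_k$ with $\mathcal{A}_k\in\mathrm{D}(A_k)$, $\mathcal{B}_k\in\mathrm{D}(B_k)$; after reordering tensor factors, $(\mathcal{A}_1\otimes\mathcal{B}_1)\otimes(\mathcal{A}_2\otimes\mathcal{B}_2)=(\mathcal{A}_1\otimes\mathcal{A}_2)\otimes(\mathcal{B}_1\otimes\mathcal{B}_2)$, so the statement reduces to showing that $\mathrm{D}$ is closed under tensor products. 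Writing $\mathcal{A}_k=\mathcal{A}_k^+-\mathcal{A}_k^-$ with $\mathcal{A}_k^++\mathcal{A}_k^-\in\TNCP$, the natural choice $\mathcal{E}^+\coloneqq\mathcal{A}_1^+\otimes\mathcal{A}_2^++\mathcal{A}_1^-\otimes\mathcal{A}_2^-$ and $\mathcal{E}^-\coloneqq\mathcal{A}_1^+\otimes\mathcal{A}_2^-+\mathcal{A}_1^-\otimes\mathcal{A}_2^+$ gives two completely positive maps with $\mathcal{E}^++\mathcal{E}^-=(\mathcal{A}_1^++\mathcal{A}_1^-)\otimes(\mathcal{A}_2^++\mathcal{A}_2^-)\in\TNCP$, and similarly for the $\mathcal{B}_k$.

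For $S=\LOCC$ and $S=\LOCCOW$, the closure under tensor products is essentially operational: run the two protocols in parallel, with the parties $A_1A_2$ and $B_1B_2$ performing the local operations of each protocol on the corresponding subsystems and relaying the classical messages of each protocol independently. The resulting protocol still consists of local operations from $\LO(A_1\otimes A_2,B_1\otimes B_2)$ interleaved with classical communication in the allowed direction(s), hence lies in $S(A_1\otimes A_2,B_1\otimes B_2)$. The main obstacle worth a line of care is exactly this parallel-composition argument in the $\LOCC$ case, because of the formal intricacies of $\LOCC$ mentioned in the excerpt (unbounded rounds, adaptive dependence on prior messages); however, since the two protocols are defined on disjoint subsystems and exchange no messages between them, no adaptivity is broken and the composition is a bona fide element of $\LOCC$. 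With the closure established, the proof is complete.
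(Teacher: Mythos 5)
Your proposal is correct and follows essentially the same route as the paper: tensor the optimal QPDs of $\mathcal{E}_1$ and $\mathcal{E}_2$ and observe that the resulting one-norm factorizes. You additionally spell out the closure of $S$ under tensor products (which the paper leaves implicit), and that verification is accurate.
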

\begin{proof}
  Let $\mathcal{E}_j=\sum_i a_{i,j}\mathcal{F}_{i,j}$ be the QPD that achieves the minimum in~\cref{def_gamma_factor} for $j=1,2$.
  This directly gives us a QPD
  \begin{align*}
      \mathcal{E}_1\otimes \mathcal{E}_2 = \sum\limits_{j_1,j_2} a_{1,j_1}a_{2,j_2}\mathcal{F}_{1,j_1}\otimes \mathcal{F}_{2,j_2} \, .
  \end{align*}
  The one-norm of these quasiprobability coefficients is precisely $\gamma_S(\mathcal{E}_1) \gamma_S(\mathcal{E}_2)$.
\end{proof}
As we will see later, for many unitaries the $\gamma$-factor is not only submultiplicative, but in fact even strictly submultiplicative under the tensor product. This property will be central in reducing the sampling overhead in our technique.
Another important property of the $\gamma$-factor is that it is invariant under local unitaries.
\begin{lemma}\label{lem_invariance_local_unitary}
  Consider unitaries $U_A\otimes U_B$ and $V_A\otimes V_B$ acting locally on $A$ and $B$, respectively.
  Denote by $\mathcal{U},\mathcal{V}\in \LO(A,B)$ the induced quantum channels acting on the system $A\otimes B$.
  For all $\mathcal{E}\in \mathscr{S}(A\otimes B)$ and \smash{$S \in\{\LO(A,B), \LOCCOW(A,B), \LOCC(A,B)\}$} we have
  \begin{align*}
      \gamma_S(\mathcal{U}\circ\mathcal{E}\circ \mathcal{V}) = \gamma_S(\mathcal{E}) \, .
  \end{align*}
\end{lemma}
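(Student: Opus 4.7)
The plan is to prove both inequalities $\gamma_S(\mathcal{U}\circ\mathcal{E}\circ\mathcal{V}) \le \gamma_S(\mathcal{E})$ and $\gamma_S(\mathcal{E}) \le \gamma_S(\mathcal{U}\circ\mathcal{E}\circ\mathcal{V})$ by exploiting the fact that local unitary channels are invertible and that the three classes $S\in\{\LO,\LOCCOW,\LOCC\}$ are closed under pre- and post-composition with local unitary channels.

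For the upper bound, I would start from an optimal QPD $\mathcal{E} = \sum_i a_i \mathcal{F}_i$ with $\mathcal{F}_i \in S$ and $\sum_i |a_i| = \gamma_S(\mathcal{E})$ (which is achieved by the remark pointing to~\cref{app_min_achieved}). Composing both sides with $\mathcal{U}$ on the left and $\mathcal{V}$ on the right yields
\begin{equation*}
\mathcal{U}\circ\mathcal{E}\circ\mathcal{V} = \sum_i a_i \bigl(\mathcal{U}\circ\mathcal{F}_i\circ\mathcal{V}\bigr).
\end{equation*}
The key step is to observe that $\mathcal{U}\circ\mathcal{F}_i\circ\mathcal{V} \in S$ for each $i$: for $S=\LO$ this is immediate because $\mathcal{U}=\mathcal{U}_A\otimes\mathcal{U}_B$ and $\mathcal{V}=\mathcal{V}_A\otimes\mathcal{V}_B$ preserve the product form, and for $\LOCC$ and $\LOCCOW$ it follows because one can prepend $\mathcal{V}_A,\mathcal{V}_B$ and append $\mathcal{U}_A,\mathcal{U}_B$ to any protocol realizing $\mathcal{F}_i$ without introducing any new classical communication. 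This expresses $\mathcal{U}\circ\mathcal{E}\circ\mathcal{V}$ as a QPD over $S$ with one-norm exactly $\sum_i |a_i| = \gamma_S(\mathcal{E})$, yielding $\gamma_S(\mathcal{U}\circ\mathcal{E}\circ\mathcal{V}) \le \gamma_S(\mathcal{E})$.

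For the matching lower bound I would use invertibility: $\mathcal{U}^{-1}$ and $\mathcal{V}^{-1}$ are themselves channels induced by the local unitaries $U_A^\dagger\otimes U_B^\dagger$ and $V_A^\dagger\otimes V_B^\dagger$, so they belong to $\LO$ as well. Writing $\mathcal{E} = \mathcal{U}^{-1}\circ(\mathcal{U}\circ\mathcal{E}\circ\mathcal{V})\circ\mathcal{V}^{-1}$ and applying the inequality already proved (with the roles of $\mathcal{E}$ and $\mathcal{U}\circ\mathcal{E}\circ\mathcal{V}$ swapped, and with $\mathcal{U}^{-1},\mathcal{V}^{-1}$ playing the role of the pre/post local unitaries) gives $\gamma_S(\mathcal{E}) \le \gamma_S(\mathcal{U}\circ\mathcal{E}\circ\mathcal{V})$.

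The only conceptual subtlety, which I would make explicit if needed, is verifying closure of $\LOCC$ and $\LOCCOW$ under composition with local unitaries; this is routine given the protocol-based definition of these classes, since absorbing a local unitary at the start or end of a protocol does not affect the classical communication pattern. Beyond this check, the proof is essentially a one-line manipulation of the defining QPD, so I do not expect any serious obstacle.
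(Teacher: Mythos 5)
Your proof is correct and follows essentially the same route as the paper's: both exploit that the classes $S$ are closed under pre- and post-composition with local unitary channels and that such channels are invertible, so every QPD of $\mathcal{E}$ maps to a QPD of $\mathcal{U}\circ\mathcal{E}\circ\mathcal{V}$ with the same one-norm and vice versa. The paper phrases this as a bijection between decompositions rather than as two separate inequalities, but the content is identical.
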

\begin{proof}
Any quasiprobability decomposition of $\mathcal{E}$ of the form $\mathcal{E} = \sum_i a_i \mathcal{F}_i$ automatically induces a quasiprobability decomposition of $\mathcal{U}\circ\mathcal{E}\circ \mathcal{V}$ of the form $\mathcal{U}\circ\mathcal{E}\circ \mathcal{V} = \sum_i a_i \mathcal{U}\circ\mathcal{F}_i\circ \mathcal{V}$ with identical sampling overhead. Similarly, any quasiprobability decomposition of $\mathcal{U}\circ\mathcal{E}\circ \mathcal{V}$, i.e., $\mathcal{U}\circ\mathcal{E}\circ \mathcal{V} = \sum_i a_i \mathcal{G}_i$ generates a quasiprobability decomposition of $\mathcal{E}$ of the form $\mathcal{E}= \sum_i a_i \, \mathcal{U}^{-1}\circ\mathcal{G}_i\circ\mathcal{V}^{-1}$ with identical sampling overhead. Notice that $\mathcal{U}\circ\mathcal{F}_i\circ \mathcal{V}\in S$ and $\mathcal{U}^{-1}\circ\mathcal{G}_i\circ\mathcal{V}^{-1}\in S$.
Therefore, the quasiprobability decompositions achieving the minimum in~\cref{eq_def_gamma} for $\mathcal{E}$ and $\mathcal{U}\circ\mathcal{E}\circ \mathcal{V}$ must have identical sampling overhead.
\end{proof}

\section{Local quasiprobability decompositions for states} \label{sec_QPD_states}
Consider a bipartite quantum state $\rho_{AB}$. The $\gamma$-factor of $\rho_{AB}$ over $S \in\{\LO(A,B)$, \smash{$\LOCCOW(A,B)$}, $\LOCC(A,B)\}$ is defined as
\begin{align}\label{eq_def_state_qpd}
      \gamma_S(\rho_{AB}) \coloneqq \min \Big\{ \gamma_S(\mathcal{E}) : \mathcal{E}\in \mathscr{S}(A\otimes B) \text{ s.t. } \mathcal{E}(\proj{0}_{AB}) = \rho_{AB} \Big\} \, ,
\end{align}
where $\proj{0}_{AB}$ denotes some fixed product state.\footnote{In our setting where $A,B$ are multi-qubit systems, one would typically consider it to be the product state where all qubits are in the $\ket{0}$ state.}
This quantity characterizes the optimal sampling overhead required to generate the bipartite state $\rho_{AB}$ using quasiprobabilistic circuit knitting.
The following result asserts that classical communication does not change the sampling overhead for the task of state preparation.
Denote by $\SEP(A,B)$ the set of separable quantum states on $A \otimes B$.
\begin{lemma} \label{lem_CC_not_helpful} 
For any bipartite density operator $\rho_{AB}$ we have
\begin{align*}
    \gamma_{\LO}(\rho_{AB}) = \gamma_{\LOCCOW}(\rho_{AB}) = \gamma_{\LOCC}(\rho_{AB}) \, ,
\end{align*}
where
\begin{align}\label{eq_gamma_state_prep}
    \gamma_{\LOCC}(\rho_{AB}) = \min \Big\{ a_+ + a_- :  \rho_{AB} = a_+\rho_+ - a_-\rho_- , \rho_{\pm}\in\SEP(A,B)\textnormal{ and } a_{\pm}\geq 0 \Big\} \, .
\end{align}
\end{lemma}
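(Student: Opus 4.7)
The plan is a sandwich argument. Denote by $\tau$ the right-hand side of~\eqref{eq_gamma_state_prep}, that is, the infimum of $a_+ + a_-$ over all decompositions $\rho_{AB} = a_+ \rho_+ - a_- \rho_-$ with $\rho_\pm \in \SEP(A,B)$ and $a_\pm \geq 0$. The trivial chain~\eqref{eq_trivial}, lifted from channels to states through the definition~\eqref{eq_def_state_qpd}, gives
\begin{equation*}
\gamma_{\LOCC}(\rho_{AB}) \leq \gamma_{\LOCCOW}(\rho_{AB}) \leq \gamma_{\LO}(\rho_{AB})
\end{equation*}
for free. It therefore suffices to establish the two outer inequalities $\gamma_{\LO}(\rho_{AB}) \leq \tau$ and $\gamma_{\LOCC}(\rho_{AB}) \geq \tau$; these pinch the whole chain to the common value $\tau$.

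For the upper bound, I would take an optimal separable decomposition $\rho_{AB} = a_+ \rho_+ - a_- \rho_-$ with $a_+ + a_- = \tau$. By definition of $\SEP(A,B)$, each $\rho_\pm$ admits a finite convex representation $\rho_\pm = \sum_k p_k^\pm \, \alpha_k^\pm \otimes \beta_k^\pm$ as a mixture of pure product states. For each such product state introduce the prepare-and-replace product channel $\mathcal{F}_k^\pm \in \LO(A,B)$ that discards its input and outputs $\alpha_k^\pm \otimes \beta_k^\pm$. Then
\begin{equation*}
\mathcal{E} \coloneqq \sum_k a_+ p_k^+ \, \mathcal{F}_k^+ \;-\; \sum_k a_- p_k^- \, \mathcal{F}_k^-
\end{equation*}
is a QPD over $\LO(A,B)$ satisfying $\mathcal{E}(\proj{0}_{AB}) = \rho_{AB}$, with coefficient one-norm exactly $a_+ + a_- = \tau$; hence $\gamma_{\LO}(\rho_{AB}) \leq \tau$.

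For the lower bound, I would start from any optimal QPD $\mathcal{E} = \sum_i a_i \mathcal{F}_i$ with $\mathcal{F}_i \in \LOCC(A,B)$ achieving $\sum_i |a_i| = \gamma_{\LOCC}(\rho_{AB})$. Unwinding the $\mathrm{D}$-structure inherited from the definition of $\LO$, each $\mathcal{F}_i$ splits as $\mathcal{F}_i^+ - \mathcal{F}_i^-$ where $\mathcal{F}_i^\pm$ are CPTN LOCC maps whose sum is itself trace-nonincreasing. Since a CPTN LOCC map cannot create entanglement out of a product input, $\mathcal{F}_i^\pm(\proj{0}_{AB}) = \mu_i^\pm \sigma_i^\pm$ for separable states $\sigma_i^\pm \in \SEP(A,B)$ and weights $\mu_i^\pm \geq 0$ with $\mu_i^+ + \mu_i^- \leq 1$. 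Substituting back into $\rho_{AB} = \sum_i a_i \mathcal{F}_i(\proj{0}_{AB})$, grouping the resulting separable contributions by the sign of $a_i$, and using convexity of $\SEP(A,B)$ to combine them into two normalized separable states, produces a decomposition $\rho_{AB} = \alpha_+ \rho_+' - \alpha_- \rho_-'$ with $\rho_\pm' \in \SEP(A,B)$ and $\alpha_+ + \alpha_- \leq \sum_i |a_i|(\mu_i^+ + \mu_i^-) \leq \sum_i |a_i|$, which delivers $\tau \leq \gamma_{\LOCC}(\rho_{AB})$.

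The main technical point that deserves care is the lower bound: one must invoke the standard fact that CPTN LOCC maps preserve separability and carefully track the defining constraint $\mathcal{F}_i^+ + \mathcal{F}_i^- \in \TNCP$ of the $\mathrm{D}$-structure, since this is precisely what pins down $\mu_i^+ + \mu_i^- \leq 1$ and prevents the one-norm from blowing up when the non-positive $\mathrm{D}$-maps appearing inside $\LOCC$ are re-expressed as differences of genuine CPTN channels. The upper bound is, by comparison, a transparent construction as soon as one observes that separable states admit explicit product-state mixture representations, each term of which is realized by a single local preparation channel.
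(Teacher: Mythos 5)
Your proposal is correct and follows essentially the same route as the paper's own proof in Appendix~C.1: both sides of the sandwich are established in the same way (prepare-and-replace product channels for $\gamma_{\LO}\leq\tau$, and decomposing each $\LOCC$ map into CPTN parts that preserve separability while tracking the $\mathrm{D}$-structure constraint $\mathcal{F}_i^+ + \mathcal{F}_i^- \in \TNCP$ for $\tau\leq\gamma_{\LOCC}$), and both then close the argument via the chain $\gamma_{\LOCC}\leq\gamma_{\LOCCOW}\leq\gamma_{\LO}$. The only differences are notational ($\tau$ versus $\gamma_{\SEP}$, explicit $\mu_i^\pm$ for the subnormalization traces).
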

The proof is given in~\cref{app_proof_lem_cc_not_helpful}.
It turns out that $\gamma_{\LOCC}(\rho_{AB})$ is an entanglement measure for the state $\rho_{AB}$.
More precisely, it is directly related to a well-studied entanglement monotone called \emph{robustness of entanglement}~\cite{vidal99}, which is defined as
\begin{align*}
    E(\rho_{AB})\coloneqq \min_{\sigma_{AB} \in \SEP} R(\rho_{AB}|| \sigma_{AB}) \, ,
\end{align*}
where $R(\rho_{AB} || \sigma_{AB})\coloneqq\min_{t\geq 0}\{ t : \frac{\rho_{AB}+t \sigma_{AB}}{1+t} \in \mathrm{SEP}\}$.
\begin{lemma} \label{lem_relation_entanglement}
For any density operator $\rho_{AB}$ we have $\gamma_{\LOCC}(\rho_{AB}) = 1 + 2 E(\rho_{AB})$.
\end{lemma}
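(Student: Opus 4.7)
The plan is to start from the closed-form expression for $\gamma_{\LOCC}(\rho_{AB})$ given by the preceding lemma, namely
\[
\gamma_{\LOCC}(\rho_{AB}) = \min\bigl\{ a_+ + a_- :\ \rho_{AB} = a_+ \rho_+ - a_- \rho_-,\ \rho_\pm \in \SEP(A,B),\ a_\pm \geq 0\bigr\},
\]
and massage this optimization into the form defining the robustness of entanglement. The central observation is a simple trace normalization: since $\rho_{AB}$, $\rho_+$ and $\rho_-$ are all unit-trace, taking the trace of the identity $\rho_{AB} = a_+\rho_+ - a_-\rho_-$ immediately gives $a_+ - a_- = 1$. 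Hence $a_+ = 1 + a_-$, and the quantity being minimized becomes
\[
a_+ + a_- = 1 + 2a_-.
\]

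Next, I would introduce the substitution $t \coloneqq a_-$ and $\sigma_{AB} \coloneqq \rho_-$. Rearranging $\rho_{AB} = (1+t)\rho_+ - t\sigma_{AB}$ yields
\[
\rho_+ = \frac{\rho_{AB} + t\,\sigma_{AB}}{1+t},
\]
so the condition $\rho_+ \in \SEP$ is precisely the condition appearing in the definition of $R(\rho_{AB}\|\sigma_{AB})$. Conversely, any feasible pair $(t,\sigma_{AB})$ for the robustness problem (with $\sigma_{AB} \in \SEP$) yields a feasible decomposition by setting $a_- = t$, $a_+ = 1+t$, $\rho_- = \sigma_{AB}$, and $\rho_+ = (\rho_{AB}+t\sigma_{AB})/(1+t)$. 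This gives a bijection between feasible points of the two optimizations under which the objective values satisfy $a_+ + a_- = 1 + 2t$. Minimizing both sides and recalling $E(\rho_{AB}) = \min_{\sigma_{AB}\in\SEP} R(\rho_{AB}\|\sigma_{AB})$ yields $\gamma_{\LOCC}(\rho_{AB}) = 1 + 2E(\rho_{AB})$.

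The main thing to be careful about — though it is not really a deep obstacle — is ensuring the bijection is well-defined at the boundary. If $\rho_{AB} \in \SEP$ then one can take $t = 0$ (equivalently $a_- = 0$), and both sides evaluate to $1$, so the formula handles the separable case correctly; one does not need a separate argument. Also, since $a_+ - a_- = 1 > 0$ forces $a_+ > 0$, there is no degenerate case where $\rho_+$ is ill-defined. With these checks in hand, the argument is essentially a one-line reparametrization plus the trace-normalization identity.
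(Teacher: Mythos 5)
Your proposal is correct and follows essentially the same route as the paper's proof: normalize by taking traces to force $a_+ - a_- = 1$, substitute $t = a_-$, and recognize the resulting constraint $\rho_+ = (\rho_{AB} + t\rho_-)/(1+t) \in \SEP$ as the defining condition of the robustness. The extra remarks about the bijection being well-defined at the boundary are fine but not a departure from the paper's argument.
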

\begin{proof}
Because $\tr{\rho_{AB}}=1$,~\cref{eq_gamma_state_prep} implies $a_+-a_-=1$.
Therefore we can write
\begin{align*}
    \gamma_{\LOCC}(\rho_{AB}) &= \min \Big\{ 1+2a_- :  \rho_{AB} = (1+a_-)\rho_+ - a_-\rho_- , \rho_{\pm}\in\SEP(A,B)\textnormal{ and } a_{-}\geq 0 \Big\} \\
    &= 1 + 2 \min \Big\{ t : \rho_{AB} = (1+t)\rho_+ - t\rho_- , \rho_{\pm}\in\SEP(A,B)\textnormal{ and } t\geq 0 \Big\} \\
    &= 1 + 2 \min \Big\{ t : \frac{\rho_{AB} + t\rho_-}{1+t} \in \SEP(A,B) , \rho_{-}\in\SEP(A,B)\textnormal{ and } t\geq 0 \Big\} \\
    &= 1 + 2 E(\rho_{AB}) \, .
\end{align*}
\end{proof}
As a consequence from~\cref{lem_relation_entanglement}, $\gamma_{\LOCC}$ inherits many properties from the robustness of entanglement, such as
\begin{itemize}
    \item \emph{Faithfullness:} $\gamma_{\LOCC}(\rho_{AB})=1$ if and only if $\rho_{AB}$ separable.
    \item \emph{Monotonicity:} For any LOCC protocol $\mathcal{E}$: $\gamma_{\LOCC}(\mathcal{E}(\rho_{AB}))\leq \gamma_{\LOCC}(\rho_{AB})$.
    \item \emph{Invariance under local untiary}: $\gamma_{\LOCC}( (U_A\otimes U_B)\rho_{AB} (U_A^{\dagger}\otimes U_B^{\dagger}))=\gamma_{\LOCC}(\rho_{AB})$ for unitaries $U_A,U_B$ acting on $A$ and $B$, respectively.
\end{itemize}

The original motivation for the robustness of entanglement was to measure how much noise one can add to a state before it becomes separable.
One can view~\cref{lem_relation_entanglement} as a new operational interpretation of the entanglement measure: The robustness of entanglement characterizes the simulation overhead of preparing some entangled state with local operations.
For general mixed states $\rho_{AB}$ computing the robustness of entanglement is nontrivial since even determining if $\rho_{AB}$ is entangled (i.e.~$E(\rho_{AB})>0$) or not (i.e.~$E(\rho_{AB})=0$) is known to be NP-hard~\cite{gurvits03}.
Luckily, for pure states detecting entanglement is much simpler and can be done for example via the Schmidt coefficients. Similarly, it has been shown that for pure states there is an explicit expression for the robustness of entanglement, which by~\cref{lem_relation_entanglement} gives us an explicit expression for the $\gamma$-factor of the optimal QPD.
\begin{lemma}[\cite{vidal99}]  \label{lem_vidal} 
Let $\ket{\psi}_{AB}$ be a bipartite state with Schmidt coefficients $\{\alpha_i\}_i$. Then
\begin{align*}
    E(\proj{\psi}) = \left(\sum_i \alpha_i\right)^2 - 1 \, .
\end{align*}
Furthermore, the following QPD 
\begin{align*}
    \proj{\psi} = \big(1+E(\proj{\psi})\big)\rho^{+} - E(\proj{\psi}) \rho ^- \, ,
\end{align*}
where \smash{$\rho^- = \frac{1}{E(\proj{\psi})}\sum_{i \ne j}\alpha_i\alpha_j\proj{ij} \in \SEP$} and \smash{$\rho^+ = \frac{1}{1+E(\proj{\psi})}\left(\proj{\psi} +  E(\proj{\psi})\rho^-\right)$} $\in \SEP $ is optimal. 
\end{lemma}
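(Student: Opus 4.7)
Write $\ket\psi = \sum_i \alpha_i \ket{ii}$ in its Schmidt decomposition with $\alpha_i \geq 0$, and set $S \deq \sum_i \alpha_i$ and $E^\star \deq S^2 - 1$. The plan is to establish $E(\proj\psi) = E^\star$ by matching upper and lower bounds, and then check that the stated $\rho^+, \rho^-$ provide a valid decomposition. Optimality of the QPD will follow since its sampling overhead is $(1+E^\star) + E^\star = 1 + 2E^\star = \gamma_{\LOCC}(\proj\psi)$ by~\cref{lem_relation_entanglement}.

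For the \emph{lower bound} $E(\proj\psi) \geq E^\star$, I would use the positive operator $W \deq \bigl(\sum_i \ket{ii}\bigr)\bigl(\sum_j \bra{jj}\bigr)$. For any unit product state $\ket{a}\otimes\ket{b}$, a direct calculation combined with the Cauchy--Schwarz inequality gives
\begin{align*}
    \bra{ab} W \ket{ab} = \Bigl|\sum_i \langle i|a\rangle \langle i|b\rangle \Bigr|^2 \leq \|\ket{a}\|^2 \|\ket{b}\|^2 = 1 \, ,
\end{align*}
so $\tr(W\sigma) \leq 1$ for every $\sigma \in \SEP$. For any decomposition $\proj\psi = (1+t)\rho^+ - t\rho^-$ with $\rho^\pm \in \SEP$ and $t \geq 0$, taking the trace with $W$ produces
\begin{align*}
    S^2 = \bra\psi W \ket\psi = (1+t)\tr(W\rho^+) - t\tr(W\rho^-) \leq 1 + t \, ,
\end{align*}
whence $t \geq E^\star$, as required.

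For the \emph{upper bound}, I would verify the explicit decomposition of the lemma. Separability of $\rho^-$ is immediate since it is a non-negative combination of product states $\proj{i}\otimes\proj{j}$, and the identity $\proj\psi = (1+E^\star)\rho^+ - E^\star \rho^-$ holds by the very definition of $\rho^+$. The main technical obstacle is to show $\rho^+ \in \SEP$. My approach is to exhibit $(1+E^\star)\rho^+ = \proj\psi + E^\star \rho^-$ as a phase-averaged mixture of product states: for $\theta \in [0, 2\pi]^d$ set
\begin{align*}
    \ket{\alpha(\theta)} \deq \sum_i \sqrt{\alpha_i}\, \ee^{\ci \theta_i}\ket{i} \, , \qquad \ket{\beta(\theta)} \deq \sum_i \sqrt{\alpha_i}\, \ee^{-\ci \theta_i}\ket{i} \, ,
\end{align*}
and consider $\mathbb{E}_\theta\bigl[\proj{\alpha(\theta)} \otimes \proj{\beta(\theta)}\bigr]$ with $\theta$ uniformly distributed. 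The coefficient of $\ket{ij}\bra{kl}$ in this average equals $\sqrt{\alpha_i\alpha_j\alpha_k\alpha_l}\cdot \mathbb{E}\bigl[\ee^{\ci(\theta_i-\theta_k-\theta_j+\theta_l)}\bigr]$, which is nonzero precisely when the multisets $\{i,l\}$ and $\{j,k\}$ coincide, i.e.~in the two (possibly overlapping) cases $(i=j,\, k=l)$ and $(i=k,\, j=l)$. A short case analysis then shows that the resulting coefficients match those of $\proj\psi + E^\star \rho^-$ term by term: the first case reproduces the $\alpha_i\alpha_k$ entries of $\proj\psi$, while the second case reproduces the $\alpha_i\alpha_j$ diagonal entries of $E^\star\rho^-$ (the two cases coincide harmlessly at $i=j=k=l$). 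Hence $(1+E^\star)\rho^+$ is a continuous convex combination of product states, so $\rho^+ \in \SEP$, which completes the upper bound and establishes optimality.
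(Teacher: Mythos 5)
The paper states this lemma as a citation to~\cite{vidal99} and does not include a proof of its own, so there is no in-paper argument to compare against; you are supplying the missing derivation. Your proof is correct. The lower bound uses the positive witness $W=\bigl(\sum_i\ket{ii}\bigr)\bigl(\sum_j\bra{jj}\bigr)$, for which $\tr(W\sigma)\leq 1$ on all separable $\sigma$ by Cauchy--Schwarz while $\bra{\psi}W\ket{\psi}=S^2$; pairing this with any separable decomposition $\proj{\psi}=(1+t)\rho^+-t\rho^-$ and using $W\geq 0$ gives $t\geq S^2-1$, exactly as you argue. For the upper bound, your phase-averaging construction $\mathbb{E}_\theta\bigl[\proj{\alpha(\theta)}\otimes\proj{\beta(\theta)}\bigr]$ does reproduce $\proj{\psi}+E^\star\rho^-$: the surviving coefficients of $\ket{ij}\bra{kl}$ are precisely those with $\{i,l\}=\{j,k\}$, splitting into $(i=j,\,k=l)$, which gives $\proj{\psi}$, and $(i=k,\,j=l)$ with $i\neq j$, which gives $E^\star\rho^-$, with the diagonal overlap $i=j=k=l$ counted once. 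After dividing each $\proj{\alpha(\theta)}\otimes\proj{\beta(\theta)}$ by its trace $S^2=1+E^\star$ you obtain $\rho^+$ as a convex mixture of normalized product states, so $\rho^+\in\SEP$. Optimality of the QPD then follows from~\cref{lem_relation_entanglement} since $(1+E^\star)+E^\star=1+2E(\proj{\psi})=\gamma_{\LOCC}(\proj{\psi})$. This is the standard argument for computing the robustness of entanglement of pure states (a maximally-entangled-state witness plus a phase-randomized separable mixture), so it recovers the proof in~\cite{vidal99} rather than providing a genuinely different route. One small cosmetic point: the step $\sum_i|a_i|^2\sum_i|b_i|^2\leq\|\ket{a}\|^2\|\ket{b}\|^2$ deserves a word when the Schmidt basis does not span $A$ or $B$, but the inequality only strengthens and nothing breaks.
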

Note that there is a simple way to compute the Schmidt coefficients $\{\alpha_i\}_i$ of $\ket{\psi}_{AB}$. For $\rho_{AB}=\proj{\psi}_{AB}$, the eigenvalues of $\rho_A$ are exactly given by $\{\alpha^2_i\}_i$.

\begin{corollary}\label{corr_gamma_ebit}
The $\gamma$-factor for the local preparation of $n$ Bell pairs $\{\ket{\Psi_i}_{A_i B_i}\}_{i=1}^n$ for $A=A_1\otimes \ldots \otimes A_n$ and $B=B_1 \otimes \ldots \otimes B_n$ is given by
  \begin{align*}
      \gamma_{\LOCC(A:B)}(\proj{\Psi_1}_{A_1 B_1} \otimes \ldots \otimes \proj{\Psi_n}_{A_n B_n}) = 2^{n+1} - 1 \, .
  \end{align*}
\end{corollary}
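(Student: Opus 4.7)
The plan is to combine Lemma~\ref{lem_relation_entanglement} with Lemma~\ref{lem_vidal}, since the target state is pure. The only real content is computing the Schmidt coefficients of the tensor product $\ket{\Psi_1}_{A_1 B_1} \otimes \cdots \otimes \ket{\Psi_n}_{A_n B_n}$ with respect to the bipartition $A:B$ where $A = A_1 \otimes \cdots \otimes A_n$ and $B = B_1 \otimes \cdots \otimes B_n$.

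First I would observe that each Bell pair $\ket{\Psi_i}_{A_i B_i}$ has Schmidt decomposition across $A_i:B_i$ with two Schmidt coefficients, both equal to $1/\sqrt{2}$. Since the Schmidt decomposition of a tensor product of pure states is itself a tensor product of the individual Schmidt decompositions, the joint state $\ket{\Psi_1} \otimes \cdots \otimes \ket{\Psi_n}$ has, with respect to the $A:B$ cut, exactly $2^n$ Schmidt coefficients, each equal to $(1/\sqrt{2})^n = 2^{-n/2}$. (Concretely, one can also verify this by reading off the eigenvalues of the reduced density operator $\rho_A$, which is the $n$-fold tensor product of maximally mixed single-qubit states and hence has all $2^n$ eigenvalues equal to $2^{-n}$; the square roots of these give the claimed Schmidt coefficients, in line with the remark following Lemma~\ref{lem_vidal}.)

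Next, I would apply Lemma~\ref{lem_vidal} directly. With $2^n$ Schmidt coefficients each equal to $2^{-n/2}$, the formula yields
\begin{equation*}
    E\bigl(\proj{\Psi_1}_{A_1 B_1} \otimes \cdots \otimes \proj{\Psi_n}_{A_n B_n}\bigr) = \left(\sum_{i=1}^{2^n} 2^{-n/2}\right)^{\!2} - 1 = \bigl(2^{n/2}\bigr)^2 - 1 = 2^n - 1.
\end{equation*}
Finally, applying Lemma~\ref{lem_relation_entanglement} gives
\begin{equation*}
    \gamma_{\LOCC(A:B)}\bigl(\proj{\Psi_1}_{A_1 B_1} \otimes \cdots \otimes \proj{\Psi_n}_{A_n B_n}\bigr) = 1 + 2(2^n - 1) = 2^{n+1} - 1,
\end{equation*}
which is the claimed identity.

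There is really no main obstacle here; the corollary is essentially a specialization of Lemma~\ref{lem_vidal} to $\ket{\Psi}^{\otimes n}$. The only point worth being careful about is making sure the Schmidt decomposition is taken across the coarse bipartition $A:B$ rather than across each individual $A_i:B_i$, since it is precisely this ``grouping'' of the $n$ Bell pairs into a single bipartite system that produces the strict submultiplicativity noted in~\eqref{eq_submult} of the introduction.
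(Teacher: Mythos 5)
Your proposal is correct and follows exactly the paper's route: identify the $2^n$ Schmidt coefficients of the $n$-fold Bell pair as all equal to $2^{-n/2}$, then chain Lemma~\ref{lem_vidal} and Lemma~\ref{lem_relation_entanglement}. You simply carry out the arithmetic more explicitly than the paper does.
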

\begin{proof}
Note that $n$ Bell pairs have $2^n$ Schmidt coefficients which are all equal to $(1/\sqrt{2})^n$. As a result,~\Cref{lem_vidal} together with~\Cref{lem_relation_entanglement} implies the assertion.
\end{proof}
From~\Cref{corr_gamma_ebit} we see that the optimal sampling overhead is strictly submultiplicative under the tensor product. More precisely, for a Bell state $\ket{\Psi}$ we have
\begin{align*}
    \sqrt{\gamma_{\LOCC(A_1 \otimes A_2:B_1 \otimes B_2)}(\proj{\Psi}_{A_1 B_1} \otimes \proj{\Psi}_{A_2 B_2})} = \sqrt{7} < 3 = \gamma_{\LOCC(A:B)}(\proj{\Psi}_{AB}) \, .
\end{align*}
This shows that locally preparing two Bell states at once is cheaper than individually preparing two Bell states in sequence. The reason is that in the parallel preparation process we can make use of entanglement between $A_1$, $A_2$ and $B_1$, $B_2$, respectively.

\section{Optimal decompositions for single instances}\label{sec_single_instance}
The goal of this section is to understand the optimal sampling overhead for a single instance of a two-qubit unitary $U$ with local operations with or without classical communication. In technical terms, we want to characterize $\gamma_{\LOCC}(U)$, $\gamma_{\LOCCOW}(U)$ and $\gamma_{\LO}(U)$.
We already know the trivial relation from~\cref{eq_trivial}.
Our main result (see~\cref{thm_main} and~\cref{cor_bounds_match}) is that for a large class of two-qubit unitaries, all three quantities are equal.
We show this result by first finding a lower bound to $\gamma_{\LOCC}(U)$ and compare that to a previously known upper bound~\cite{MF_21} to $\gamma_{\LO}(U)$ and then showing that both bounds coincide for our considered class of two-qubit unitaries.

For a unitary $U_{AB}$ on $A\otimes B$ its Choi state is defined as
\begin{align*} 
  \ket{\Phi_U}_{AA'BB'} \coloneqq \left( U_{AB}\otimes \id_{A'B'}\right) \left( \ket{\Psi}_{AA'}\otimes \ket{\Psi}_{BB'} \right) \, ,
\end{align*}
where $A'$ and $B'$ are identical copies to the systems $A$ and $B$, respectively. The states $\ket{\Psi}_{AA'}$ and $\ket{\Psi}_{BB'}$ are maximally entangled states between $A$ and $A'$, or between $B$ and $B'$ respectively.

\begin{lemma}\label{lem_lower_bound}
  Let $U_{AB}$ be a bipartite unitary with Choi state $\ket{\Phi_U}_{AA'BB'}$ that has Schmidt coefficients $\{\alpha_i\}_i$ when considered as bipartite state over the systems $A\otimes A'$ and $B \otimes B'$. Then,
  \begin{align*}
      \gamma_{\LOCC}(U) \geq \gamma_{\LOCC}(\proj{\Phi_U}) = 2\left(\sum_{i} \alpha_i \right)^2 - 1 \, .
  \end{align*}
\end{lemma}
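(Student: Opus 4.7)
My plan is to prove the two parts of the statement separately: first the closed-form value $\gamma_{\LOCC}(\proj{\Phi_U}) = 2(\sum_i \alpha_i)^2 - 1$, and then the inequality $\gamma_{\LOCC}(U) \geq \gamma_{\LOCC}(\proj{\Phi_U})$.

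For the closed-form value, I would directly combine the two pure-state results already established earlier in the paper. Since $\ket{\Phi_U}_{AA'BB'}$ is a pure state with Schmidt coefficients $\{\alpha_i\}_i$ across the bipartition $AA'\,|\,BB'$, \cref{lem_vidal} yields $E(\proj{\Phi_U}) = (\sum_i \alpha_i)^2 - 1$, and \cref{lem_relation_entanglement} then gives
\begin{align*}
\gamma_{\LOCC}(\proj{\Phi_U}) = 1 + 2 E(\proj{\Phi_U}) = 2\left(\sum_i \alpha_i\right)^2 - 1.
\end{align*}

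For the inequality, the idea is that any LOCC quasiprobability decomposition of the channel $\mathcal{U}$ induces, via Choi-Jamio{\l}kowski, an LOCC decomposition of a preparation channel for $\proj{\Phi_U}$ with identical coefficient one-norm. Explicitly, let $\mathcal{U} = \sum_i a_i \mathcal{F}_i$ with $\mathcal{F}_i \in \LOCC(A,B)$ and $\sum_i |a_i| = \gamma_{\LOCC}(U)$ be an optimal decomposition. Let $\mathcal{P} \in \LO(AA', BB')$ denote a channel that prepares $\proj{\Psi}_{AA'} \otimes \proj{\Psi}_{BB'}$ starting from $\proj{0}_{AA'BB'}$; this is a local operation across the cut $AA'\,|\,BB'$ since $A'$ sits on the $A$-side and $B'$ on the $B$-side, so each maximally entangled pair is prepared locally. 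Then
\begin{align*}
\proj{\Phi_U} &= (\mathcal{U}_{AB} \otimes \mathcal{I}_{A'B'}) \circ \mathcal{P}(\proj{0}_{AA'BB'}) \\
&= \sum_i a_i \, \big((\mathcal{F}_i \otimes \mathcal{I}_{A'B'}) \circ \mathcal{P}\big)(\proj{0}_{AA'BB'}),
\end{align*}
and each map $(\mathcal{F}_i \otimes \mathcal{I}_{A'B'}) \circ \mathcal{P}$ lies in $\LOCC(AA', BB')$. By the definition in \cref{eq_def_state_qpd}, this exhibits a preparation channel for $\proj{\Phi_U}$ whose associated coefficient norm is at most $\gamma_{\LOCC}(U)$, so $\gamma_{\LOCC}(\proj{\Phi_U}) \leq \gamma_{\LOCC}(U)$.

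The only subtle point, which I expect to be a routine verification rather than a genuine obstacle, is that the lifted maps $\mathcal{F}_i \otimes \mathcal{I}_{A'B'}$ remain in $\LOCC(AA', BB')$ on the enlarged bipartition. This is immediate from the operational definition: any LOCC protocol on $(A,B)$ consists of local actions on each side interleaved with classical messages, and extending each local action by the identity on the co-located ancilla $A'$ or $B'$ neither introduces any new nonlocal operation nor requires any additional classical communication. Once this placement of the ancillary systems is fixed, the general unitary bound reduces cleanly to the pure-state formula supplied by Vidal.
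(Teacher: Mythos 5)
Your proposal is correct and takes essentially the same approach as the paper. The paper's proof is a one-line resource-theoretic observation --- the Choi state is reachable from $U$ by composing with the free operation of local Bell-pair preparation, so $\gamma_{\LOCC}(\proj{\Phi_U}) \le \gamma_{\LOCC}(U)$ --- and you have simply spelled out the bookkeeping (lifting each $\mathcal{F}_i$ to $(\mathcal{F}_i\otimes\mathcal{I}_{A'B'})\circ\mathcal{P}$ and checking it stays in $\LOCC(AA',BB')$) that the paper leaves implicit, while the closed-form value is obtained from \cref{lem_vidal} and \cref{lem_relation_entanglement} exactly as in the paper.
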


\begin{proof}
  The desired statement follows from a simple resource-theoretic argument.
  The Choi state $\ket{\Phi_U}$ of $U$ has $\gamma_{\LOCC}(\proj{\Phi_U}) = 2(\sum_{i} \alpha_i)^2 - 1$ by~\cref{lem_vidal}.
  Since the Choi state can be physically realized with an instance of $U$, its  $\gamma$-factor cannot possibly be larger.
\end{proof}
Any two-qubit unitary has a KAK decomposition
\begin{align} \label{eq_standard_dec}
    U_{AB} = \left(V_1\otimes V_2\right)  \exp\left( \ci\theta_X X \otimes X + \ci\theta_Y Y \otimes Y + \ci\theta_Z Z\otimes Z \right) \left(V_3\otimes V_4\right)  \, ,
\end{align}
for some single-qubit unitaries $V_1,V_2,V_3,V_4$, and $\theta_X,\theta_Y,\theta_Z \in\mathbb{R}$ such that $|\theta_Z| \leq \theta_Y \leq \theta_X \leq \pi/4$.
Evaluating the exponential function, this can be rewritten as
\begin{align}  \label{eq_standard_dec_u}
    U_{AB} = \left(V_1\otimes V_2 \right)  \left(u_0 \id \otimes \id + u_1 X \otimes X + u_2 Y \otimes Y + u_3 Z\otimes Z \right)  \left(V_3\otimes V_4\right) \, ,
\end{align}
where $u_0,u_1,u_2,u_3\in\mathbb{C}$ depend on $\theta_X,\theta_Y,\theta_Z$ and fulfill $\sum_{i=0}^3 |u_i|^2 = 1$.
\begin{lemma}[{\cite{MF_21}}]\label{lem_upper_bound}
  For any two-qubit unitary $U$ we have
  \begin{align*}
      \gamma_{\LO}(U) \leq 1 + \sum_{i\neq j} \left( \lvert u_iu_j^* + u_ju_i^*\rvert + \lvert u_iu_j^* - u_ju_i^*\rvert \right)\, .
  \end{align*}
\end{lemma}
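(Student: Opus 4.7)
The approach is to construct an explicit quasiprobability decomposition of $\mathcal U$ into operations in $\LO(A,B)$ whose one-norm realises the claimed bound, following the construction of~\cite{MF_21}. By \cref{lem_invariance_local_unitary}, $\gamma_{\LO}$ is invariant under multiplication by local unitaries, and the right-hand side of the bound depends only on $u_0,u_1,u_2,u_3$, so it suffices to treat the ``core'' unitary $\tilde U = u_0\,\id\otimes\id + u_1\,X\otimes X + u_2\,Y\otimes Y + u_3\,Z\otimes Z$ obtained from~\cref{eq_standard_dec_u} by absorbing the outer factors $V_1\otimes V_2$ and $V_3\otimes V_4$.

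I would then expand the induced channel in the Pauli basis $\{P_i\}_{i=0}^{3}=\{\id\otimes\id,\,X\otimes X,\,Y\otimes Y,\,Z\otimes Z\}$, giving
\[
\tilde{\mathcal U}(\rho) \;=\; \sum_{i=0}^{3}|u_i|^2\,P_i\rho P_i \;+\; \sum_{0\leq i<j\leq 3}\bigl[u_i\bar u_j\,P_i\rho P_j + u_j\bar u_i\,P_j\rho P_i\bigr].
\]
The diagonal part is a convex combination of the local Pauli channels $\rho\mapsto P_i\rho P_i$ with total weight $\sum_i|u_i|^2=1$, accounting for the additive ``$1$'' in the bound. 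For each off-diagonal pair, writing $u_i\bar u_j = \alpha_{ij}+\ci\beta_{ij}$ with $\alpha_{ij},\beta_{ij}\in\R$, one obtains the Hermitian-preserving combination $\alpha_{ij}\,\mathcal S_{ij} + \ci\beta_{ij}\,\mathcal A_{ij}$ where $\mathcal S_{ij}(\rho)\coloneqq P_i\rho P_j + P_j\rho P_i$ and $\mathcal A_{ij}(\rho)\coloneqq P_i\rho P_j - P_j\rho P_i$. Since $|u_i\bar u_j + u_j\bar u_i| = 2|\alpha_{ij}|$ and $|u_i\bar u_j - u_j\bar u_i| = 2|\beta_{ij}|$, the pair $(i,j)$ should contribute exactly this amount to the QPD one-norm.

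Exploiting the tensor factorisation $P_i = A_i\otimes B_i$ with $A_i,B_i\in\{\id,X,Y,Z\}$, one reduces the construction to finding short local QPDs of the single-qubit HP superoperators $A\rho A'+A'\rho A$ and $\ci(A\rho A'-A'\rho A)$ via the identities
\[
A\rho A'+A'\rho A \;=\; \tfrac12\bigl[(A+A')\rho(A+A') - (A-A')\rho(A-A')\bigr],
\]
\[
\ci(A\rho A'-A'\rho A) \;=\; \tfrac12\bigl[(A-\ci A')\rho(A+\ci A') - (A+\ci A')\rho(A-\ci A')\bigr].
\]
On a single qubit, $(A\pm A')/\sqrt 2$ and $(A\pm \ci A')/\sqrt 2$ are either unitaries (when $A,A'$ anticommute) or reduce to projectors $(\id\pm P)/2$ or ladder operators $\sigma_\pm$ (when one of $A,A'$ equals $\id$), so each of the resulting single-qubit superoperators lies in $D$ with a short explicit decomposition. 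Assembling these into tensor products on $A\otimes B$ yields operations in $\LO(A,B)$, and the coefficient bookkeeping is arranged so that each pair $(i,j)$ contributes precisely $|u_i\bar u_j + u_j\bar u_i|+|u_i\bar u_j - u_j\bar u_i|$ to the total one-norm.

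The main obstacle I foresee is executing the assembly \emph{locally}: the naive bipartite identity $\mathcal S_{ij}(\rho)=\tfrac12[(P_i+P_j)\rho(P_i+P_j) - (P_i-P_j)\rho(P_i-P_j)]$ uses Kraus operators $P_i\pm P_j$ which are not tensor products, so one must instead carry out the decomposition qubit-by-qubit on the two factors of $P_i=A_i\otimes B_i$ and then recombine, tracking signs and normalisations carefully so that the bipartite contribution telescopes to exactly $|u_i\bar u_j + u_j\bar u_i|+|u_i\bar u_j - u_j\bar u_i|$. Summing over pairs and adding the diagonal weight $1$ yields the stated inequality.
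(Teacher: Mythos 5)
The paper does not prove this lemma at all: it is imported verbatim from~\cite{MF_21} and used as a black box (indeed the paper later says only that the decomposition of~\cite{MF_21} is shown to be optimal for certain gates). So there is no in-paper proof to compare against, and your reconstruction must be judged on its own.

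Your overall strategy---reduce to the core $\tilde U=\sum_i u_iP_i$ by local-unitary invariance, expand the channel in the local Pauli basis, peel off the convex diagonal part of total weight $\sum_i|u_i|^2=1$, and for each off-diagonal pair $\{i,j\}$ handle the Hermitian-preserving combination $\alpha_{ij}\mathcal S_{ij}+\ci\beta_{ij}\mathcal A_{ij}$---is the right shape and does reach the stated bound. Two remarks. First, the ``main obstacle'' you flag is not really one: writing $\mathcal S_A^{ij}(\sigma)=A_i\sigma A_j+A_j\sigma A_i$ and $\mathcal D_A^{ij}(\sigma)=A_i\sigma A_j-A_j\sigma A_i$ (and likewise on $B$), the bipartite pieces factor cleanly as
\begin{align*}
\mathcal S_{ij} \;=\; \tfrac12\bigl[\mathcal S_A^{ij}\otimes\mathcal S_B^{ij} \;-\; (\ci\mathcal D_A^{ij})\otimes(\ci\mathcal D_B^{ij})\bigr],\qquad
\ci\mathcal A_{ij} \;=\; \tfrac12\bigl[\mathcal S_A^{ij}\otimes(\ci\mathcal D_B^{ij}) \;+\; (\ci\mathcal D_A^{ij})\otimes\mathcal S_B^{ij}\bigr],
\end{align*}
so non-product Kraus operators never arise. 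Second---and this is the point your sketch defers but which actually fixes the constant---the one-norm must be counted using $\mathrm{D}$-membership, not the Kraus one-norms of the single-qubit factors. Each of $\tfrac12\mathcal S_A^{ij}$ and $\tfrac12\ci\mathcal D_A^{ij}$ is a \emph{single} element of $\mathrm D(A)$ (it equals $\mathcal E^+-\mathcal E^-$ with $\mathcal E^\pm\in\TNCP(A)$ and $\mathcal E^++\mathcal E^-\in\TNCP(A)$; e.g.~for $A_i=\id$, $\tfrac12\mathcal S_A^{ij}(\sigma)=\Pi_+\sigma\Pi_+-\Pi_-\sigma\Pi_-$ with $\Pi_\pm=(\id\pm A_j)/2$). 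Hence $\tfrac12\mathcal S_A^{ij}\otimes\tfrac12\mathcal S_B^{ij}$ is a single element of $\LO(A,B)$ and contributes one unit, not four, to the one-norm; the recombination identities above then give one-norm exactly $4$ for both $\mathcal S_{ij}$ and $\ci\mathcal A_{ij}$, so the pair $\{i,j\}$ contributes $4(|\alpha_{ij}|+|\beta_{ij}|)=2\bigl(|u_iu_j^*+u_ju_i^*|+|u_iu_j^*-u_ju_i^*|\bigr)$, matching the sum $\sum_{i\neq j}$ exactly. If one instead multiplies naive Kraus one-norms the count comes out too large and the bound is missed. Finally, a small slip in your operator catalogue: when $A_i,A_j$ anticommute, $(A_i\pm A_j)/\sqrt2$ is unitary but $(A_i\pm\ci A_j)/\sqrt2$ is \emph{not} (it is a normalised ladder-type operator with $K^\dagger K$ a projector); the situation reverses when one of $A_i,A_j$ is the identity, where $(\id\pm\ci A_j)/\sqrt2$ \emph{is} unitary and $(\id\pm A_j)/2$ is a projector. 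Both cases give $\mathrm{D}$-elements, so the construction is unaffected, but the attributions in your sketch are crossed.
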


\begin{theorem} \label{thm_main}
  Let $U$ be a two-qubit unitary with parameters $\theta_X,\theta_Y,\theta_Z$ according to the KAK decomposition from~\cref{eq_standard_dec} such that one of the following two conditions holds:
  \begin{enumerate}
      \item $\theta_Z=0$ \label{it_first}
      \item $\theta_X=\theta_Y=\theta_Z=\frac{\pi}{4}$. \label{it_second}
  \end{enumerate}
  Then,
  \begin{align*}
      \gamma_{\LO}(U) 
      &= \gamma_{\LOCCOW}(U)  \\
      &= \gamma_{\LOCC}(U)    \\
      &=1 + 4|\sin\theta_X \cos \theta_X | + 4|\sin\theta_Y \cos \theta_Y | + 8|\sin\theta_X \cos\theta_X \sin\theta_Y \cos\theta_Y|  \\
      &= 2\left(\sum_{i} \alpha_i \right)^2 - 1 \, , 
  \end{align*}
  where $\{\alpha_i\}_i$ are the Schmidt coefficients of the Choi state of $U$ between $A\otimes A'$ and $B\otimes B'$.
\end{theorem}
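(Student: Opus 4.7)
The plan is to exploit the chain $\gamma_{\LOCC}(U) \leq \gamma_{\LOCCOW}(U) \leq \gamma_{\LO}(U)$ from~\cref{eq_trivial}, so that it suffices to verify that the lower bound of~\cref{lem_lower_bound} matches the upper bound of~\cref{lem_upper_bound} under either hypothesis, and that both equal the stated trigonometric expression.

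The first step is to determine the Schmidt coefficients of $\ket{\Phi_U}$. By~\cref{lem_invariance_local_unitary}, the KAK factors $V_1,V_2,V_3,V_4$ can be absorbed without affecting $\gamma_S(U)$, reducing the question to the middle factor $M=\exp(\ci\theta_X X\otimes X+\ci\theta_Y Y\otimes Y+\ci\theta_Z Z\otimes Z)$. Since $X\otimes X$, $Y\otimes Y$, $Z\otimes Z$ pairwise commute, $M$ admits an expansion $M = u_0\,\id\otimes\id + u_1\,X\otimes X + u_2\,Y\otimes Y + u_3\,Z\otimes Z$ with explicit trigonometric coefficients $u_i\in\C$ obeying $\sum_i |u_i|^2 = 1$. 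Using the identities $(X\otimes\id)\ket{\Psi} = \ket{\Psi^+}$, $(Y\otimes\id)\ket{\Psi} = -\ci\ket{\Psi^-}$, $(Z\otimes\id)\ket{\Psi} = \ket{\Phi^-}$ on each of $AA'$ and $BB'$, a direct computation produces $\ket{\Phi_M} = u_0\ket{\Phi^+\Phi^+} + u_1\ket{\Psi^+\Psi^+} - u_2\ket{\Psi^-\Psi^-} + u_3\ket{\Phi^-\Phi^-}$, which is already a Schmidt decomposition across the cut $AA'|BB'$ with coefficients $\alpha_i = |u_i|$.

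Combining this with~\cref{lem_lower_bound},~\cref{lem_vidal}, and~\cref{lem_relation_entanglement}, the lower bound becomes $\gamma_{\LOCC}(U) \geq 2(\sum_i|u_i|)^2 - 1 = 1 + 2\sum_{i\neq j}|u_i||u_j|$, where the last equality uses $\sum_i |u_i|^2 = 1$. Using $|u_iu_j^*+u_ju_i^*| = 2|\mathrm{Re}(u_iu_j^*)|$ and $|u_iu_j^*-u_ju_i^*| = 2|\mathrm{Im}(u_iu_j^*)|$, the upper bound of~\cref{lem_upper_bound} simplifies to $1 + 2\sum_{i\neq j}(|\mathrm{Re}(u_iu_j^*)| + |\mathrm{Im}(u_iu_j^*)|)$. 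Since $|\mathrm{Re}(z)|+|\mathrm{Im}(z)| \geq |z|$ with equality exactly when $z$ is real or purely imaginary, the two bounds agree if and only if every cross product $u_iu_j^*$ has that property.

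The final step is to check this condition under the two hypotheses by direct substitution. For hypothesis~1, $\theta_Z=0$ gives $u_0=\cos\theta_X\cos\theta_Y$, $u_1=\ci\sin\theta_X\cos\theta_Y$, $u_2=\ci\cos\theta_X\sin\theta_Y$, $u_3=\sin\theta_X\sin\theta_Y$; each $u_i$ is real or purely imaginary, so every pairwise product is too. For hypothesis~2, $\theta_X=\theta_Y=\theta_Z=\pi/4$ yields $u_i = (1+\ci)/(2\sqrt{2})$ for all $i$, so $u_iu_j^* = 1/4$. It remains to evaluate the closed form: in case~1, $\sum_i|u_i|$ factors as $(\cos\theta_X+\sin\theta_X)(\cos\theta_Y+\sin\theta_Y)$, and squaring then applying $2\sin\theta\cos\theta=\sin 2\theta$ twice recovers the stated expression; case~2 reduces numerically to $7$ on both sides. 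I anticipate the principal bookkeeping difficulty to be the handling of absolute values: the KAK normalization $|\theta_Z|\leq\theta_Y\leq\theta_X\leq\pi/4$ guarantees that each $\sin\theta_\sigma\cos\theta_\sigma$ is nonnegative in the cases considered, which justifies dropping the absolute-value bars during the intermediate algebra while reinserting them in the final formula for uniform validity.
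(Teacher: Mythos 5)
Your proposal is correct and follows the same overall strategy as the paper — sandwich $\gamma_{\LOCC}\le\gamma_{\LOCCOW}\le\gamma_{\LO}$ between the Choi-state lower bound of \cref{lem_lower_bound} and the $\LO$ upper bound of \cref{lem_upper_bound}, then show the two coincide. Where you deviate is in the bookkeeping, and your route is noticeably cleaner. The paper writes $\ket{\Phi_U}$ out in the computational basis and extracts the Schmidt coefficients as singular values of an explicit $4\times 4$ matrix $D$; you instead observe that $\ket{\Phi_M}=u_0\ket{\Phi^+\Phi^+}+u_1\ket{\Psi^+\Psi^+}-u_2\ket{\Psi^-\Psi^-}+u_3\ket{\Phi^-\Phi^-}$ is already a Schmidt decomposition across $AA'\,|\,BB'$, so the coefficients are simply $|u_i|$. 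This unifies the two hypotheses: the paper handles case~2 separately by recognizing local equivalence to $\SWAP$ and invoking \cref{corr_gamma_ebit}, whereas you run the same $u_i$-based argument through both cases. Your reformulation of the coincidence criterion — the bounds match iff every cross product $u_iu_j^*$ is real or purely imaginary, via $|\mathrm{Re}(z)|+|\mathrm{Im}(z)|\ge|z|$ — also makes transparent exactly why the two listed hypotheses are sufficient, which the paper states more implicitly. One immaterial sign: with the conventions in \cref{eq_standard_dec_u} one gets $u_3=+\sin\theta_X\sin\theta_Y$ (the paper's $-\sin\theta_X\sin\theta_Y$ appears to be a sign slip; $(X\!\otimes\!X)(Y\!\otimes\!Y)=-Z\!\otimes\!Z$ and the $\ci\cdot\ci$ contributes another $-1$), but since only $|u_3|$ and the real/imaginary character of $u_3$ enter, nothing downstream is affected in either proof.
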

To prove this, we show that the lower bound from~\cref{lem_lower_bound} and the upper bound from~\cref{lem_upper_bound} coincide, which then implies the assertion by~\cref{eq_trivial}.
The detailed proof is given in~\cref{app_proof_thm_main}.
We give some examples of two-qubit gates of which we can determine the $\gamma$-factor using~\cref{thm_main}.
\begin{corollary}\label{cor_bounds_match}
  For following two-qubit gates
  \begin{itemize}
      \item two-qubit Clifford gates
      \item controlled Rotation gates $\mathrm{CR}_\sigma(\theta)$ for $\sigma\in\{X,Y,Z\}$ and $\theta\in[0,2\pi]$
      \item two-qubit rotations $\mathrm{R}_{\sigma\sigma}(\theta)\coloneqq \exp \left(-\ci \frac{\theta}{2} \sigma\otimes \sigma\right)$ for $\sigma\in\{X,Y,Z\}$ and $\theta\in[0,2\pi]$
  \end{itemize}
  there is no advantage in using classical communication for a single gate instance, i.e.,
\begin{alignat*}{4}
      &\gamma_{\LOCC}(\CNOT)  &&= \gamma_{\LOCCOW}(\CNOT)  &&= \gamma_{\LO}(\CNOT)  &&= 3 \\
      &\gamma_{\LOCC}(\iSWAP)  &&= \gamma_{\LOCCOW}(\iSWAP)  &&= \gamma_{\LO}(\iSWAP)  &&= 7 \\
      &\gamma_{\LOCC}(\SWAP)  &&= \gamma_{\LOCCOW}(\SWAP)  &&= \gamma_{\LO}(\SWAP)  &&= 7 \\
      &\gamma_{\LOCC}\big(\mathrm{CR}_{\sigma}(\theta)\big)  &&= \gamma_{\LOCCOW}\big(\mathrm{CR}_{\sigma}(\theta)\big)  &&= \gamma_{\LO}\big(\mathrm{CR}_{\sigma}(\theta)\big)  &&= 1+2|\sin(\theta /2)| \\
      &\gamma_{\LOCC}\big(\mathrm{R}_{\sigma\sigma}(\theta)\big)  &&= \gamma_{\LOCCOW}\big(\mathrm{R}_{\sigma\sigma}(\theta)\big)  &&= \gamma_{\LO}\big(\mathrm{R}_{\sigma\sigma}(\theta)\big)  &&= 1+2|\sin\theta| \, ,
\end{alignat*}
 for $\sigma \in \{X,Y,Z\}$ and $\theta \in [0,2\pi]$. 
\end{corollary}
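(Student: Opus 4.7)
My plan is to derive each formula in~\cref{cor_bounds_match} as a direct application of~\cref{thm_main} by identifying the KAK angles $(\theta_X,\theta_Y,\theta_Z)$ of each listed gate and verifying that one of the two hypotheses of~\cref{thm_main} holds. Throughout I use that the $\gamma$-factor is invariant under local unitaries (\cref{lem_invariance_local_unitary}), so only the KAK data matters.

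For two-qubit Clifford gates, the plan is to invoke the standard fact that, up to left and right multiplication by local Clifford gates, every such gate is one of $\id$, $\CNOT$, $\iSWAP$, or $\SWAP$; the identity case is trivial, and a direct computation yields KAK angles $(\pi/4,0,0)$ for $\CNOT$, $(\pi/4,\pi/4,0)$ for $\iSWAP$, and $(\pi/4,\pi/4,\pi/4)$ for $\SWAP$. The first two satisfy the first hypothesis of~\cref{thm_main} and $\SWAP$ satisfies the second. Evaluating the closed-form expression then gives $1 + 4\cdot\tfrac12 = 3$, $1 + 4\cdot\tfrac12 + 4\cdot\tfrac12 + 8\cdot\tfrac14 = 7$, and $7$, respectively.

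For $\mathrm{R}_{\sigma\sigma}(\theta)$ the three choices of $\sigma$ are related by conjugation with local Cliffords (for instance $H^{\otimes 2}$ interchanges $X\otimes X$ and $Z\otimes Z$), so it suffices to analyze $\mathrm{R}_{XX}(\theta)$, which already sits in the KAK standard form with $(\theta_X,\theta_Y,\theta_Z)=(\theta/2,0,0)$. For values of $\theta/2$ outside $[0,\pi/4]$ I would fold back into the canonical chamber via the identity $\exp(\ci\alpha X\otimes X) = \ci\,(X\otimes X)\exp\!\bigl(\ci(\alpha-\pi/2) X\otimes X\bigr)$ (modulo a global phase), noting that the closed-form expression of~\cref{thm_main} depends only on $|\sin(2\theta_X)|$, $|\sin(2\theta_Y)|$, and their product, hence is invariant under this reflection. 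Applying~\cref{thm_main} then gives $1 + 4|\sin(\theta/2)\cos(\theta/2)| = 1+2|\sin\theta|$. For $\mathrm{CR}_\sigma(\theta)$ I would again reduce to $\sigma = Z$ and verify by inspection of diagonal entries the identity
\begin{align*}
    \mathrm{CR}_Z(\theta) = \ee^{\ci\theta/4}\bigl(\mathrm{R}_Z(\theta/2)\otimes \mathrm{R}_Z(\theta/2)\bigr)\,\mathrm{R}_{ZZ}(-\theta/2) \, ,
\end{align*}
which shows that $\mathrm{CR}_Z(\theta)$ is locally unitarily equivalent, up to a global phase, to $\mathrm{R}_{ZZ}(-\theta/2)$. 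This gives KAK data $(\theta/4,0,0)$ and~\cref{thm_main} returns $1 + 4|\sin(\theta/4)\cos(\theta/4)| = 1+2|\sin(\theta/2)|$.

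The main obstacle I anticipate is purely bookkeeping: carefully folding the KAK parameters into the canonical chamber $|\theta_Z|\leq\theta_Y\leq\theta_X\leq\pi/4$ for every value of $\theta\in[0,2\pi]$, and checking that the local-Clifford symmetries used to do so leave the closed-form formula invariant. Once this identification has been carried out in each case, the corollary reduces to a one-step evaluation of~\cref{thm_main}.
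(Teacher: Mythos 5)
Your proposal follows essentially the same route as the paper's proof in \cref{app_cor_gates}: reduce each gate to its KAK angles via local-unitary invariance (\cref{lem_invariance_local_unitary}), verify the hypotheses of \cref{thm_main}, and read off the closed-form expression. The choice to reduce $\mathrm{CR}_\sigma$ through $\sigma=Z$ (where the paper reduces through $\sigma=X$ via $\mathrm{CR}_X(\theta) = (H\otimes\mathrm{R}_X(\theta/2))\mathrm{R}_{XX}(-\theta/2)(H\otimes\id)$) is purely cosmetic. One computational slip worth flagging: the identity you propose,
\begin{align*}
\mathrm{CR}_Z(\theta) = \ee^{\ci\theta/4}\bigl(\mathrm{R}_Z(\theta/2)\otimes\mathrm{R}_Z(\theta/2)\bigr)\mathrm{R}_{ZZ}(-\theta/2),
\end{align*}
actually evaluates to $\mathrm{diag}(1,1,1,\ee^{\ci\theta})$, i.e.\ the controlled-phase gate, rather than $\mathrm{CR}_Z(\theta)=\mathrm{diag}(1,1,\ee^{-\ci\theta/2},\ee^{\ci\theta/2})$. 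The correct identity is simply $\mathrm{CR}_Z(\theta) = \bigl(\id\otimes\mathrm{R}_Z(\theta/2)\bigr)\mathrm{R}_{ZZ}(-\theta/2)$, with no factor on the control qubit and no global phase. This does not damage your argument: the operator you produced is still locally unitarily equivalent to $\mathrm{CR}_Z(\theta)$, so the $\gamma$-factor comparison goes through unchanged. Your extra care about folding KAK angles into the canonical chamber $|\theta_Z|\le\theta_Y\le\theta_X\le\pi/4$, and the observation that the formula in \cref{thm_main} depends only on $|\sin(2\theta_X)|$, $|\sin(2\theta_Y)|$ and their product and is hence invariant under such reflections, is correct and slightly more explicit than the paper; the paper can afford to skip this because the proof of \cref{thm_main} never actually uses the chamber ordering when $\theta_Z=0$.
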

Note that every two-qubit Clifford gate is equivalent up to local unitaries to either the identity gate $\id$, the $\CNOT$ gate, the $\iSWAP$ gate or the $\SWAP$ gate.
Therefore, due to~\cref{lem_invariance_local_unitary}, the $\gamma$-factor of any two-qubit Clifford gate is either $1,3,$ or $7$ accordingly.
To prove~\cref{cor_bounds_match}, it suffices to show that the assumption of~\cref{thm_main} is satisfied for the considered gates. The proof is given in~\cref{app_cor_gates}.

For the gates specified by~\cref{thm_main} and~\cref{cor_bounds_match} for which we have an analytical understanding of their optimal sampling overhead, we also explicitly know an optimal QPD.
Since the upper bound from~\cref{lem_upper_bound} is shown to be tight, the QPD introduced in~\cite{MF_21} is optimal.

\section{Reducing overhead for multiple instances} \label{sec_how_use_submult}
Under $\LOCC$, a $\CNOT$ gate and a Bell pair denoted by $\ket{\Psi}$ can be considered equally powerful resources: One can generate a Bell pair using a $\CNOT$ gate (see~\cref{fig_BellPair_LOCC_bell}) and one can realize a $\CNOT$ gate using gate teleportation by consuming a preexisting Bell pair (see~\cref{fig_BellPair_LOCC_cnot}).
This implies that under $\LOCC$ it is equally hard to simulate a $\CNOT$ gate than it is to simulate a Bell pair, i.e., $\gamma_{\LOCC}(\CNOT) = \gamma_{\LOCC}(\proj{\Psi})$.
Therefore, we can reduce the problem of quasiprobability simulation of $n$ $\CNOT$ gates dispersed throughout our circuit to the quasiprobability simulation of $n$ Bell pairs at the corresponding locations, without any loss of optimality.
The major advantage of working with states instead of gates is that we can generate the nonlocal resources in parallel and thus make use of the strict subadditivity of the $\gamma$-factor discussed in~\cref{corr_gamma_ebit}.
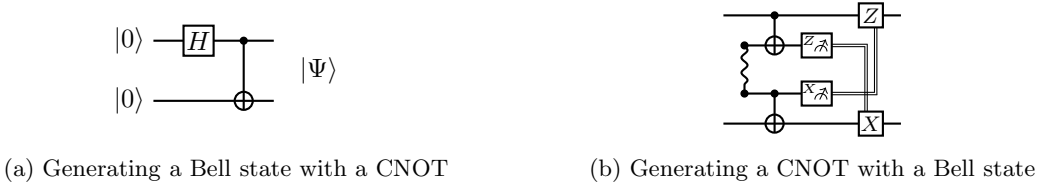
\begin{figure}[!htb]
\centering
\begin{subfigure}{.5\textwidth}
  \centering
    \begin{tikzpicture}[thick,scale=0.8]
    \draw (0,0.5) -- (2,0.5);
    \draw (0,-0.5) -- (2,-0.5);
    \draw[fill=white] (0.5,0.75) rectangle (1.0,0.25); 
    \draw[fill=black] (1.5,0.5) circle (0.5mm); 
    \draw (1.5,-0.5) circle (1.5mm);
    \draw (1.5,0.5) -- (1.5,-0.65);
    \node at (0.75,0.5) {$H$};
    \node at (-0.4,0.5) {$\ket{0}$};
    \node at (-0.4,-0.5) {$\ket{0}$};   
    \node at (2.75,0) {$\ket{\Psi}$};
    \node at (0,-1) {};
    \node at (0,1) {};    
    \end{tikzpicture}
  \caption{Generating a Bell state with a $\CNOT$} \label{fig_BellPair_LOCC_bell}
\end{subfigure}%
\begin{subfigure}{.5\textwidth}
  \centering
    \begin{tikzpicture}[thick,scale=0.8]
    \def\x{0}
    \def\cc{0.02}
    \def \g{0}
\draw (\x+\g,0.9) -- (\x+\g+2.25,0.9);
\draw (\g+\x,-0.9) -- (\g+\x+2.25,-0.9);
\draw (\g+\x+2.25+0.4,0.9) -- (\g+\x+2.25+0.7,0.9);
\draw (\g+\x+2.25+0.4,-0.9) -- (\g+\x+2.25+0.7,-0.9);
\draw[fill=black] (\g+\x+0.35,0.4) circle (0.5mm);
\draw[fill=black] (\g+\x+0.35,-0.4) circle (0.5mm);
\draw[decorate,decoration={snake,amplitude=.4mm,segment length=2mm,post length=1mm}] (\g+\x+0.35,0.4) -- (\g+\x+0.35,-0.4);
\draw (\g+\x+0.85,0.4) circle (1.5mm);
\draw[fill=black] (\g+\x+0.85,0.9) circle (0.5mm);
\draw (\g+\x+0.85,0.9) -- (\g+\x+0.85,0.25);
\draw (\g+\x+0.85,-0.9) circle (1.5mm);
\draw[fill=black] (\g+\x+0.85,-0.4) circle (0.5mm);
\draw (\g+\x+0.85,-1.05) -- (\g+\x+0.85,-0.4);
\draw (\g+\x+0.35,0.4) -- (\g+\x+1.3,0.4);
\draw (\g+\x+0.35,-0.4) -- (\g+\x+1.3,-0.4);
\draw (\g+\x+1.3,0.4-0.2) rectangle (\g+\x+1.3+0.5,0.4+0.2);
\draw (\g+\x+1.3,-0.4-0.2) rectangle (\g+\x+1.3+0.5,-0.4+0.2);
\draw (\g+\x+2.25,0.9-0.2) rectangle (\g+\x+2.25+0.4,0.9+0.2);
\draw (\g+\x+2.25,-0.9-0.2) rectangle (\g+\x+2.25+0.4,-0.9+0.2);
\node at (\g+\x+2.25+0.2,0.9) {\footnotesize{$Z$}};
\node at (\g+\x+2.25+0.2,-0.9) {\footnotesize{$X$}};
\draw[thin,->] (\g+\x+1.3+0.25+0.05,0.4-0.1) -- (\g+\x+1.3+0.25+0.1+0.05,0.4+0.1);
\node[] at (\g+\x+1.3+0.12,0.4+0.08) {\tiny{$Z$}};
\draw[thin] (\g+\x+1.3+0.25+0.1+0.07,0.4-0.1) arc (0:180:0.1);
\draw[thin,->] (\g+\x+1.3+0.25+0.05,-0.4-0.1) -- (\g+\x+1.3+0.25+0.1+0.05,-0.4+0.1);
\node[] at (\g+\x+1.3+0.13,-0.4+0.08) {\tiny{$X$}};
\draw[thin] (\g+\x+1.3+0.25+0.1+0.07,-0.4-0.1) arc (0:180:0.1); 
\draw[thin] (\g+\x+1.3+0.5,0.4-\cc) -- (\g+\x+2.25+0.2-\cc-\cc-\cc-\cc-\cc+0.007,0.4-\cc);
\draw[thin] (\g+\x+1.3+0.5,0.4+\cc) -- (\g+\x+2.25+0.2+\cc-\cc-\cc-\cc-\cc+0.007,0.4+\cc);
\draw[thin] (\g+\x+2.25+0.2+\cc-\cc-\cc-\cc-\cc,0.4+\cc) -- (\g+\x+2.25+0.2+\cc-\cc-\cc-\cc-\cc,-0.7);
\draw[thin] (\g+\x+2.25+0.2-\cc-\cc-\cc-\cc-\cc,0.4-\cc) -- (\g+\x+2.25+0.2-\cc-\cc-\cc-\cc-\cc,-0.7);
\draw[thin] (\g+\x+1.3+0.5,-0.4+\cc) -- (\g+\x+2.25+0.2-\cc+\cc+\cc+\cc+\cc+0.007,-0.4+\cc);
\draw[thin] (\g+\x+1.3+0.5,-0.4-\cc) -- (\g+\x+2.25+0.2+\cc+\cc+\cc+\cc+\cc+0.007,-0.4-\cc);
\draw[thin] (\g+\x+2.25+0.2+\cc+\cc+\cc+\cc+\cc,-0.4-\cc) -- (\g+\x+2.25+0.2+\cc+\cc+\cc+\cc+\cc,+0.7);
\draw[thin] (\g+\x+2.25+0.2-\cc+\cc+\cc+\cc+\cc,-0.4+\cc) -- (\g+\x+2.25+0.2-\cc+\cc+\cc+\cc+\cc,+0.7);=
    \end{tikzpicture}
  \caption{Generating a $\CNOT$ with a Bell state} \label{fig_BellPair_LOCC_cnot}
\end{subfigure}
\caption{Depiction why a $\CNOT$ gate and a Bell state are equally powerful resources under $\LOCC$. The wavy line depicts a Bell pair.}
\label{fig_BellPair_LOCC}
\end{figure}

More precisely, one can generate the required Bell pairs all at once and store them until they are consumed via gate teleportation.
This procedure is conceptually depicted in~\cref{fig_CNOT}.
Compared to naively applying the optimal QPD from~\cref{sec_single_instance}, the sampling overhead for simulating a circuit with $n$ nonlocal $\CNOT$ gates can be reduced from $\gamma_{\LOCC}(\CNOT)^{2n}=\gamma_{\LOCC}(\proj{\Psi})^{2n}=9^n$ to \smash{$\gamma^{(n)}_{\LOCC}(\CNOT)^{2n}$} $=\gamma_{\LOCC}(\proj{\Psi}^{\otimes n})^2=(2^{n+1}-1)^2 = \cO(4^n)$ by~\cref{corr_gamma_ebit}.

The technique above comes with the drawback of requiring additional quantum memory to store the Bell states. 
Indeed, if one where to generate all $n$ Bell pairs at the very beginning of the circuit, the size of this additional memory would grow linearly with the number of $n$ nonlocal $\CNOT$ gates.
This is inconvenient, as the limited number of qubits precisely is the motivation for circuit knitting in the first place.
For practical purposes, it is more useful to generate a fixed number $k\leq n$ of Bell pairs at a time and then, once all $k$ Bell pairs have been consumed by gate teleportation, reuse these $2k$ qubits to generate new Bell pairs.
Choosing the size $k$ of this \emph{entanglement factory} comes with a tradeoff: Choosing $k$ smaller results in a reduced memory footprint of the method and increasing $k$ decreases the effective $\gamma$-factor per $\CNOT$ gate, which is given by 
\begin{align} \label{eq_tradeoff}
  \gamma^{(k)}_{\LOCC}(\CNOT)\coloneqq \left( \gamma_{\LOCC}(\CNOT^{\otimes k}) \right)^{1/k} = (2^{k+1}-1)^{1/k} \, .
\end{align}
\cref{fig_tradeoff} graphically visualizes the tradeoff between reducing the effective sampling overhead at the cost of a larger quantum memory.
Note that $\lim_{k\rightarrow\infty} \gamma^{(k)}_{\LOCC}(\CNOT) = 2 < 3 =\gamma_{\LOCC}(\CNOT)$.
The overall sampling overhead for the full circuit consisting of $n$ nonlocal $\CNOT$ gates is \smash{$\gamma^{(k)}_{\LOCC}(\CNOT)^{2n}$}.
\begin{figure}[!htb]
\centering

  \begin{tikzpicture}
	\begin{axis}[
		height=5.0cm,
		width=8.0cm,
		grid=major,
		xlabel=$k$,
		xmin=1,
		xmax=20,
		ymax=3,
		ymin=2,
	     xtick={1,5,10,15,20},
	     xticklabels={$1$, $5$, $10$, $15$, $20$},
          ytick={3,2.75,2.5,2.25,2},
		legend style={at={(0.5725,1.002)},anchor=north,legend cell align=left,font=\footnotesize} 
	]

	\addplot[thick,smooth,name path=f] coordinates {
(1,3.0) (2,2.64575) (3,2.46621) (4,2.35961) (5,2.29017) (6,2.24199) (7,2.20694) (8,2.18048) (9,2.15988) (10,2.14344) (11,2.13003) (12,2.1189) (13,2.10952) (14,2.10151) (15,2.09459) (16,2.08855) (17,2.08323) (18,2.07852) (19,2.07431) (20,2.07053)
	};
	\addlegendentry{$\gamma^{(k)}_{\LOCC}(\CNOT)$ see~\cref{eq_tradeoff}}
	
	\end{axis} 
\end{tikzpicture}
\caption{Grpahical visualization of the tradeoff between the entanglement factory size $k$ and the effective sampling overhead for a $\CNOT$ gate \smash{$\gamma^{(k)}_{\LOCC}(\CNOT)$}.}
\label{fig_tradeoff}
\end{figure}
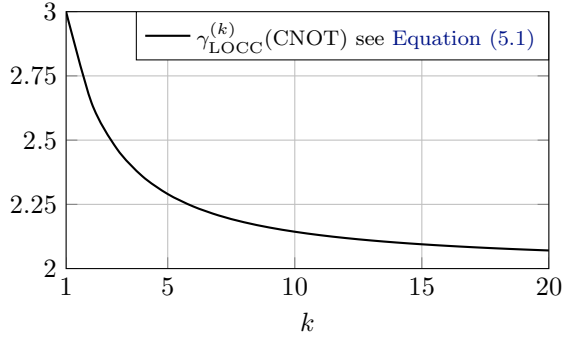

\subsection{General Clifford gates}\label{sec_further_gates}
The main ingredient that enables us to use the submultiplicativity of the $\gamma$-factor under the tensor product is the ability to realize a gate with a $\LOCC$ protocol that consumes a preexisting entangled state.
The gate teleportation protocol can be generalized to realize any bipartite unitary $U_{AB}$ given access to its Choi state $\ket{\Phi_U}_{AA'BB'}$~\cite{GC99}.
Let us denote the system on which the gate is be applied $\tilde{A}\otimes\tilde{B}$.
Both parties perform Bell measurements across $A'$ and $\tilde{A}$ or $B'$ and $\tilde{B}'$ respecitvely.
According to the obtained measurement outcomes, some correction operation has to be applied on $A\otimes B$.
When $U$ is a Clifford gate, this correction operation is an element of the Pauli group and can therefore be realized locally.\footnote{Recall that the Clifford group on $d$ qubits $C_d$ is defined as $C_d=\{ d\textnormal{-qubit unitaries } V : V P_d V^\dagger = P_d \}$, where $P_d$ is the Pauli group on $d$ qubits.}
For example,~\cref{fig_Gate_teleporation} depicts the gate teleportation of a general two-qubit unitary $U$, and the correction operation $U(\sigma_{kl}\otimes\sigma_{ij})U^{\dagger}$ is Pauli when $U$ is Clifford.

\begin{figure}[htb!]
    \centering
    \begin{tikzpicture}[thick,scale=0.8]
    \def\c{0.02}
    
     \draw [fill=gray!15,draw=none] (-1,1.5) rectangle (8.3,0.05);
      \draw [fill=blue!15,draw=none] (-1,-1.5) rectangle (8.3,-0.05);  
    
     \draw (-1.2,1.25) -- (4,1.25);  
     \draw (0,0.75) -- (4,0.75);    
     \draw (0,0.25) -- (8.5,0.25);
     \draw (0,-0.25) -- (8.5,-0.25); 
     \draw (0,-0.75) -- (4,-0.75); 
     \draw (-1.2,-1.25) -- (4,-1.25);       
     \draw[fill=white] (0.5,0.5) rectangle (1,1.00); 
     \draw[fill=white] (0.5,-0.0) rectangle (1,-0.5);   
     \node at (0.75,0.75) {$H$};
     \node at (0.75,-0.25) {$H$};      
    \draw[fill=black] (1.5,0.75) circle (0.5mm); 
    \draw (1.5,0.25) circle (1.5mm);
    \draw (1.5,0.75) -- (1.5,0.25-0.14);
    \draw[fill=black] (1.5,-0.25) circle (0.5mm); 
    \draw (1.5,-0.75) circle (1.5mm);    
    \draw (1.5,-0.25) -- (1.5,-0.75-0.14);
    \draw[fill=white] (2,0.5) rectangle (2.5,-0.5);  
    \node at (2.25,0) {$U$};  
    \draw[fill=black] (3.25,1.25) circle (0.5mm);  
    \draw (3.25,0.75) circle (1.5mm); 
    \draw (3.25,1.25) -- (3.25,0.75-0.14);    
    \draw[fill=black] (3.25,-0.75) circle (0.5mm);  
    \draw (3.25,-1.25) circle (1.5mm); 
    \draw (3.25,-0.75) -- (3.25,-1.25-0.14);
    \draw[fill=white] (3.75,1.45) rectangle (4.25,1.05);
    \draw[fill=white] (3.75,0.95) rectangle (4.25,0.55);  
    \draw[fill=white] (3.75,-0.55) rectangle (4.25,-0.95);
    \draw[fill=white] (3.75,-1.05) rectangle (4.25,-1.45);  
    \draw[thin] (4.15,1.15) arc (0:180:0.15);
    \draw[thin,->] (4,1.15) -- (4.15,1.4);
    \draw[thin] (4.15,1.15-0.5) arc (0:180:0.15);
    \draw[thin,->] (4,1.15-0.5) -- (4.15,1.4-0.5); 
    \draw[thin] (4.15,-1.35) arc (0:180:0.15);
    \draw[thin,->] (4,-1.35) -- (4.15,-1.1); 
    \draw[thin] (4.15,-1.35+0.5) arc (0:180:0.15);
    \draw[thin,->] (4,-1.35+0.5) -- (4.15,-1.1+0.5);     
    \draw[fill=white] (4.75,0.5) rectangle (8,-0.5);  
    \node at (12.75/2,0) {$U ( \sigma_{k\ell} \otimes \sigma_{ij}) U^\dagger$};
    
    \draw[thin] (4.25,1.25+\c) -- (4.25+1+\c+0.0088,1.25+\c);
    \draw[thin] (4.25,1.25-\c) -- (4.25+1-\c+0.0088,1.25-\c);
    \draw[thin] (4.25+1+\c,1.25+\c) -- (4.25+1+\c,0.5); 
    \draw[thin] (4.25+1-\c,1.25-\c) -- (4.25+1-\c,0.5); 
    \node at (4.25+0.35,1.45) {\footnotesize{$i$}};

    \draw[thin] (4.25,0.75+\c) -- (4.25+0.7+\c+0.0088,0.75+\c);
    \draw[thin] (4.25,0.75-\c) -- (4.25+0.7-\c+0.0088,0.75-\c);
    \draw[thin] (4.25+0.7+\c,0.75+\c) -- (4.25+0.7+\c,0.5); 
    \draw[thin] (4.25+0.7-\c,0.75-\c) -- (4.25+0.7-\c,0.5); 
    \node at (4.25+0.35,0.97) {\footnotesize{$j$}};   

    \draw[thin] (4.25,-0.75-\c) -- (4.25+0.7+\c+0.0088,-0.75-\c);
    \draw[thin] (4.25,-0.75+\c) -- (4.25+0.7-\c+0.0088,-0.75+\c);
    \draw[thin] (4.25+0.7+\c,-0.75-\c) -- (4.25+0.7+\c,-0.5); 
    \draw[thin] (4.25+0.7-\c,-0.75+\c) -- (4.25+0.7-\c,-0.5); 
    \node at (4.25+0.35,-0.55) {\footnotesize{$k$}};  
 
    \draw[thin] (4.25,-1.25-\c) -- (4.25+1+\c+0.0088,-1.25-\c);
    \draw[thin] (4.25,-1.25+\c) -- (4.25+1-\c+0.0088,-1.25+\c);
    \draw[thin] (4.25+1+\c,-1.25-\c) -- (4.25+1+\c,-0.5); 
    \draw[thin] (4.25+1-\c,-1.25+\c) -- (4.25+1-\c,-0.5); 
    \node at (4.25+0.35,-1.05) {\footnotesize{$\ell$}};
    
    \node at (-0.3,0.75) {$\ket{0}$};
    \node at (-0.3,0.25) {$\ket{0}$};
    \node at (-0.3,-0.25) {$\ket{0}$};
    \node at (-0.3,-0.75) {$\ket{0}$}; 
    \draw[dotted,black] (-0.70,1.1) rectangle (2.7,-1.1); 
    
    \node at (-1.5,0) {$\ket{\psi}$};
    \node at (9.2,0) {$U \ket{\psi}$};    
    
     \draw [fill=gray!15,draw=none] (-1,1+0.7) rectangle (-1+0.25,1+0.25+0.7);
     \draw [fill=blue!15,draw=none] (0.1-1.2+1.75,1+0.7) rectangle (0.1+0.25-1.2+1.75,1+0.25+0.7);     
     \node[gray] at (-1+0.25+0.3,1+0.7+0.125) {\footnotesize{$\bar A$}};
     \node[blue] at (0.1+0.25-1.2+0.3+1.75,1+0.7+0.125) {\footnotesize{$\bar B$}};
    
    \end{tikzpicture}
    \caption{Graphical explanation of a gate teleportation protocol for a two-qubit gate $U$. We realize the gate $U$ with the help of its Choi state (prepared by the dotted box) and a correction operator $U(\sigma_{ij}\otimes\sigma_{kl})U^{\dagger}$, where $\sigma_{ij}\coloneqq Z^iX^j$. For Clifford gates $U$ the correction operator is local and hence we can simulate $U$ with the Choi state and $\LOCC$.}
    \label{fig_Gate_teleporation}
\end{figure}
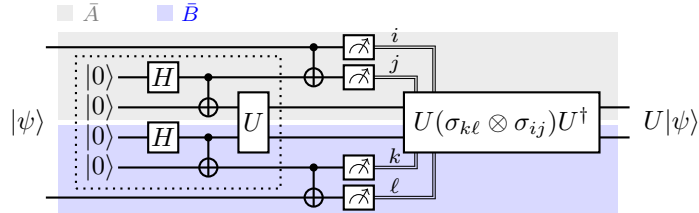

By using this gate teleportation protocol, the technique for reducing the overhead for multiple instances of a $\CNOT$ gate can be straightforwardly generalized to reduce the sampling arbitrary Clifford gates.
We summarize the result of this section:

\begin{theorem} \label{thm_clifford}
  Let $U_{AB}$ be a Clifford unitary with a Choi state $\ket{\Phi_U}_{AA'BB'}$ that has Schmidt coefficients $\{\alpha_i \}_i$ when considered as bipartite state over the systems $A\otimes A'$ and $B \otimes  B'$. Then
  \begin{align} \label{eq_thm1}
    \gamma_{\LOCC}(U) = \gamma_{\LOCC}(\proj{\Phi_U}) = 2 \left(\sum_i \alpha_i \right)^2 -1 \, .
  \end{align}
 For a memory overhead of $2k$ qubits for $k \in \N$, our method can achieve an effective $\gamma$-factor of
  \begin{align}\label{eq_thm2}
      \gamma^{(k)}_{\LOCC}(U) =\left( \gamma_{\LOCC}(\proj{\Phi_U}^{\otimes k}) \right)^{1/k} = \left(2 \Big(\sum_i \alpha_i\Big)^{2k}-1 \right)^{1/k} \, .
  \end{align}
\end{theorem}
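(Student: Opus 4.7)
The plan is to prove~\cref{eq_thm1} by sandwiching $\gamma_{\LOCC}(U)$ between $\gamma_{\LOCC}(\proj{\Phi_U})$ from both sides, and then bootstrap the same argument to $U^{\otimes k}$ for~\cref{eq_thm2}. The lower bound $\gamma_{\LOCC}(U)\geq \gamma_{\LOCC}(\proj{\Phi_U})$ is exactly~\cref{lem_lower_bound}, whose proof records that the Choi state is obtainable from $\ket{00\ldots 0}$ by preparing two Bell pairs locally and applying a single instance of $U$. The closed-form value $2(\sum_i\alpha_i)^2-1$ for $\gamma_{\LOCC}(\proj{\Phi_U})$ then follows from combining~\cref{lem_relation_entanglement} with~\cref{lem_vidal} applied to the bipartite pure state $\ket{\Phi_U}$.

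The matching upper bound $\gamma_{\LOCC}(U)\leq \gamma_{\LOCC}(\proj{\Phi_U})$ is where the Clifford assumption is essential and relies on the gate-teleportation circuit of~\cref{fig_Gate_teleporation}. Given the Choi resource on $AA'BB'$, the teleportation realizes $\mathcal{U}$ on an arbitrary $\ket{\psi}_{\tilde A \tilde B}$ by local Bell measurements across $(\tilde A, A')$ and $(\tilde B, B')$, classical exchange of the outcomes, and the correction $U(\sigma_{k\ell}\otimes \sigma_{ij})U^\dagger$. Since $U$ is Clifford, this correction is a tensor product of single-qubit Paulis, so the post-preparation circuit is a genuine $\LOCC$ protocol. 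Taking an optimal QPD $\mathcal{E}=\sum_i a_i \mathcal{F}_i$ with $\mathcal{F}_i\in\LOCC$ and $\sum_i|a_i|=\gamma_{\LOCC}(\proj{\Phi_U})$ and substituting it for the preparation block yields, by linearity of the teleportation protocol in its resource state, a QPD of $\mathcal{U}$ of the form $\mathcal{U}=\sum_i a_i \mathcal{G}_i$ where each $\mathcal{G}_i$ is the composition of $\mathcal{F}_i$ with the LOCC measurement-and-correction subroutine, hence itself in $\LOCC$. This proves the second equality in~\cref{eq_thm1}.

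For~\cref{eq_thm2} I would apply the same sandwich to the still-Clifford unitary $U^{\otimes k}$, whose Choi state is (up to reordering of the auxiliary registers) $\ket{\Phi_U}^{\otimes k}$; teleportation is performed in parallel on all $k$ copies, consuming $2k$ ancilla qubits for the resource and executing the $k$ Pauli corrections locally. This gives $\gamma_{\LOCC}(U^{\otimes k}) = \gamma_{\LOCC}(\proj{\Phi_U}^{\otimes k})$. The Schmidt coefficients of $\ket{\Phi_U}^{\otimes k}$ across $(AA')^{\otimes k}$ versus $(BB')^{\otimes k}$ are the products $\alpha_{i_1}\cdots\alpha_{i_k}$, whose sum equals $(\sum_i\alpha_i)^k$, so~\cref{lem_vidal} and~\cref{lem_relation_entanglement} yield $\gamma_{\LOCC}(\proj{\Phi_U}^{\otimes k}) = 2(\sum_i\alpha_i)^{2k}-1$. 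Taking the $k$-th root produces the claimed effective $\gamma$-factor.

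The main obstacle is the rigor of the ``substitute the QPD into the resource slot'' step in the upper bound. One has to verify that when $\mathcal{E}$ is decomposed into possibly nonpositive elements $\mathcal{F}_i$ of $\mathrm{D}$, composing each $\mathcal{F}_i$ with the LOCC teleportation-and-correction subroutine (whose later local operations are classically conditioned on measurement outcomes on registers initialized by $\mathcal{F}_i$) still produces a valid element of $\LOCC$, that the resulting linear combination equals $\mathcal{U}$ exactly by linearity in the resource density operator, and that no cross-terms inflate the one-norm beyond $\sum_i|a_i|$. Once this bookkeeping is in place, the remainder reduces to the closed-form evaluations via~\cref{lem_vidal} and~\cref{lem_relation_entanglement}.
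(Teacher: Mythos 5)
Your proposal is correct and follows essentially the same route as the paper: the lower bound $\gamma_{\LOCC}(U)\geq\gamma_{\LOCC}(\proj{\Phi_U})$ comes from preparing the Choi state with one use of $U$, the upper bound from gate teleportation with a local Pauli correction (using that $U$ is Clifford), the closed form from \cref{lem_relation_entanglement} and \cref{lem_vidal}, and the $k$-fold case from applying the same equivalence to $U^{\otimes k}$ and the multiplicativity of Schmidt coefficients. The ``obstacle'' you flag at the end is not a genuine gap: the teleportation-plus-correction block is a fixed CPTP $\LOCC$ map $\mathcal{T}$ that is linear in the resource density operator, so $\mathcal{U}=\sum_i a_i\,\mathcal{T}\circ(\mathcal{F}_i\otimes\mathrm{id})$ follows immediately from $\sum_i a_i\mathcal{F}_i(\proj{0})=\proj{\Phi_U}$; and if $\mathcal{F}_i=\mathcal{F}_i^+-\mathcal{F}_i^-$ with $\mathcal{F}_i^\pm,\mathcal{F}_i^++\mathcal{F}_i^-$ CPTN $\LOCC$, then $\mathcal{T}\circ\mathcal{F}_i^\pm$ and their sum remain CPTN $\LOCC$, so $\mathcal{T}\circ\mathcal{F}_i\in\LOCC$ with no change to the one-norm $\sum_i|a_i|$.
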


\begin{proof}
Under $\LOCC$ we can prepare the state $\ket{\Phi_U}$ using an instance of the gate $U$, and similarly we can realize $U$ using gate teleportation, which consumes the state $\ket{\Phi_U}$.
This implies that $\gamma_{\LOCC}(U)=\gamma_{\LOCC}(\proj{\Phi_U})$.
\Cref{eq_thm1} directly follows from~\cref{lem_relation_entanglement} and~\cref{lem_vidal}.

Following the idea sketched in the main text, we can reduce the simulation of $k$ nonlocal instances of $U$ to the preparation of $\ket{\Phi_U}^{\otimes k}$.
Denote the Schmidt coefficients of $\ket{\Phi_U}$ by $\alpha_1,\dots,\alpha_m$.
Then the Schmidt coefficients of $\ket{\Phi_U}^{\otimes k}$ are given by $\tilde \alpha_{j_1,\dots,j_k}=\prod_{i=1}^k \alpha_{j_i}$ where $j_i$ ranges over $\{1,\dots,m\}$.
By invoking~\cref{lem_vidal}, we obtain
\begin{align*}
    \gamma_{\LOCC}(\proj{\Phi_U}^{\otimes k}) 
    =2\Big(\sum_{j_1,\dots,j_k} \tilde \alpha_{j_1,\dots,j_k} \Big)^{2} -1 
    =2 \Big(\sum_i \alpha_i \Big)^{2k} -1  \, ,
\end{align*}
which gives us the effective $\gamma$-factor described in~\cref{eq_thm2}.
\end{proof}
Entangling Clifford unitaries $U$ satisfy $(\sum_i \alpha_i)^2>1$. For such unitaries~\cref{thm_clifford} implies that\footnote{This can be verified by induction.}
\begin{align*}
  \gamma^{(k)}_{\LOCC}(U) <  \gamma_{\LOCC}(U) \quad \textnormal{for} \quad k>1 \, .  
\end{align*}
This shows that the presented protocol via gate teleportation indeed reduces the sampling overhead for multiple instances of the nonlocal gate $U$.
For a circuit with $n$ instances of $U$ the total sampling overhead is then given as \smash{$\gamma^{(n)}_{\LOCC}(U)^{2n}=O((\sum_i \alpha_i)^{4n})$}, which recovers the $O(4^n)$ and the $O(16^n)$ scaling for the case of $\CNOT$ and $\SWAP$ gates, respectively.\footnote{Recall that the Schmidt coefficients for a $\CNOT$ gate are $\alpha_1=\alpha_2=1/\sqrt{2}$ and for a $\SWAP$ gate $\alpha_1=\alpha_2=\alpha_3=\alpha_4=1/2$.}

\subsection{Non-Clifford gates} \label{sec_nonClifford_multiple}
For non-Clifford gates $U$, the simple correspondence under $\LOCC$ between the gate and its Choi state via the gate teleportation circuit from~\cref{fig_Gate_teleporation} does not hold anymore, because the correction operator in the gate teleportation protocol is no longer local.
As a result, a different $\LOCC$ protocol would need to be used in order to realize the gate using a preexisting entangled state.
In the following we consider the $\mathrm{CR}_{X}(\theta)$ gate as a case study of a non-Clifford gate and we show that our method can reduce its sampling overhead at least for certain values of $\theta$.
We want to characterize its lowest achievable effective $\gamma$-factor (i.e. the effective sampling overhead per gate) in the limit where the number of gates goes to infinity.
Note that this discussion analogously holds for all gates that are equivalent to the $\mathrm{CR}_X$ gate up to local unitaries, such as $\mathrm{CR}_Y,\mathrm{CR}_Z,\mathrm{R}_{XX},\mathrm{R}_{YY}$ and $\mathrm{R}_{ZZ}$ (see~\cref{eq_equivalence_crx_rxx}).

As we have seen in~\cref{cor_bounds_match}, one can achieve a $\gamma$-factor of $1+2|\sin(\theta/2)|$ in the $\LO$ setting, so this represents an upper bound to the best possible effective $\gamma$-factor that we can hope to achieve.
By an argument analogous to the proof of~\cref{lem_lower_bound}, the asymptotic effective $\gamma$-factor of a $\mathrm{CR}_{X}(\theta)$ gate cannot be lower than $\lim_{n\rightarrow\infty}\gamma_{\LOCC}(\proj{\Phi_{\mathrm{CR}_{X}(\theta)}}^{\otimes n})^{1/n}=1+|\sin(\theta/2)|$:
The value $1+|\sin(\theta/2)|$ is the lowest possible effective $\gamma$-factor with which a $\ket{\Phi_{\mathrm{CR}_{X}(\theta)}}$ can be prepared, and since a $\mathrm{CR}_{X}(\theta)$ gate can be used to realize a $\ket{\Phi_{\mathrm{CR}_{X}(\theta)}}$ state, its lowest effective $\gamma$-factor cannot possibly be lower.

With this in mind, we know that the lowest achievable effective $\gamma$-factor lies between $1+|\sin(\theta/2)|$ and $1+2|\sin(\theta/2)|$, as seen in~\cref{fig_openQuestion}.
It has been shown~\cite[Theorem 2]{EPP00}, that any controlled rotation gate can be realized with some $\LOCC$ protocol which requires a single Bell pair.
Since the effective $\gamma$-factor of creating a Bell pair is $2$, we can therefore upper bound the lowest achievable effective $\gamma$-factor of the $\mathrm{CR}_{X}(\theta)$ gate by two and it must therefore lie somewhere in the gray region of~\cref{fig_openQuestion}.
Notice that this implies that classical communication can reduce the sampling overhead of the $\mathrm{CR}_{X}(\theta)$ gate for $\pi/3<\theta<5\pi/3$.
For the remaining values of $\theta$, it is an open question whether classical communication can reduce the sampling overhead. 
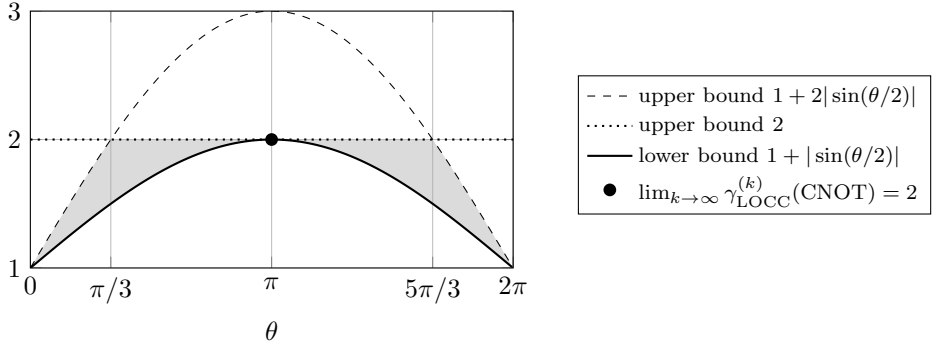
\begin{figure}[!htb]
\centering

  \begin{tikzpicture}
	\begin{axis}[
		height=5.0cm,
		width=8.0cm,
		grid=major,
		xlabel=$\theta$,
		xmin=0,
		xmax=6.28319,
		ymax=3,
		ymin=1,
	     xtick={0,1.0472,3.14,5.23599,6.28319},
	     xticklabels={$0$, $\pi/3$, $\pi$, $5\pi/3$, $2\pi$},
          ytick={1,2,3},
		legend style={at={(1.5,0.75)},anchor=north,legend cell align=left,font=\footnotesize} 
	]

	\addplot[dashed,smooth,name path=f] coordinates {
(0.,1.) (0.03,1.03) (0.06,1.05999) (0.09,1.08997) (0.12,1.11993) (0.15,1.14986) (0.18,1.17976) (0.21,1.20961) (0.24,1.23942) (0.27,1.26918) (0.3,1.29888) (0.33,1.3285) (0.36,1.35806) (0.39,1.38753) (0.42,1.41692) (0.45,1.44621) (0.48,1.47541) (0.51,1.50449) (0.54,1.53346) (0.57,1.56231) (0.6,1.59104) (0.63,1.61963) (0.66,1.64809) (0.69,1.67639) (0.72,1.70455) (0.75,1.73255) (0.78,1.76038) (0.81,1.78804) (0.84,1.81552) (0.87,1.84282) (0.9,1.86993) (0.93,1.89685) (0.96,1.92356) (0.99,1.95006) (1.02,1.97635) (1.05,2.00243) (1.08,2.02827) (1.11,2.05389) (1.14,2.07926) (1.17,2.1044) (1.2,2.12928) (1.23,2.15392) (1.26,2.17829) (1.29,2.2024) (1.32,2.22623) (1.35,2.24979) (1.38,2.27307) (1.41,2.29607) (1.44,2.31877) (1.47,2.34117) (1.5,2.36328) (1.53,2.38507) (1.56,2.40656) (1.59,2.42773) (1.62,2.44857) (1.65,2.4691) (1.68,2.48929) (1.71,2.50914) (1.74,2.52866) (1.77,2.54783) (1.8,2.56665) (1.83,2.58513) (1.86,2.60324) (1.89,2.62099) (1.92,2.63838) (1.95,2.6554) (1.98,2.67205) (2.01,2.68832) (2.04,2.70422) (2.07,2.71972) (2.1,2.73485) (2.13,2.74958) (2.16,2.76392) (2.19,2.77786) (2.22,2.7914) (2.25,2.80454) (2.28,2.81727) (2.31,2.82959) (2.34,2.8415) (2.37,2.853) (2.4,2.86408) (2.43,2.87474) (2.46,2.88498) (2.49,2.89479) (2.52,2.90418) (2.55,2.91314) (2.58,2.92167) (2.61,2.92977) (2.64,2.93743) (2.67,2.94466) (2.7,2.95145) (2.73,2.9578) (2.76,2.96371) (2.79,2.96918) (2.82,2.9742) (2.85,2.97878) (2.88,2.98292) (2.91,2.98661) (2.94,2.98985) (2.97,2.99264) (3.,2.99499) (3.03,2.99689) (3.06,2.99834) (3.09,2.99933) (3.12,2.99988) (3.15,2.99998) (3.18,2.99963) (3.21,2.99883) (3.24,2.99758) (3.27,2.99588) (3.3,2.99373) (3.33,2.99113) (3.36,2.98809) (3.39,2.98459) (3.42,2.98065) (3.45,2.97627) (3.48,2.97144) (3.51,2.96616) (3.54,2.96045) (3.57,2.95429) (3.6,2.9477) (3.63,2.94066) (3.66,2.93319) (3.69,2.92528) (3.72,2.91694) (3.75,2.90817) (3.78,2.89897) (3.81,2.88934) (3.84,2.87929) (3.87,2.86882) (3.9,2.85792) (3.93,2.84661) (3.96,2.83488) (3.99,2.82273) (4.02,2.81018) (4.05,2.79722) (4.08,2.78386) (4.11,2.77009) (4.14,2.75593) (4.17,2.74137) (4.2,2.72642) (4.23,2.71108) (4.26,2.69536) (4.29,2.67925) (4.32,2.66277) (4.35,2.64591) (4.38,2.62868) (4.41,2.61109) (4.44,2.59313) (4.47,2.57482) (4.5,2.55615) (4.53,2.53713) (4.56,2.51776) (4.59,2.49805) (4.62,2.47801) (4.65,2.45763) (4.68,2.43693) (4.71,2.4159) (4.74,2.39456) (4.77,2.3729) (4.8,2.35093) (4.83,2.32865) (4.86,2.30608) (4.89,2.28322) (4.92,2.26006) (4.95,2.23662) (4.98,2.21291) (5.01,2.18892) (5.04,2.16466) (5.07,2.14014) (5.1,2.11537) (5.13,2.09034) (5.16,2.06507) (5.19,2.03956) (5.22,2.01381) (5.25,1.98784) (5.28,1.96165) (5.31,1.93523) (5.34,1.90861) (5.37,1.88178) (5.4,1.85476) (5.43,1.82754) (5.46,1.80014) (5.49,1.77256) (5.52,1.7448) (5.55,1.71687) (5.58,1.68879) (5.61,1.66055) (5.64,1.63216) (5.67,1.60362) (5.7,1.57496) (5.73,1.54616) (5.76,1.51724) (5.79,1.4882) (5.82,1.45906) (5.85,1.42981) (5.88,1.40046) (5.91,1.37102) (5.94,1.3415) (5.97,1.31191) (6.,1.28224) (6.03,1.25251) (6.06,1.22272) (6.09,1.19289) (6.12,1.163) (6.15,1.13309) (6.18,1.10314) (6.21,1.07317) (6.24,1.04318) (6.27,1.01319)
	};
	\addlegendentry{upper bound $1+2|\sin(\theta/2)|$}

		\addplot[dotted,thick,smooth,name path=h] coordinates {
    (0.0000,    2)
    (1.57079632679,    2)
    (5.23599,2)
    (6.28319,    2)
	};
		\addlegendentry{upper bound $2$}
	
		\addplot[thick,smooth,name path=g] coordinates {
(0.,1.) (0.03,1.015) (0.06,1.03) (0.09,1.04498) (0.12,1.05996) (0.15,1.07493) (0.18,1.08988) (0.21,1.10481) (0.24,1.11971) (0.27,1.13459) (0.3,1.14944) (0.33,1.16425) (0.36,1.17903) (0.39,1.19377) (0.42,1.20846) (0.45,1.22311) (0.48,1.2377) (0.51,1.25225) (0.54,1.26673) (0.57,1.28116) (0.6,1.29552) (0.63,1.30982) (0.66,1.32404) (0.69,1.3382) (0.72,1.35227) (0.75,1.36627) (0.78,1.38019) (0.81,1.39402) (0.84,1.40776) (0.87,1.42141) (0.9,1.43497) (0.93,1.44842) (0.96,1.46178) (0.99,1.47503) (1.02,1.48818) (1.05,1.50121) (1.08,1.51414) (1.11,1.52694) (1.14,1.53963) (1.17,1.5522) (1.2,1.56464) (1.23,1.57696) (1.26,1.58914) (1.29,1.6012) (1.32,1.61312) (1.35,1.6249) (1.38,1.63654) (1.41,1.64803) (1.44,1.65938) (1.47,1.67059) (1.5,1.68164) (1.53,1.69254) (1.56,1.70328) (1.59,1.71386) (1.62,1.72429) (1.65,1.73455) (1.68,1.74464) (1.71,1.75457) (1.74,1.76433) (1.77,1.77391) (1.8,1.78333) (1.83,1.79256) (1.86,1.80162) (1.89,1.8105) (1.92,1.81919) (1.95,1.8277) (1.98,1.83603) (2.01,1.84416) (2.04,1.85211) (2.07,1.85986) (2.1,1.86742) (2.13,1.87479) (2.16,1.88196) (2.19,1.88893) (2.22,1.8957) (2.25,1.90227) (2.28,1.90863) (2.31,1.91479) (2.34,1.92075) (2.37,1.9265) (2.4,1.93204) (2.43,1.93737) (2.46,1.94249) (2.49,1.9474) (2.52,1.95209) (2.55,1.95657) (2.58,1.96084) (2.61,1.96488) (2.64,1.96872) (2.67,1.97233) (2.7,1.97572) (2.73,1.9789) (2.76,1.98185) (2.79,1.98459) (2.82,1.9871) (2.85,1.98939) (2.88,1.99146) (2.91,1.9933) (2.94,1.99492) (2.97,1.99632) (3.,1.99749) (3.03,1.99844) (3.06,1.99917) (3.09,1.99967) (3.12,1.99994) (3.15,1.99999) (3.18,1.99982) (3.21,1.99942) (3.24,1.99879) (3.27,1.99794) (3.3,1.99687) (3.33,1.99557) (3.36,1.99404) (3.39,1.9923) (3.42,1.99033) (3.45,1.98813) (3.48,1.98572) (3.51,1.98308) (3.54,1.98022) (3.57,1.97715) (3.6,1.97385) (3.63,1.97033) (3.66,1.96659) (3.69,1.96264) (3.72,1.95847) (3.75,1.95409) (3.78,1.94949) (3.81,1.94467) (3.84,1.93965) (3.87,1.93441) (3.9,1.92896) (3.93,1.9233) (3.96,1.91744) (3.99,1.91137) (4.02,1.90509) (4.05,1.89861) (4.08,1.89193) (4.11,1.88505) (4.14,1.87796) (4.17,1.87068) (4.2,1.86321) (4.23,1.85554) (4.26,1.84768) (4.29,1.83963) (4.32,1.83138) (4.35,1.82295) (4.38,1.81434) (4.41,1.80554) (4.44,1.79657) (4.47,1.78741) (4.5,1.77807) (4.53,1.76856) (4.56,1.75888) (4.59,1.74903) (4.62,1.73901) (4.65,1.72882) (4.68,1.71846) (4.71,1.70795) (4.74,1.69728) (4.77,1.68645) (4.8,1.67546) (4.83,1.66433) (4.86,1.65304) (4.89,1.64161) (4.92,1.63003) (4.95,1.61831) (4.98,1.60645) (5.01,1.59446) (5.04,1.58233) (5.07,1.57007) (5.1,1.55768) (5.13,1.54517) (5.16,1.53253) (5.19,1.51978) (5.22,1.50691) (5.25,1.49392) (5.28,1.48082) (5.31,1.46762) (5.34,1.45431) (5.37,1.44089) (5.4,1.42738) (5.43,1.41377) (5.46,1.40007) (5.49,1.38628) (5.52,1.3724) (5.55,1.35844) (5.58,1.34439) (5.61,1.33027) (5.64,1.31608) (5.67,1.30181) (5.7,1.28748) (5.73,1.27308) (5.76,1.25862) (5.79,1.2441) (5.82,1.22953) (5.85,1.2149) (5.88,1.20023) (5.91,1.18551) (5.94,1.17075) (5.97,1.15595) (6.,1.14112) (6.03,1.12625) (6.06,1.11136) (6.09,1.09644) (6.12,1.0815) (6.15,1.06654) (6.18,1.05157) (6.21,1.03658) (6.24,1.02159) (6.27,1.00659)		
	};
		\addlegendentry{lower bound $1+|\sin(\theta/2)|$}

		\addplot[thick,only marks] coordinates {
    (3.14,    2)
	};
		\addlegendentry{$\lim_{k\to \infty }\gamma^{(k)}_{\LOCC}(\CNOT)=2$}
		
\addplot[gray!70, opacity=0.4] fill between[of=f and g, soft clip={domain=0:1.0472}];
\addplot[gray!70, opacity=0.4] fill between[of=h and g, soft clip={domain=1.0472:3.14159265358979323}];
\addplot[gray!70, opacity=0.4] fill between[of=g and h, soft clip={domain=3.14159265358979323:5.23599}];
\addplot[gray!70, opacity=0.4] fill between[of=f and g, soft clip={domain=5.23599:6.28319}];
	\end{axis} 
\end{tikzpicture}
\caption{Known upper and lower bounds for the optimal effective $\gamma$-factor of a $\mathrm{CR}_{X}(\theta)$-gate under $\LOCC$ in the limit where the number of gates goes to infinity. The precise value is only known for $\theta\in\{0,2\pi\}$ as well as $\theta=\pi$ where the gate becomes a $\CNOT$. In the gray area the currently best known upper and lower bounds do not match.}
\label{fig_openQuestion}
\end{figure}

\subsection{One-way classical communication}
It turns out that our technique can also be applied in the $\LOCCOW$ setting, but only with a smaller reduction of sampling overhead.
The gate teleportation in~\cref{fig_CNOT} can be simulated in $\LOCCOW$ by post-selecting the measurement outcome on circuit $B$ to have the outcome $0$, allowing us to forgo the classical communication from $B$ to $A$.
The postselection step increases the overhead by a factor of two, since a measurement outcome $0$ only happens with probability $1/2$.
As a result the total sampling overhead for simulating a circuit with $n$ nonlocal $\CNOT$ gates under \smash{$\LOCCOW$} scales as $O(8^n)$, which is still an improvement over the $O(9^n)$ under $\LO$.

\paragraph{Acknowledgements}
We thank Sergey Bravyi, Julien Gacon, Jay Gambetta, and Stefan Woerner for helpful discussions.
This work was supported by the Swiss National Science Foundation, through the National Center of Competence in Research ``Quantum Science and Technology'' (QSIT) and through grant number 20QT21\_187724.


\appendix

\section{Non-positive superoperators in quasiprobability simulation}\label{app_nonpositive_ops}
In this section we briefly outline why we allow for certain non-trace preserving and non-positive operations in our quasiprobability decompositions, i.e. why we define $\LO(A,B)$ as 
\begin{align}\label{eq_lodef_naive}
  \{A\otimes B | \mathcal{A}\in\mathrm{D}(A), \mathcal{B}\in\mathrm{D}(B)\}
\end{align}
instead of 
\begin{align}\label{eq:lodef_refined}
  \{A\otimes B | \mathcal{A}\in\TPCP(A), \mathcal{B}\in\TPCP(B)\} \,.
\end{align}

It was realized in~\cite{endo18} that it can be useful (or often even necessary) to include trace-nonincreasing maps in quasiprobability decompositions.
This can be done because any trace-nonincreasing map can be effectively simulated by some measurement process and post-selection of the corresponding measurement outcome.
More precisely, any map $\mathcal{E}\in\TNCP(A)$ can be extended to a trace-preserving map: $\exists \mathcal{F}\in\TNCP(A)$ s.t. $\mathcal{E}+\mathcal{F}\in\TPCP(A)$.
To simulate $\mathcal{E}$, one can perform the trace-preserving completely positive map
\begin{align}
    \rho_A \mapsto \mathcal{E}(\rho)_A\otimes\proj{0}_E + \mathcal{F}(\rho)_A\otimes\proj{1}_E \, ,
\end{align}
where $E$ is a qubit system.
One then measures $E$ in the computational basis and postselects for the outcome $0$.
In practice, this is done by multiplying the final outcome of the circuit by $0$ in case the measurement outcome $1$ is obtained.

This trick was later generalized by Mitarai and Fujii~\cite{Mitarai_2021,MF_21} who realized that one can even consider a non-completely positive map $\mathcal{E}$, as long as it can be written as a difference of completely-positive trace-nonincreasing maps that add up to another completely-positive trace-nonincreasing map, i.e. $\mathcal{E}=\mathcal{E}^+-\mathcal{E}^-$, $\mathcal{E}^{\pm}\in\TNCP$ and $\mathcal{E}^++\mathcal{E}^-\in\TNCP$.
The idea here is very similar: any such map can be simulated using the trace-nonincreasing completely positive map
\begin{align}
    \rho_A \mapsto \mathcal{E}^+(\rho)_A\otimes\proj{0}_E + \mathcal{E}^-(\rho)_A\otimes\proj{1}_E \, ,
\end{align}
measuring the qubit $E$ in the computational basis and and correspondingly weighting the final measurement outcome of the circuit by $+1$ or $-1$ depending on the outcome.

\section{The $\gamma$-factor is well-defined}\label{app_min_achieved}
In this appendix we briefly argue why the minimum in~\cref{eq_def_gamma} is achieved and therefore why the $\gamma$-factor is well-defined.
Assume that the set $S$ is compact.
This implies that the convex hull
\begin{align*}
    \mathrm{conv}(S) \coloneqq \Big\{ \sum_{i} p_i\mathcal{F}_i : \mathcal{F}_i\in S, p_i\geq0, \sum_i p_i=1 \Big\}
\end{align*}
is also compact.
Consider the quantity
\begin{align*}
    \tilde{\gamma}(\mathcal{E})\coloneqq \min \Big\{ a_+ + a_- : \mathcal{E}=a_+\mathcal{F}_+ - a_-\mathcal{F}_- \text{ where } \mathcal{F}_{\pm}\in \mathrm{conv}(S) \textnormal{ and } a_{\pm}\geq 0 \Big\} \, .
\end{align*}
The expression $\tilde{\gamma}(\mathcal{E})$ is well-defined, since it optimizes a continuous function over a compact set.\footnote{We do not have to consider $a_{\pm}$ to be arbitrarily large, as we know that for any $\mathcal{E}\in\mathscr{S}(A\otimes B)$ there exists a QPD with finite sampling overhead~\cite{endo18}.}
Any QPD of the form $\mathcal{E}=a_+\mathcal{F}_+ - a_-\mathcal{F}_-$, $\mathcal{F}_{\pm}\in \mathrm{conv}(S)$ can be brought to the form $\mathcal{E} = \sum_i a_i \mathcal{F}_i$, $\mathcal{F}_i\in S$ while retaining the identical sampling overhead by definition of the convex hull.
Similarily, any QPD of the second form can be brought to the first form by defining $a_+\coloneqq\sum_{i : a_i\geq 0}a_i$, $a_-\coloneqq\sum_{i : a_i< 0}a_i$, $\mathcal{F}_+\coloneqq\frac{1}{a_+}\sum_{i : a_i\geq 0}\mathcal{F}_i$ and $\mathcal{F}_-\coloneqq\frac{1}{a_-}\sum_{i : a_i< 0}\mathcal{F}_i$ while retaining the same sampling overhead.
Therefore, the well-definedness of $\tilde{\gamma}(\mathcal{E})$ implies that the minimum in~\cref{eq_def_gamma} is achieved.

Note that the conventional definition of the set of $\LOCC$ protocols is actually not a closed set~\cite{CLMOW14}.
Instead, for our purposes, we choose $\LOCC(A,B)$ to be the set of $\LOCC$ protocols with bounded number of rounds, where the bound can be chosen to be arbitrary large.
It has been shown that finite round protocols do form a compact set~\cite{CLMOW14}.
This minor technicality is of little significance for our work, since it turns out that the optimizer of~\cref{eq_def_gamma} is often a $\LOCC$ protocol with few rounds.
\section{Proofs}
To improve readability of the manuscript we shifted some proofs to this appendix.
\subsection{Proof of~\cref{lem_CC_not_helpful}} \label{app_proof_lem_cc_not_helpful}
Denote the right-hand side of~\cref{eq_gamma_state_prep} by $\gamma_{\SEP}(\rho_{AB})$.
We only have to show that $\gamma_{\SEP}(\rho_{AB})\leq \gamma_{\LOCC}(\rho_{AB})$ and $\gamma_{\LO}(\rho_{AB})\leq \gamma_{\SEP}(\rho_{AB})$.
The desired statement directly follows directly from~\cref{eq_trivial}.

Consider $\mathcal{E}\in\mathscr{S}(A\otimes B)$ to be the map that minimizes~\cref{eq_def_state_qpd} and let $\mathcal{E}=\sum_i a_i\mathcal{F}_i$ be the quasiprobability decomposition that achieves the lowest sampling overhead in the definition of $\gamma_{\LOCC}(\mathcal{E})$.
The map $\mathcal{F}_i$ can be written as the difference $\mathcal{F}_i=\mathcal{F}_{i,+}-\mathcal{F}_{i,-}$ of two completely-positive trace-nonincreasing $\LOCC$ maps $\mathcal{F}_{i,\pm}$.
It is a well-known fact that the set of states achievable under such maps, without any given entanglement, is precisely the set of separable states and hence $\sigma_{i,\pm} \coloneqq \mathcal{F}_{i,\pm}(\proj{0}_{AB})$ are separable states with $\tr{\sigma_{i,+}} + \tr{\sigma_{i,-}}\leq 1$.

This allows us to write
\begin{align*}
  \rho = \sum\limits_i a_i\sigma_{i,+} - a_i\sigma_{i,-} 
  =  a_+\sigma_+ - a_-\sigma_- \, ,
\end{align*}
where
\begin{align*}
  a_+ := \sum_{i:a_i\geq 0} |a_i| \tr{\sigma_{i,+}}  +  \sum_{i:a_i<0} |a_i| \tr{\sigma_{i,-}}  \, \\
  a_- := \sum_{i:a_i\geq 0} |a_i| \tr{\sigma_{i,-}}  +  \sum_{i:a_i<0} |a_i| \tr{\sigma_{i,+}} 
\end{align*}
and 
\begin{align*}
  \sigma_+ := \frac{1}{a_+}\left(  \sum_{i:a_i\geq 0} |a_i|\sigma_{i,+}  +  \sum_{i:a_i<0} |a_i|\sigma_{i,-} \right) \in\SEP(A,B) \, , \\
  \sigma_- := \frac{1}{a_-}\left(  \sum_{i:a_i\geq 0} |a_i|\sigma_{i,-}  +  \sum_{i:a_i<0} |a_i|\sigma_{i,+} \right) \in\SEP(A,B) \, .
\end{align*}
This implies
\begin{align*}
  \gamma_{\SEP}(\rho_{AB}) \leq a_+ + a_- = \sum\limits_i |a_i| \left( \sigma_{i,+} + \sigma_{i,-}\right) \leq \sum\limits_i |a_i| = \gamma_{\LOCC}(\rho_{AB}) \, .
\end{align*}

For the second inequality, we consider the $a_{\pm}$ and $\rho_{\pm}$ which achieve the minimum in~\cref{eq_gamma_state_prep}.
Since the $\rho_{\pm}$ are separable, we can write them in the form $\rho_{+} = \sum_i p_{+,i} \sigma_{+,i}\otimes\tau_{+,i}$, $\rho_{-} = \sum_i p_{-,i} \sigma_{-,i}\otimes\tau_{-,i}$ for some probability distributions $p_{\pm,i}$ and some set of states $\sigma_{\pm,i},\tau_{\pm,i}$ on $A$ and $B$ respectively.
Take $\mathcal{F}_{+,i},\mathcal{F}_{-,j}\in \LO(A,B)$ to be the operations that prepare the state $\sigma_{+,i}\otimes\tau_{+,i}$ respectively $\sigma_{-,j}\otimes\tau_{-,j}$.
One has
\begin{align*}
    \gamma_{\LO}(\rho_{AB}) 
    &\leq \gamma_{\LO}(\sum_i a_{+}p_{+,i}\mathcal{F}_{+,i} - \sum_j a_{-}p_{-,j}\mathcal{F}_{-,j}) \\
    &\leq \sum_i a_+p_{+,i} + \sum_j a_-p_{-,j} \\
    &= a_++a_- \\
    &= \gamma_{\SEP}(\rho_{AB}) \, ,
\end{align*}
which proves the assertion. \qed
\subsection{Proof of~\cref{thm_main}} \label{app_proof_thm_main}
Since the $\gamma$-factor is invariant under local unitaries (see~\cref{lem_invariance_local_unitary}), we can without loss of generality assume that $U=\exp(\ci \theta_X X \otimes X +  \ci \theta_Y Y \otimes Y + \ci \theta_Z Z\otimes Z)$.
We show under the given assumptions that the lower bound from~\cref{lem_lower_bound} and the upper bound from~\cref{lem_upper_bound} coincide which then proves the assertion by~\cref{eq_trivial}.
Suppose first that Assumption~\ref{it_first} is fulfilled, i.e., $\theta_Z=0$. 
We then have
    \begin{align*}
\ket{\Phi_{U}}
\coloneqq&\sum_{i,j=0,1}\ket{ij}_{A'B'}\otimes U\ket{ij}_{AB}  \\  
=& \ee^{\ci\theta_X X_A X_B} \ee^{\ci\theta_Y Y_A Y_B} \frac{1}{2}\sum_{i,j}\ket{i}_{A'} \ket{i}_A \ket{j}_B \ket{j}_{B'} \\
=& \frac{1}{2}(\cos \theta_X \id_A \id_B\!+\!\ci\sin \theta_X X_A X_B)(\cos \theta_Y \id_A \id_B\!+\!\ci \sin\theta_Y Y_A Y_B)(\ket{0000} \!+\! \ket{0011} \!+\! \ket{1100} \!+\! \ket{1111}) \\
= & \frac{1}{2}\!\Big(\! (\!\cos \theta_X \cos \theta_Y\! +\! \sin \theta_X \sin \theta_Y\!)(\ket{0000} \!+\! \ket{1111}) \!+\! (\!\cos \theta_X \cos \theta_Y \!-\!\sin \theta_X \sin \theta_Y\!)(\ket{0011} \!+\! \ket{1100}) \\
& \hspace{-10mm}+\! (\ci \cos \theta_X \sin \theta_Y \!+\! \ci \sin \theta_X \cos \theta_Y\!)(\ket{0101} \!+\! \ket{1010}) \!+\! (\ci \sin \theta_X \cos \theta_Y \!-\! \ci \cos \theta_X \sin \theta_Y\!)(\ket{0110} \!+\! \ket{1001})\! \Big) \, ,
\end{align*}
where the the second step uses that $X_A X_B$ and $Y_A Y_B$ commute. The penultimate step follows from Sylvester's formula. 
To find the Schmidt coefficients we need to compute the singular values of following matrix
    \begin{align*}
      D= \frac{1}{2}
        \begin{pmatrix}
            \alpha_1 & 0 & 0 & \alpha_2 \\
            0 & \alpha_3 & \alpha_4 & 0 \\
            0 & \alpha_4 & \alpha_3 & 0 \\
            \alpha_2 & 0 & 0 & \alpha_1
        \end{pmatrix} \, .
    \end{align*}
for $\alpha_1:=\cos \theta_X \cos \theta_Y + \sin \theta_X \sin \theta_Y$, $\alpha_2:=\cos \theta_X \cos \theta_Y -\sin \theta_X \sin \theta_Y$, $\alpha_3:=\ci \cos \theta_X \sin \theta_Y + \ci \sin \theta_X \cos \theta_Y$, and $\alpha_4:=\ci \sin \theta_X \cos \theta_Y - \ci \cos \theta_X \sin \theta_Y$.   
The singular values and thus Schmidt coefficients of $D$ are $\{|\cos\theta_X \cos\theta_Z|$, $|\cos\theta_X\sin\theta_Z|$, $|\sin\theta_X\cos\theta_Y|$, $|\sin\theta_X\sin\theta_Y|\}$.
\cref{lem_lower_bound} thus gives
\begin{align}
  \gamma_{\LO}(U) 
  &\geq 2\big(|\cos\theta_X|+|\sin\theta_X|\big)^2 \big(|\cos\theta_Y|+|\sin\theta_Y|\big)^2-1 \nonumber \\
  & = 1 + 4|\sin\theta_X \cos \theta_X | + 4|\sin\theta_Y \cos \theta_Y | + 8|\sin\theta_X \cos\theta_X \sin\theta_Y \cos\theta_Y| \, . \label{eq_lowerbound}
\end{align}
    
We next show that the upper bound from~\cref{lem_upper_bound} coincides with~\cref{eq_lowerbound}.
To do so we first determine the coefficients $u_0,u_1,u_2,u_3$ according to~\cref{eq_standard_dec_u}. Therefore we write
    \begin{align*}
        &\exp(\ci\theta_X X_A X_B + \ci \theta_Y Y_A Y_B) \\
         &\hspace{5mm}= \exp(\ci\theta_X X_A X_B) \exp(\ci\theta_Y Y_A Y_B) \\
         &\hspace{5mm}= (\cos\theta_X \id_A \id_B+\ci\sin \theta_X X_A X_B)(\cos\theta_Y \id_A \id_B + \ci \sin\theta_Y Y_A Y_B) \\
         &\hspace{5mm}= \cos\theta_X \cos\theta_Y \id_A \id_B + \ci\sin\theta_X \cos\theta_Y X_A X_B +  \ci \cos\theta_X \sin\theta_Y Y_A Y_B - \sin\theta_X \sin\theta_Y Z_A Z_B \, .
    \end{align*}
So we have $u_0=\cos\theta_X \cos\theta_Y$, $u_1=\ci \sin\theta_X \cos\theta_Y$, $u_2= \ci \cos\theta_X \sin\theta_Y$ and $u_3=- \sin\theta_X \sin\theta_Y$.
Notice that all $u_i$ are only real or imaginary.
This implies that $|u_iu_j^* + u_ju_i^*| + |u_iu_j^* - u_ju_i^*| = 2|u_i||u_j|$.
    Therefore,~\cref{lem_upper_bound} gives
\begin{align}
\gamma_{\LO}(U)
&\leq 1  + 4|u_0||u_1| + 4|u_0||u_2| + 4|u_0||u_3| + 4|u_1||u_2| + 4|u_1||u_3| + 4|u_2||u_3| \nonumber \\
&=  1  + 4| \sin\theta_X \cos\theta_X | + 4 |\sin\theta_Y  \cos\theta_Y | + 8|\sin\theta_X \cos\theta_X \sin\theta_Y \cos\theta_Y | \, . \label{eq_upperBound}
    \end{align}
Combining~\cref{eq_lowerbound} with~\cref{eq_upperBound} proves the assertion.

Suppose next that Assumption~\ref{it_second} is fulfilled. In this case the gate is locally equivalent to a $\SWAP$ gate.
For the lower bound, this means that the Choi state is essentially $2$ Bell pairs and thus~\cref{corr_gamma_ebit} and~\cref{lem_lower_bound} imply
\begin{align} \label{eq_lowerbound2}
    \gamma_{\LOCC}(U) \geq 7 \, .  
\end{align}
For the upper bound, we first determine $u_0,u_1,u_2,u_3$ by writing
    \begin{align*}
          \exp \Big(\ci\frac{\pi}{4} X_A X_B + \ci\frac{\pi}{4} Y_A Y_B + \ci\frac{\pi}{4} Z_A Z_B\Big)
        = & \exp \Big( \ci\frac{\pi}{4} X_A X_B \Big) \exp\Big(\ci\frac{\pi}{4} Y_A Y_B \Big) \exp\Big(\ci\frac{\pi}{4} Z_A Z_B\Big) \\
        = & \frac{1}{2\sqrt{2}}(\id_A \id_B +\ci X_A X_B)(\id_A \id_B +\ci Y_A Y_B)(\id_A \id_B+\ci Z_A Z_B) \\
        = & \frac{1+\ci}{2\sqrt{2}}(\id_A \id_B+X_A X_B+Y_A Y_B+Z_A Z_B) \, ,
    \end{align*}
which implies $u_0=u_1=u_2=u_3=\frac{1+\ci}{2\sqrt{2}}$.
This means that $|u_iu_j^*+u_ju_i^*|+|u_iu_j^*-u_ju_i^*|=1/2$ and hence according to~\cref{lem_upper_bound}
\begin{align*}
    \gamma_{\LO}(U)  \leq  7 \, ,
\end{align*}
which together with~\cref{eq_lowerbound2} proves the assertion.
Note that
\begin{align*}
    1 + 4|\sin\theta_X \cos \theta_X | + 4|\sin\theta_Y \cos \theta_Y | + 8|\sin\theta_X \cos\theta_X \sin\theta_Y \cos\theta_Y| = 7
\end{align*}
for $\theta_X=\theta_Y=\theta_Z=\pi /4$.
\qed 
\subsection{Proof of~\cref{cor_bounds_match}} \label{app_cor_gates}
We start with the $\mathrm{R}_{\sigma\sigma}(\theta)$ gate.
By using an appropriate local basis transformation and invoking~\cref{lem_invariance_local_unitary}, we can restrict our considerations to the $\mathrm{R}_{XX}(\theta)$ gate, which is iself equivalent to the $\mathrm{R}_{XX}(-\theta)$ gate up to local unitaries.
$\mathrm{R}_{XX}(-\theta)$ is already in the form of the KAK decomposition with $(\theta_X,\theta_Y,\theta_Z)=(\theta/2,0,0)$.
By~\cref{thm_main}, this directly implies
\begin{align*}
    \gamma_{\LOCC}\big(\mathrm{R}_{XX}(-\theta)\big) = \gamma_{\LOCCOW}\big(\mathrm{R}_{XX}(-\theta)\big) = \gamma_{\LO}\big(\mathrm{R}_{XX}(-\theta)\big) = 1+4 \Big|\sin\frac{\theta}{2}\cos\frac{\theta}{2}\Big| = 1 + 2|\sin\theta| \, .
\end{align*}

Next we consider the controlled-rotation gate $\mathrm{CR}_{\sigma}(\theta)$.
Again, by using an appropriate basis transformation on the target qubit, we can restrict our considerations to the $\mathrm{CR}_{X}(\theta)$ gate.
It is equivalent to the $\mathrm{R}_{XX}(-\theta/2)$ gate up to local unitaries:
\begin{align}\label{eq_equivalence_crx_rxx}
    \mathrm{CR}_{X}(\theta) = \big(H\otimes \mathrm{R}_X(\theta /2) \big) \mathrm{R}_{XX}(-\theta /2) (H\otimes \id) \, ,
\end{align}
where $H$ is the Hadamard gate and $\id$ the identity gate.
Hence, we have
\begin{align*}
    \gamma_{\LOCC}\big(\mathrm{CR}_{X}(\theta)\big) = \gamma_{\LOCCOW}\big(\mathrm{CR}_{X}(\theta)\big) = \gamma_{\LO}\big(\mathrm{CR}_{X}(\theta)\big)= 1 + 2|\sin(\theta /2)| \, .
\end{align*}

As mentioned in the main text, every two-qubit Clifford gate is equivalent (up to local unitaries) to either the identity gate $\identity$, the $\CNOT$ gate, the $\iSWAP$ gate or the $\SWAP$ gate.
By~\cref{lem_invariance_local_unitary}, it therefore suffices to only consider these four gates.
The identity gate trivially fulfills $\theta_X=\theta_Y=\theta_Z=0$.
The $\CNOT$ gate has a KAK decomposition with angles $(\theta_X=\pi/4,\theta_Y=0,\theta_Z=0)$, the $\iSWAP$ gate has $(\theta_X=\pi/4,\theta_Y=\pi/4,\theta_Z=0)$, and the $\SWAP$ gate has $(\theta_X=\pi/4,\theta_Y=\pi/4,\theta_Z=\pi/4)$~\cite{HK21}. As a result,~\cref{thm_main} implies
\begin{alignat*}{4}
        &\gamma_{\LOCC}(\CNOT)  &&= \gamma_{\LOCCOW}(\CNOT)  &&= \gamma_{\LO}(\CNOT)  &&= 3 \\
      &\gamma_{\LOCC}(\iSWAP)  &&= \gamma_{\LOCCOW}(\iSWAP)  &&= \gamma_{\LO}(\iSWAP)  &&= 7 \\
      &\gamma_{\LOCC}(\SWAP)  &&= \gamma_{\LOCCOW}(\SWAP)  &&= \gamma_{\LO}(\SWAP)  &&= 7 \, ,  
\end{alignat*}
which completes the proof. \qed

\bibliographystyle{arxiv_no_month}
\bibliography{bibliofile}

\end{document}